\begin{document}

\newtheorem{thm}{Theorem}
\newtheorem{lem}{Lemma}
\newtheorem{prop}{Proposition}
\newtheorem{cor}{Corollary}
\theoremstyle{definition}
\newtheorem{defn}{Definition}
\newtheorem{remark}{Remark}
\newtheorem{step}{Step}

\newcommand{\Cov}{\mathop {\rm Cov}}
\newcommand{\Var}{\mathop {\rm Var}}
\newcommand{\E}{\mathop {\rm E}}
\newcommand{\const }{\mathop {\rm const }}
\everymath {\displaystyle}

\newcommand{\ruby}[2]{
\leavevmode
\setbox0=\hbox{#1}
\setbox1=\hbox{\tiny #2}
\ifdim\wd0>\wd1 \dimen0=\wd0 \else \dimen0=\wd1 \fi
\hbox{
\kanjiskip=0pt plus 2fil
\xkanjiskip=0pt plus 2fil
\vbox{
\hbox to \dimen0{
\small \hfil#2\hfil}
\nointerlineskip
\hbox to \dimen0{\mathstrut\hfil#1\hfil}}}}

\def\qedsymbol{$\blacksquare$}
\renewcommand{\thefootnote }{\fnsymbol{footnote}}
\renewcommand{\refname }{References}

\everymath {\displaystyle}

\title{
An Optimal Execution Problem with Market Impact
}
\author{Takashi Kato
\footnote{Division of Mathematical Science for Social Systems, 
              Graduate School of Engineering Science, 
              Osaka University, 
              1-3, Machikaneyama-cho, Toyonaka, Osaka 560-8531, Japan, 
E-mail: \texttt{kato@sigmath.es.osaka-u.ac.jp}}
}

\date{}
\maketitle 

\begin{abstract}
We study an optimal execution problem in a continuous-time market model that considers market impact. 
We formulate the problem as a stochastic control problem and 
investigate properties of the corresponding value function. 
We find that right-continuity at the time origin is associated with the strength of market impact for large sales, otherwise the value function is continuous. 
Moreover, we show the semi-group property (Bellman principle) 
and characterise the value function as a viscosity solution of 
the corresponding Hamilton--Jacobi--Bellman equation. 
We introduce some examples where 
the forms of the optimal strategies change completely, depending on the amount of 
the trader's security holdings 
and where optimal strategies in the Black--Scholes type market with nonlinear market impact are 
not block liquidation but gradual liquidation, even when the trader is risk-neutral. 

\footnote[0]{Mathematical Subject Classification (2010) \  91G80, 93E20, 49L20}\\
\footnote[0]{JEL Classification (2010) \ G33, G11}
{\bf Keywords}: 
Optimal execution, Market impact, Liquidity problems, 
Hamilton--Jacobi--Bellman equation (HJB), Viscosity solutions. 
\end{abstract}

\section{Introduction}\label{intro} 

An optimal portfolio management problem was developed in \cite {Merton1}, 
\cite {Merton2} and in other papers. 
These classical financial theories assumed that assets in the market are perfectly liquid, but real markets pose various liquidity risks. 
For instance, the problem of transaction costs and the uncertainty of trading. 

Another important problem of liquidity is market impact (MI), that is, the effect of 
trader investment behaviour on security prices. 
Such problems are often discussed in the framework of optimal execution problems, 
where a trader holds a certain amount of a security 
and tries to execute trades within a time horizon. 
The optimal execution problem considering MI was first studied in \cite {Bertsimas-Lo} 
as a minimisation problem of an expected execution cost in a discrete-time model, and 
that model was generalised to a mean-variance model 
in \cite {Almgren-Chriss} and \cite {Huberman-Stanzl2}. 
A continuous-time model of the execution problem was studied in 
\cite{He-Mamaysky}, \cite {Subramanian-Jarrow} and 
\cite{Vath_et_al} as a singular/impulse stochastic control problem. 
Forsyth \cite {Forsyth} has also studied the continuous-time model in the framework of mean-variance analyses and 
gave a viscosity characterisation of the corresponding value functions. 
An infinite time horizon case is treated in \cite {Schied-Schoeneborn}. 
The optimal execution problem in the limit-order-book (LOB) model is studied 
in \cite {Alfonsi-Fruth-Schied}, \cite {Alfonsi-Schied}, 
\cite {Alfonsi-Schied-Slynko}, \cite {Gatheral}, 
\cite{Gatheral-et-al}, \cite {Predoiu-et-al}, etc. 

Recently, various studies have examined the optimisation problem with MI, 
but a standard framework has yet to be established. 
In this paper we develop a mathematical model of optimal execution. 
Our model is formulated as a stochastic control problem in the continuous-time model, 
which is characterised as the limit of those of the discrete-time models 
(see Section \ref {sec_discrete}). 

We study our optimal execution model 
by investigating the properties of the corresponding value function. 
First we study the continuity of the value function. 
We find that our value function is continuous in each parameter except for the time origin, where 
the right-continuity at $t = 0$ is according to the ``strength'' of MI function. 
This implies that instantaneous liquidation of large volume 
makes no sense when MI for large trade is strong 
(in a meaning to be discussed later). 
Next we show the semi-group property (the Bellman principle) of the value function. 
This property is standard in the theory of stochastic control, but a strict proof is generally difficult. 
We show the semi-group property of our value function by applying an argument based on Nisio's method 
(\cite {Nisio}). 

The semi-group property suggests that the value function is characterised as a viscosity solution of 
the corresponding Hamilton--Jacobi--Bellman equation (HJB), 
which is a nonlinear second order partial differential equation (PDE). 
We prove that our value function actually becomes a viscosity solution of the HJB, 
and we also show the uniqueness of viscosity solutions of the HJB under certain mathematical assumptions. 
Note that the viscosity characterisation of value functions of stochastic control problems 
has been broadly studied elsewhere (\cite {Fleming-Soner}, \cite{Nagai} and 
the references therein). 
Uniqueness (or the comparison principle) of viscosity solutions of HJB is also well-studied 
in the theory of nonlinear PDEs (see, for instance, \cite {DaLio-Ley1}, \cite {DaLio-Ley2}). 
Nevertheless, our results are original: 
we cannot apply the existing results to our model 
because of the unboundedness of the control region and the growth conditions of the coefficients 
(see Remark 4 for details). 

We mainly consider the case where the MI function is convex with respect to execution volume. 
Although some empirical studies tell us that the MI function becomes concave according to market circumstances 
(see, for instance,  \cite {Almgren-et-al}), 
considering the effect of a convex MI is interesting and important from a theoretical viewpoint. 
We give some examples 
that imply that a convex MI function makes a trader avoid a block liquidation 
(selling all of a security at once) and induce a gradual liquidation 
(selling over a period of time). 
As another interesting result, 
we find that the forms of a risk-neutral trader in the Black--Scholes type model 
with a (log-)quadratic MI function drastically change according to the initial shares of the security held. 

This paper is organised as follows: 
In Section \ref {section_Model}, we introduce our model. 
In Section \ref {section_results}, we give our main results. 
We study properties of our value function, namely 
continuity, the semi-group property and 
the viscosity characterisation. 
Moreover, we have the uniqueness result of the viscosity solution of the corresponding HJB when MI is sufficiently strong. 
In Section \ref {sec_SO}, 
we also consider the case where the trader needs to liquidate all holdings of the security. 
We show that such a `sell-off' (liquidation) condition does not influence the form of the value function in our model. 
In Section \ref {sec_eg}, we consider some examples of our model. 
We summarise this paper in Section \ref {sec_summary}. 
In Section \ref {sec_discrete}, we introduce the derivation of our continuous-time model 
from the discrete-time models, and we give proofs of our results in Section \ref {Proofs}. 

\section{The Model}\label{section_Model}
\setcounter{thm}{0}
\setcounter{equation}{0}

We now present the details of the model. 
Let $(\Omega ,\mathcal {F}, (\mathcal {F}_t)_{0\leq t\leq T}, P)$ 
be a filtered space that satisfies the usual conditions (i.e., 
$(\mathcal {F}_t)_t$ is right-continuous and $\mathcal {F}_0$ contains 
all $P$-null sets), and let 
$(B_t)_{0\leq t\leq T}$ be a standard one-dimensional 
$(\mathcal {F}_t)_t$-Brownian motion. 
Here, $T > 0$ denotes the time horizon. 
For simplicity we assume $T = 1$. 

Suppose that the market consists of one risk-free asset (cash) and 
one risky asset (a security). 
The price of cash is always $1$, which means that 
the risk-free rate is zero. 
The price of the security fluctuates according to a certain stochastic flow, 
and is influenced by the trader's sales. 
We consider a single trader who has an endowment of $\Phi _0>0$ shares of a security. 
This trader liquidates the shares $\Phi _0$ over a time interval $[0,1]$, 
but these sales affect the price of the security. 

First, we define trading strategies. 
We say that a stochastic process $(\zeta _r)_{0\leq r\leq 1}$ is 
an admissible execution strategy if $(\zeta _r)_r\in \mathcal {A}_1(\Phi _0)$, where 
\begin{eqnarray}\nonumber 
\mathcal {A}_t(\varphi ) &=& \Big\{ (\zeta _r)_{0\leq r\leq t}\ ; \ 
(\mathcal {F}_r)_{r}\mbox {-progressively measurable}, \ \ 
\mbox{nonnegative}, \\&&\hspace{20mm}
\int ^t_0\zeta _rdr\leq \varphi \ \mbox{a.s.}, \ \ 
\sup _{r,\omega }\zeta _r(\omega )<\infty  
\Big\} . 
\end{eqnarray}
The value $\zeta _r$ represents the instantaneous sales (i.e., execution speed) at time $r$. 
To avoid technical difficulties, we do not consider short selling in this paper. 
Thus $\zeta _r$ is assumed to be nonnegative. 
The integral $\int ^t_0\zeta _rdr$ denotes the cumulative volume of liquidation until time $t$. 
The trader cannot liquidate more of the security than the initial shares held. 
The inequality $\int ^1_0\zeta _rdr\leq \Phi _0$ represents such a situation. 
The boundedness $\sup _{r, \omega }\zeta _r(\omega ) < \infty $ has no financial meaning; 
the arguments in Section \ref {sec_discrete} 
imply that we may consider the continuous-time 
optimal execution problem in only the case where trading strategies satisfy this condition 
(see also Remark \ref {rem_discrete} 
below). 

Next we introduce security price fluctuations
and define our MI function. 
Let $s_0 > 0$ be the initial price and let $x_0 = \log s_0$. 
$S_t$ is the security price at time $t$ and $X_t$ is its log-price, so 
$X_t = \log S_t$. 
If the trader does not trade, meaning there is no MI, 
the fluctuation of $(X_t)_t$ is described by the following stochastic differential equation (SDE): 
\begin{eqnarray}\label{SDE_X0}
\left\{
\begin{array}{ll}
 	dX_r = \sigma (X_r)dB_r+b(X_r)dr,&	\\
 	\hspace{2mm}X_0= x_0, &
\end{array}
\right.
\end{eqnarray}
where $b, \sigma  : \Bbb {R}\longrightarrow \Bbb {R}$ are Borel functions. 
We assume that $b$ and $\sigma $ are bounded and Lipschitz continuous, 
so there exists a unique solution of (\ref {SDE_X0}). 
When the trader liquidates the security with the liquidation schedule $(\zeta _r)_r\in \mathcal {A}_1(\Phi _0)$, 
the log-price is influenced by MI. 
In this case $(X_r)_r$ follows
\begin{eqnarray}\label{SDE_X}
\left\{
\begin{array}{ll}
 	dX_r = \sigma (X_r)dB_r+b(X_r)dr - g(\zeta _r)dr,& 	\\
 	\hspace{2mm}X_0 = x_0. &
\end{array}
\right.
\end{eqnarray}
Here, $g : [0, \infty )\longrightarrow [0, \infty )$ is our (permanent) MI function, and 
at time $t$ the log-price decreases by $g(\zeta _t)dt$ according to the 
execution speed $\zeta _t$. 
We assume that $g$ is non-decreasing and continuously differentiable; let 
\begin{eqnarray}
h(\zeta ) = g'(\zeta )
\end{eqnarray}
be its derivative. 
Then $h$ represents how MI becomes large when the trading size increases. 
In this paper we consider the case where MI function $g$ is convex 
(not necessarily strictly convex; $g$ may be assumed to be a linear function); 
thus $h$ is a non-decreasing function. 
This assumption is required to show Theorem \ref {converge} below. 
Moreover, the convexity of $g$ plays an essential role in our execution problem, 
in that this property motivates the trader to liquidate the shares of the security 
spending time to avoid a huge MI (see Sections \ref {sec_eg}--\ref {sec_summary}). 

Note that the process $(S_r)_r$ satisfies 
\begin{eqnarray}\label{SDE_S}
\left\{
\begin{array}{ll}
dS_r = \hat{\sigma }(S_r)dB_r+\hat{b}(S_r)dr-g(\zeta  _r)S_rdr, & \\
 	\hspace{2mm}S_0= s_0, &
\end{array}
\right.
\end{eqnarray}
where $\hat{\sigma }(s) = s\sigma (\log s)$ and $\hat{b}(s) = s\{ 
b(\log s)+\sigma (\log s)^2/2\} $. 
The existence and uniqueness of solutions of SDEs (\ref {SDE_X})--(\ref {SDE_S}) also hold 
for each admissible strategy $(\zeta _r)_r$. 

The trader's problem is to choose an admissible strategy to 
maximise the expected utility $\E [u(W_1, \varphi _1, S_1)]$ 
for the utility function $u\in \mathcal {C}$, 
where $W_t$ (resp. $\varphi _t$) represents the amount of cash (resp. security) 
at time $t$, and 
$\mathcal {C}$ denotes the set of non-decreasing continuous functions 
on $D = \Bbb {R}\times [0,\Phi _0]\times [0,\infty )$ such that 
\begin{eqnarray}\label{growth_C}
u(w,\varphi ,s)\leq C_u(1+w^2+s^2)^{m_u}, \ \ (w,\varphi ,s)\in D
\end{eqnarray}
for some constants $C_u>0$ and $m_u\in \Bbb {N}$ (i.e., $u$ has polynomial growth rate). 
Mathematically, this problem is characterised by the value function 
\begin{eqnarray}\nonumber 
V_t(w,\varphi ,s ; u) = \sup _{(\zeta _r)_{r}\in \mathcal {A}_t(\varphi )}
\E [u(W_t,\varphi _t, S_t)] 
\end{eqnarray}
subject to 
\begin{eqnarray}\label{init_cond}
(W_0, \varphi _0, S_0) = (w, \varphi , s) 
\end{eqnarray}
and 
\begin{eqnarray}\nonumber 
dW_r &=& \zeta _rS_rdr, \\\label{value_conti_S}
d\varphi _r &=& -\zeta _rdr, \\\nonumber 
dS_r &=& \hat{\sigma }(S_r)dB_r+\hat{b}(S_r)dr-g(\zeta  _r)S_rdr 
\end{eqnarray}
for $t\in [0,1], (w,\varphi ,s)\in D$, and $u\in \mathcal {C}$. 
We remark that $V_0(w,\varphi ,s ; u) = u(w,\varphi ,s)$. 
Also note that $V_t(w,\varphi ,s ; u)<\infty $ for any $t\in [0,1]$, 
$(w,\varphi ,s)\in D$ and $u\in \mathcal {C}$ (see Sections \ref {pre}--\ref {S-R}). 

For technical reasons, we allow the security price to take the value $0$ 
(note that $\hat{b}(0)$ and $\hat{\sigma }(0)$ are defined as $0$). 
When $s > 0$, we can obviously rewrite 
\begin{eqnarray*}
V_t(w,\varphi ,s ; u) = \sup _{(\zeta _r)_{r}\in \mathcal {A}_t(\varphi )}
\E [u(W_t,\varphi _t, S_t)] 
\end{eqnarray*}
subject to (\ref {init_cond}) and 
\begin{eqnarray}\nonumber 
dW_r &=& \zeta _r\exp (X_r)dr,\\\label{value_conti_X}
d\varphi _r &=& -\zeta _rdr, \\\nonumber 
dX_r &=& \sigma (X_r)dB_r+b(X_r)dr-g(\zeta  _r)dr , \\\nonumber 
S_r &=& \exp (X_r). 
\end{eqnarray}
For convenience, we denote a triplet 
$(W_r, \varphi _r, S_r)_{0\leq r\leq t}$ of (\ref {init_cond}), (\ref {value_conti_S}) 
by $\Xi _t(w,\varphi ,s ; \allowbreak (\zeta _r)_r)$, 
and 
$(W_r, \varphi _r, \allowbreak X_r)_{0\leq r\leq t}$ of (\ref {init_cond}), (\ref {value_conti_X}) by 
$\Xi ^X_t(w,\varphi ,s ; \allowbreak (\zeta _r)_r)$. 

Note that a trader whose execution strategy is in $\mathcal {A}_t(\varphi )$ is permitted 
to leave shares of the security unsold, 
and there will be no penalty if the trader cannot finish the liquidation within the time horizon. 
In Section \ref {sec_SO}, we consider a case when the trader must finish the liquidation.

\begin{remark}\label{rem_discrete}
The definition of $V_t(w, \varphi , s ; u)$ originates from 
a convergence theorem for value functions of optimal execution problems from discrete-time to continuous-time models. 
As pointed out in \cite {Kato_FMA}, 
in constructing a mathematical model of a financial problem, 
the discrete-time model, on the one hand, significantly describes realistic phenomena exactly, 
but sometimes it is hard to get a clean model due to complex noise. 
The continuous-time model, on the other hand, often makes problems clear, 
but the superficial construction of continuous-time models may overlook the essence of the problem. 
Therefore, 
it is meaningful to construct an adequate model 
by the following procedures: we first considered a discrete-time model of an optimal execution problem with MI, 
and then derived a continuous-time model as the limit. 
In fact, our value function $V_t(w, \varphi , s ; u)$ 
is derived in such a way from given discrete-time models. 
Please refer to Section \ref {sec_discrete} for details. 
\end{remark}

\begin{remark}\label{rem_temporary}
MI can be divided into two parts: a permanent impact and a temporary impact 
(see \cite{Almgren-Chriss} and \cite {Holthausen-et-al}). 
As time passes, the temporary impact disappears and the transitorily depressed price recovers. 
Our MI function $g(\zeta )$ corresponds to the permanent impact. 

We can define a value function of the optimal execution problem 
with both permanent and temporary MI in a continuous-time model such as  
\begin{eqnarray*}
\hat{V}_t(w,\varphi ,s ; u) &=& \sup _{(\zeta _r)_{r}\in \mathcal {A}_t(\varphi )}
\E [u(W_t,\varphi _t, S_t)] \\
&\mathrm {s.t.}&\hspace{0.1mm}
dW_r = \zeta _r\exp (X_r - \tilde{g}(\zeta _r))dr,\\
&&\hspace{1.3mm}d\varphi _r=-\zeta _rdr, \\
&&\hspace{0.8mm}dX_r = \sigma (X_r)dB_r+b(X_r)dr-g(\zeta  _r)dr , \\
&&\hspace{3.2mm}S_r = \exp (X_r), \\
&&(W_0, \varphi _0, S_0) = (w, \varphi , s), 
\end{eqnarray*}
where $\tilde{g}(\zeta )$ denotes the temporary MI function. 
We can also show continuity of this value function in $w, \varphi $ and $s$. 
However, constructing a discrete-time version of the problem is difficult 
due to technical reasons. 
Moreover, the Bellman principle (Theorem \ref {semi} in the next section) is proved by 
Nisio's method, which is based on a discrete-time approximation of the value function. 
Since Theorem \ref {semi} plays an essential role in proving Theorem \ref {conti} (especially continuity in $t$) 
and Theorems \ref {HJB_th} and \ref {unique_th}, 
we cannot sufficiently study the properties of our value function 
when there is temporary MI. 
In this paper, therefore, we treat only permanent MI functions. 
For further comments, see Section \ref {sec_summary}. 
\end{remark}

\section{Main Results}\label{section_results}
\setcounter{thm}{0}

Next we present the main results of this paper. 
First we introduce the result of the continuity of $V_t(w,\varphi ,s ; u)$. 
Here we denote $h(\infty ) = \lim _{\zeta \rightarrow \infty }h(\zeta )$ for brevity. 

\begin{thm}\label{conti} \ Let $u\in \mathcal {C}$.\\
$\mathrm {(i)}$ \ If $h(\infty )=\infty $, then
$V_t(w,\varphi ,s ; u)$ is continuous in $(t,w,\varphi ,s)\in [0,1]\times D$. \\
$\mathrm {(ii)}$ \ If $h(\infty )<\infty $, then 
$V_t(w,\varphi ,s ; u)$ is continuous in $(t,w,\varphi ,s)\in (0,1]\times D$ and 
$V_t(w,\varphi ,s ; u)$ converges to $Ju(w,\varphi ,s)$ 
uniformly on any compact subset of $D$ as $t\downarrow 0$, where 
\begin{eqnarray*}
Ju(w,\varphi ,s) = \left\{
                   \begin{array}{ll}
                    	\sup _{\psi \in [0,\varphi ]}
u\Big (w+\frac{1-e^{-h(\infty )\psi }}{h(\infty )}s,\varphi -\psi ,se^{-h(\infty )\psi} \Big )& 
 (h(\infty )>0)	\\
                    	\sup _{\psi \in [0,\varphi ]}
u(w+\psi s,\varphi -\psi ,s)& (h(\infty )=0). 
                   \end{array}
                   \right. 
\end{eqnarray*}
\end{thm}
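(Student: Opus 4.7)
The plan is to proceed in three stages: (1) continuity in the spatial variables $(w,\varphi,s)$ at fixed $t$, with modulus of continuity uniform over $t\in[0,1]$; (2) continuity in $t$ at any $t_0\in(0,1]$; and (3) the limit $t\downarrow 0$, where cases (i) and (ii) branch. Joint continuity at interior points then follows from (1) together with (2), while joint continuity at $t=0$ in case (i), and uniform-on-compacts convergence to $Ju$ in case (ii), follow from (1) together with (3).

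Step (1) is mostly routine. Continuity in $w$ is automatic, since $w$ enters (\ref{value_conti_S}) as a pure translation of $W_r$, and dominated convergence combined with (\ref{growth_C}) handles the translation of the $u$-argument. Continuity in $s$ follows from standard $L^p$-stability of (\ref{value_conti_S}) (Lipschitz $\hat\sigma,\hat b$; $0\leq S_r\leq\bar S_r$, where $\bar S_r$ solves (\ref{value_conti_S}) with $g\equiv 0$) combined with (\ref{growth_C}). Continuity in $\varphi$ uses that a strategy in $\mathcal{A}_t(\varphi)$ is approximated in $\mathcal{A}_t(\varphi')$ by truncation at $\tau=\inf\{r:\int_0^r\zeta_u\,du\geq\varphi'\}$, with payoff difference tending to $0$ as $\varphi'\to\varphi$. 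All these estimates are uniform in $t\in[0,1]$. For step (2), reparametrize a strategy on $[0,t]$ by $\tilde\zeta_u=t\zeta_{tu}$, $u\in[0,1]$, and let $\tilde B_u=B_{tu}/\sqrt{t}$; then
\[
 X_t-X_0=\sqrt{t}\int_0^1\sigma(X_{tu})\,d\tilde B_u+t\int_0^1 b(X_{tu})\,du-\int_0^1 tg(\tilde\zeta_u/t)\,du,
\]
with $W_t=\int_0^1\tilde\zeta_u S_{tu}\,du$ and $\varphi_t=\varphi-\int_0^1\tilde\zeta_u\,du$. Since $t\mapsto tg(\lambda/t)$ is continuous on $(0,\infty)$ for each $\lambda\geq 0$ and the rescaled admissible set is essentially $t$-independent, SDE stability in the coefficients gives continuity of $V_t$ in $t$ at each $t_0\in(0,1]$.

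The hard part is step (3). Fix $t_n\downarrow 0$ and $(\zeta^n_r)\in\mathcal{A}_{t_n}(\varphi)$ with $\psi_n=\int_0^{t_n}\zeta^n_r\,dr$, and set $\Psi^n(u)=\int_0^u\tilde\zeta^n_v\,dv$. BDG and boundedness of $b,\sigma$ show that the noise part of $X_{t_nu}-X_0$ has second moment $O(t_n)$ uniformly in the strategy, so the terminal behaviour is governed by $G^n(u):=\int_0^u t_n g(\tilde\zeta^n_v/t_n)\,dv$. In case (ii) the crucial deterministic bound is
\[
 (h(\infty)-\varepsilon)\bigl(\Psi^n(u)-\lambda_\varepsilon t_n\bigr)\leq G^n(u)\leq h(\infty)\,\Psi^n(u),
\]
valid for any $\varepsilon>0$ once $t_n$ is small, where $\lambda_\varepsilon$ is chosen so that $g(\lambda)/\lambda\geq h(\infty)-\varepsilon$ for $\lambda\geq\lambda_\varepsilon$: the upper bound holds because convexity of $g$ forces $g(\lambda)/\lambda\leq h(\infty)$, while the lower bound splits the integration according to whether $\tilde\zeta^n_v\geq\lambda_\varepsilon t_n$. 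Hence $\sup_u|G^n(u)-h(\infty)\Psi^n(u)|\to 0$ uniformly in the strategy; substituting into $W_{t_n}=\int_0^1\tilde\zeta^n_u S_{t_nu}\,du$ and performing the change of variables $p=\Psi^n(u)$ yields
\[
 W_{t_n}\to \tfrac{s}{h(\infty)}\bigl(1-e^{-h(\infty)\psi_n}\bigr),\qquad S_{t_n}\to s\,e^{-h(\infty)\psi_n}
\]
in $L^p$ for any $p\geq 1$, uniformly in the strategy. Together with (\ref{growth_C}) this gives $\limsup V_{t_n}\leq Ju(w,\varphi,s)$, and the constant strategies $\zeta_r\equiv\psi/t$ for $\psi\in[0,\varphi]$ attain the matching lower bound. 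In case (i) the analogous inequality $G^n(u)\geq M(\Psi^n(u)-\Lambda_M t_n)$ for any $M$ (with $\Lambda_M$ chosen so that $g(\lambda)/\lambda\geq M$ for $\lambda\geq\Lambda_M$) forces $G^n(u)\to\infty$ wherever $\Psi^n$ stays bounded away from $0$, so $S_{t_n}\to 0$ and $W_{t_n}\to 0$ uniformly over strategies with $\psi_n$ bounded away from $0$; monotonicity of $u$ then gives $\limsup V_{t_n}\leq u(w,\varphi,s)=V_0(w,\varphi,s;u)$, and the zero strategy supplies the matching lower bound. The principal obstacle throughout is the uniform-in-strategy control of $G^n$, made possible by the convexity of $g$.
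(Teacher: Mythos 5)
Your Step (3) is in substance the same as the paper's treatment: the paper proves the $t\downarrow 0$ behaviour via Lemma \ref{eval_0} (case $h(\infty)=\infty$) and Propositions \ref{prop_conti_3_1}--\ref{prop_conti_3_2} (case $h(\infty)<\infty$), and both rely on the Jensen/convexity bound $\int_0^r g(\zeta_v)dv \geq r\,g\bigl(\frac{1}{r}\int_0^r\zeta_v dv\bigr)$, which is exactly your $G^n\leq h(\infty)\Psi^n$ with the matching lower bound coming from splitting off the small-$\zeta$ contribution. Your Step (1) also tracks the paper's Lemma \ref{conti_wphis_L}.

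The gap is in Step (2). The rescaling $\tilde\zeta_u=t\zeta_{tu}$, $\tilde B_u=B_{tu}/\sqrt t$ correctly turns the time parameter into a coefficient, and for each \emph{fixed bounded} $\tilde\zeta$ the payoff $J(t,\tilde\zeta)$ is continuous in $t$ by standard SDE stability. But $V_t=\sup_{\tilde\zeta}J(t,\tilde\zeta)$ is a supremum over an $L^\infty$-unbounded family, so continuity of each $J(\cdot,\tilde\zeta)$ gives only lower semicontinuity of $V$ in $t$. To get upper semicontinuity you would need either equicontinuity of $\{J(\cdot,\tilde\zeta)\}$ over the whole admissible class, or an a priori compactness of near-optimizers, and neither is available: when $h(\infty)<\infty$ the effective cost $t\,g(\lambda/t)$ grows only linearly in $\lambda$, so near-optimal strategies can have arbitrarily large supremum norm (approximating block trades), and the rate at which $V^L_t\uparrow V_t$ can degrade as $t$ varies. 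This is not a cosmetic issue — it is precisely the mechanism that produces the genuine discontinuity at $t=0$ in case (ii), and one has to show that it does not leak into the interior.

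The paper closes this gap by a route you do not mention. It first establishes the semigroup identity $V_{t+\delta}(\,\cdot\,;u)=V_\delta(\,\cdot\,;V_t(\,\cdot\,;u))$ (Theorem \ref{semi}), reducing right-continuity at an interior $t$ to the $\delta\downarrow 0$ behaviour of $V_\delta(\,\cdot\,;u')$ for the \emph{derived} payoff $u'=V_t(\,\cdot\,;u)$. In case (i) this gives right-continuity immediately from Proposition \ref{th_conti_(i)_1}. In case (ii) it gives $V_{t+\delta}\to JV_t$, and there is then a nontrivial additional step — $JV_t\leq V_t$ — proved by approximating the front-loaded jump of size $\psi$ by rapid selling at rate $\psi/\delta$ over $[0,\delta]$ and estimating the error in $X$ via Gronwall (this is the bulk of the proof of Proposition \ref{prop_conti(i)_2}). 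Your sketch has no analogue of either the semigroup reduction or the $JV_t\leq V_t$ argument; without them, the assertion "SDE stability in the coefficients gives continuity of $V_t$ in $t$" does not follow. For left-continuity the paper uses the restricted value functions $V^L_t$ together with Dini's theorem (Proposition \ref{conti_L}, Lemmas \ref{conti_1}--\ref{conti_2}), which you also do not invoke, though this half is more easily repairable.

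One small slip in Step (3), case (i): you write $W_{t_n}\to 0$; since $W_0=w$ and $W$ is nondecreasing, you mean $W_{t_n}\to w$.
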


We prove Theorem \ref {conti} in Section \ref {proof_of_conti}. 
As we can see, continuity in $t$ at the origin depends on
the state of the function $h$ at infinity. 
When $h(\infty ) = \infty $, MI of large sales 
is strong enough ($g(\zeta )$ diverges rapidly with $\zeta \rightarrow \infty $) to make a trader avoid instantaneous liquidation.
An optimal policy is `no-trading' in an infinitesimal time; thus, $V_t$ converges to $u$ as $t\downarrow 0$. 
When $h(\infty ) < \infty $, the value function is not always continuous at $t = 0$ and has the right limit $Ju(w, \varphi ,s)$. 
In this case, MI for large sales 
is not as strong 
($g(\zeta )$ diverges, but the divergence speed is slow) 
and there is room for successful liquidation in the infinitesimal time. 
The function $Ju(w,\varphi ,s)$ corresponds to the utility of liquidation by a trader 
who sells a part of the shares of a security $\psi $ by dividing infinitely within an infinitely short time 
(such that the fluctuation of the security price is negligible) 
and who is left with the amount $\varphi -\psi $, that is, \begin{eqnarray}\label{almost_block}
\zeta ^\delta _r = \frac{\psi }{\delta }1_{[0, \delta ]}(r), \ \ r\in [0, t] \ \ (\delta \downarrow 0). 
\end{eqnarray}
Such a strategy is also discussed in \cite {Lions-Lasry}. 
We remark that the form of $Ju$ is strongly related to Theorem 3 in \cite {Lions-Lasry} 
(see Theorem \ref {th_LL} in Section \ref {sec_SO} for more details). 
Also note that the condition $h(\infty ) = 0$ corresponds to the classical case of no MI model. 

Next we study the semi-group property (the Bellman principle) of the family of 
nonlinear operators corresponding with the continuous-time value function. 
We define an operator $Q_t : \mathcal {C}\longrightarrow \mathcal {C}$ by 
$Q_tu(w,\varphi ,s) = V_t(w,\varphi ,s ; u)$. 
Here the arguments in Sections \ref {pre}--\ref {S-R} imply that $Q_t$ is well-defined. 
We now have the following theorem. 
\begin{thm}\label{semi} \ For each $r,t\in [0,1]$ with $t+r\leq 1$, \ 
$(w,\varphi ,s)\in D$ and $u\in \mathcal {C}$ 
it holds that $Q_{t+r}u(w,\varphi ,s)=Q_tQ_ru(w,\varphi ,s)$. 
\end{thm}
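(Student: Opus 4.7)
The plan is to establish the two inequalities $Q_{t+r}u\le Q_tQ_ru$ and $Q_{t+r}u\ge Q_tQ_ru$ separately by a standard dynamic-programming argument, exploiting the time-homogeneous Markov structure of the state process $(W_s,\varphi_s,S_s)$ governed by \eqref{value_conti_S} and the flow property of the underlying SDE. Since $u\in\mathcal{C}$, Theorem \ref{conti} together with Lemma \ref{nondecreasing_polygrowth} ensures $Q_ru\in\mathcal{C}$ (taking into account that for $r=0$ one has $Q_0u=u$), so that $Q_tQ_ru$ is well-defined. The cases $t=0$ or $r=0$ are immediate from $Q_0u=u$.

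For the upper bound $Q_{t+r}u\le Q_tQ_ru$, I would take any $(\zeta_s)_{0\le s\le t+r}\in\mathcal{A}_{t+r}(\varphi)$, restrict it to $[0,t]$ to obtain an element of $\mathcal{A}_t(\varphi)$, and observe that, conditional on $\mathcal{F}_t$, the shifted tail $(\zeta_{t+s})_{0\le s\le r}$ is a.s.\ an admissible $r$-horizon continuation starting from $(W_t,\varphi_t,S_t)$, measured against the shifted Brownian motion $\tilde B_s=B_{t+s}-B_t$. The Markov property of $(W,\varphi,S)$ then gives
\[
\E[u(W_{t+r},\varphi_{t+r},S_{t+r})\mid\mathcal{F}_t]\le V_r(W_t,\varphi_t,S_t;u)=Q_ru(W_t,\varphi_t,S_t)
\]
a.s.; taking expectation and then the supremum over $(\zeta_s)_{s\le t}$ yields the inequality.

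For the reverse inequality, fix $\varepsilon>0$ and pick an $\varepsilon$-optimal $(\zeta_s)_{0\le s\le t}\in\mathcal{A}_t(\varphi)$ for $Q_tQ_ru(w,\varphi,s)$. The task is to concatenate it with a continuation $(\tilde\zeta_s)_{0\le s\le r}$, progressively measurable with respect to $(\mathcal{F}_{t+s})_s$, satisfying $\tilde\zeta_s\ge 0$ and $\int_0^r\tilde\zeta_s\,ds\le\varphi_t$ a.s., and such that $\E[u(W_{t+r},\varphi_{t+r},S_{t+r})\mid\mathcal{F}_t]\ge Q_ru(W_t,\varphi_t,S_t)-\varepsilon$ a.s. The main obstacle is the measurable dependence of this continuation on the random starting state $(W_t,\varphi_t,S_t)$. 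I would handle it by the usual approximation: on each cell of a finite partition $\{D_i\}$ of a large compact truncation of $D$, choose a representative $(w_i,\varphi_i,s_i)$ together with a deterministic $\varepsilon$-optimal control $\zeta^i\in\mathcal{A}_r(\varphi_i)$ for the $r$-horizon problem driven by $\tilde B$ (which is independent of $\mathcal{F}_t$, so such $\zeta^i$ exist from the definition of $V_r$), and glue these via the $\mathcal{F}_t$-measurable indicators $\mathbf{1}_{\{(W_t,\varphi_t,S_t)\in D_i\}}$. Continuity of $(w,\varphi,s)\mapsto V_r(w,\varphi,s;u)$ on $D$ (Theorem \ref{conti}, used for $r>0$; the case $r=0$ is trivial) combined with the polynomial-growth bound from Lemma \ref{nondecreasing_polygrowth} lets me refine the partition to make the state-approximation error arbitrarily small, and then letting $\varepsilon\downarrow 0$ completes the argument.

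The measurable-selection/state-approximation step is the only delicate point; the upper bound drops out of the Markov property alone, while the lower bound relies essentially on the continuity result of Theorem \ref{conti} to paste locally $\varepsilon$-optimal controls into a globally almost-optimal one.
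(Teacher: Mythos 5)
Your argument is the classical ``direct'' continuous-time dynamic programming proof, and it is a genuinely different route from the one the paper takes. The paper never touches the measurable-selection issue at the level of the continuous-time problem at all: it first records the discrete-time Bellman equation $Q^{n,L}_{t+r}u = Q^{n,L}_tQ^{n,L}_ru$ (cited from Bertsekas--Shreve, where the selection issues are already handled), then passes to the limit $n\to\infty$ using the already-established convergence $V^{n,L}_{[nt]}\to V^L_t$ to obtain the semigroup property for the truncated operators $Q^L$ on dyadic times, extends to all $t,r$ by continuity of $V^L$ (Proposition~\ref{conti_L}), and finally recovers $Q_t$ from $\sup_{L>0}Q^L_t$ by an argument attributed to Nishio. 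So the paper reuses the heavy machinery of Section~\ref{S-R} and deliberately stays with bounded controls until the very last step. Your proof is self-contained and structurally cleaner---Markov property for $\le$, and a partition/gluing construction for $\ge$---but it shoulders the technical burden that the paper outsources: you must justify that the tail control $(\zeta_{t+s})_s$, which is $(\mathcal{F}_{t+s})_s$-progressive (a filtration generally strictly larger than the one generated by the shifted Brownian motion), can be dominated by $V_r(W_t,\varphi_t,S_t;u)$; this requires a filtration-invariance statement for the value function that you invoke implicitly but do not flag. You do correctly identify the measurable-selection obstacle in the lower bound and note that continuity from Theorem~\ref{conti} (for $r>0$) makes the partition argument go through, and you correctly dispose of the $r=0$ or $t=0$ endpoints by $Q_0u=u$. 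Net effect: your approach trades the paper's reliance on the discrete-to-continuous convergence apparatus for a direct but more delicate measurability argument.
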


The proof is given in Section \ref {sec_properties}. 
Using Theorem \ref {semi}, 
we can characterise the continuous-time value function as 
the viscosity solution of the corresponding HJB. 
Since the value functions are defined in a way that does not depend on $\Phi _0$, 
we can take them to be defined on an extended domain 
$\hat{D} = \Bbb {R}\times [0, \infty ) \times [0, \infty )$. 
Let $u(w, \varphi , s) : \hat{D}\longrightarrow \Bbb {R}$ be such that 
$u$ is a non-decreasing continuous function that 
grows polynomially in $w, \varphi $ and $s$. 
We define a function $F : \mathscr {S}\longrightarrow [-\infty , \infty )$ by 
\begin{eqnarray*}
F(z, p, X) = -\sup _{\zeta \geq 0}
\left\{ \frac{1}{2}\hat{\sigma }(z_s)^2X_{ss} + \hat{b}(z_s)p_s + 
\zeta \left( z_sp_w - p_\varphi \right) - g(\zeta )z_sp_s\right\} , 
\end{eqnarray*}
where $\mathscr {S} = \hat{U}\times \Bbb {R}^3\times S^3$, $\hat{U} = \hat{D} \setminus \partial \hat{D}$, 
$S^3$ is the space of symmetric matrices in $\Bbb{R}^3\otimes \Bbb {R}^3$, and 
\begin{eqnarray*}
z = \left(
    \begin{array}{c}
     	z_w	\\
     	z_\varphi 	\\
     	z_s
    \end{array}
    \right)\in D, \ 
    p = \left(
    \begin{array}{c}
     	p_w	\\
     	p_\varphi 	\\
     	p_s
    \end{array}
    \right)\in \Bbb {R}^3, \ 
    X = \left(
    \begin{array}{ccc}
     	X_{ww}& X_{w\varphi }	& X_{ws}	\\
     	X_{\varphi w}& X_{\varphi \varphi }	&X_{\varphi s} 	\\
     	X_{sw}& X_{s\varphi }	& X_{ss}
    \end{array}
    \right) \in S^3. 
\end{eqnarray*}

Although the function $F$ approach $-\infty $, 
we can define a viscosity solution of the following HJB as usual 
(see, e.g., \cite{Fleming-Soner}, \cite{Koike} and \cite{Nagai}): 
\begin{eqnarray} \label{HJB2}
\frac{\partial }{\partial t}v + F(z, \mathcal {D}v, \mathcal {D}^2v) = 0 \ \ 
\mathrm {on} \ (0, 1]\times \hat{U}, 
\end{eqnarray}
where $\mathcal {D}$ denotes the differential operator with respect to 
$z = (w, \varphi , s)$. 
Here, we remark that $(\ref {HJB2})$ can be rewritten as 
\begin{eqnarray} \label{HJB}
\frac{\partial }{\partial t}v(t,w,\varphi ,s) - \sup _{\zeta \geq 0}
\mathscr {L}^\zeta v(t,w,\varphi ,s) = 0, \ \ (t, w, \varphi ,s)\in (0, 1]\times \hat{U}, 
\end{eqnarray}
where 
\begin{eqnarray*}
&&\mathscr {L}^\zeta v(t,w,\varphi ,s) = 
\frac{1}{2}\hat{\sigma }(s)^2\frac{\partial ^2}{\partial s^2}v(t,w,\varphi ,s) + 
\hat{b}(s)\frac{\partial }{\partial s}v(t,w,\varphi ,s)\\&&\hspace{20mm} + 
\zeta \Big (s\frac{\partial }{\partial w}v(t,w,\varphi ,s) - 
\frac{\partial }{\partial \varphi }v(t,w,\varphi ,s)\Big ) - 
g(\zeta )s\frac{\partial }{\partial s}v(t,w,\varphi ,s) . 
\end{eqnarray*} 

Now we state the following theorem, which is proved in Section \ref {sec_HJB}. 
\begin{thm} \ \label{HJB_th} Assume that $h$ is strictly increasing and $h(\infty ) = \infty $. 
Moreover, assume 
\begin{eqnarray}\label{cond_diff}
\liminf _{\varepsilon \downarrow 0}
\frac{V_t(w, \varphi ,s+\varepsilon  ; u) - V_t(w, \varphi , s ; u)}{\varepsilon } > 0
\end{eqnarray} 
for any $t\in (0, 1]$ and $(w, \varphi , s)\in \hat{U}$. 
Then $V_t(w,\varphi ,s ; u)$ is a viscosity solution of $(\ref {HJB2})$. 
\end{thm}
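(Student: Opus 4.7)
The approach is the standard verification scheme for stochastic control: unpack a dynamic programming principle (DPP) from the semi-group property of Theorem \ref{semi}, then verify the viscosity sub- and supersolution inequalities separately by testing against smooth functions with It\^o's formula. Since $h(\infty)=\infty$, Theorem \ref{conti}(i) gives continuity of $V_t(w,\varphi,s;u)$ on $[0,1]\times D$, so classical viscosity testing at interior points $(t_0,z_0)\in(0,1]\times U$ is legitimate. The identity $Q_{t_0}u=Q_h(Q_{t_0-h}u)$ together with the definition of $V$ unpacks into the DPP
\begin{eqnarray*}
V_{t_0}(z_0;u)=\sup_{(\zeta_r)_r\in\mathcal{A}_h(\varphi_0)}\E[V_{t_0-h}(\Xi_h(z_0,(\zeta_r)_r);u)],
\end{eqnarray*}
which furnishes the ``$\geq$'' inequality for any admissible control on $[0,h]$ and, for any $\epsilon>0$, a near-optimal control saturating equality up to error $\epsilon h$.

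For the supersolution, let $\phi\in C^{1,2}$ be such that $V-\phi$ attains a local minimum zero at $(t_0,z_0)$. Fix an arbitrary constant $\zeta\geq 0$ and apply the ``$\geq$'' DPP with $(\zeta_r)\equiv\zeta$; combined with $V\geq\phi$ locally, this gives $\phi(t_0,z_0)\geq\E[\phi(t_0-h,\Xi_h(z_0,\zeta))]$ for $h$ small. It\^o's formula applied to $\phi(t_0-r,\Xi_r(z_0,\zeta))$ on $[0,h]$ (stopped at exit from a small ball so the stochastic integral has vanishing expectation) produces a drift equal to $-\partial_t\phi+\mathcal{L}^\zeta\phi$. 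Dividing by $h$, passing $h\downarrow 0$, and invoking continuity of $\mathcal{L}^\zeta\phi$ yields $\partial_t\phi(t_0,z_0)\geq\mathcal{L}^\zeta\phi(t_0,z_0)$; taking supremum over $\zeta\geq 0$ gives $\partial_t\phi+F\geq 0$ at $(t_0,z_0)$.

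For the subsolution, let $\phi\in C^{1,2}$ with $V-\phi$ having a local maximum zero at $(t_0,z_0)$. Hypothesis \eqref{cond_diff} is crucial. Since $V\leq\phi$ near $(t_0,z_0)$ with equality at that point, the right-difference quotient of $\phi$ in $s$ dominates that of $V$, so $\phi_s(t_0,z_0)\geq\liminf_{\epsilon\downarrow 0}(V_{t_0}(w_0,\varphi_0,s_0+\epsilon;u)-V_{t_0}(z_0;u))/\epsilon>0$. Combined with superlinearity of $g$ (from $h(\infty)=\infty$), this forces $\sup_{\zeta\geq 0}\mathcal{L}^\zeta\phi(t_0,z_0)$ to be finite and attained at a bounded maximizer $\zeta^*$; strict monotonicity of $h$ ensures uniqueness of $\zeta^*$, and continuity of the sup as a function of $(t,z)$ on a neighborhood follows. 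I then argue by contradiction: assume $\partial_t\phi(t_0,z_0)-\sup_\zeta\mathcal{L}^\zeta\phi(t_0,z_0)>2\delta$ for some $\delta>0$. By continuity the strict inequality persists at level $\delta$ on a small cylinder. Applying the ``$\leq$'' DPP with an $\epsilon$-optimal $(\zeta^\epsilon_r)_r$ together with It\^o's formula (stopped at exit from the cylinder) gives $\phi(t_0,z_0)\leq\phi(t_0,z_0)-\delta h+\epsilon h+o(h)$, i.e.\ $\epsilon\geq\delta$, contradicting the choice $\epsilon<\delta$.

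The principal obstacle is making the localization in the subsolution step survive arbitrary $\epsilon$-optimal controls, which are only constrained by $\int_0^h\zeta^\epsilon_r\,dr\leq\varphi_0$ and may concentrate as $h\downarrow 0$. Without a priori control on $(\zeta^\epsilon_r)_r$ the state $\Xi_r(z_0,(\zeta^\epsilon_r)_r)$ could leave the cylinder well before time $h$, and the stopped estimate would fail to deliver a lower bound proportional to $h$. I would close the gap by combining moment bounds on $(W_r,\varphi_r,S_r)$ derived from the polynomial growth of $u$ (inherited by $V$ via Lemma \ref{nondecreasing_polygrowth}) with a sub-optimality argument exploiting superlinearity of $g$: an $\epsilon$-optimal strategy cannot employ arbitrarily large instantaneous execution rates because doing so crashes $S_r$ and destroys the expected utility. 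Quantifying this into an a priori tightness estimate on $(\zeta^\epsilon_r)_r$ uniform in $h$ is the technical heart of the proof.
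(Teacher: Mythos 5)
Your overall verification scheme (DPP from Theorem~\ref{semi}, It\^o on test functions) and your identification of where \eqref{cond_diff} enters (forcing $\phi_s(t_0,z_0)>0$ so that $\sup_{\zeta}\mathscr{L}^\zeta\phi<\infty$ and $F$ is finite at the test point) are both sound, and your supersolution step is essentially complete. But the gap you flag in the subsolution step is a genuine one, not a technicality: the $\varepsilon$-optimal controls on $[0,h]$ have no a priori bound, so the exit time $\tau$ from a fixed small cylinder need not be comparable to $h$, and the stopped It\^o estimate only yields $\delta\,\mathrm{E}[\tau]\leq\varepsilon h$, which is not a contradiction if $\mathrm{E}[\tau]/h\to 0$. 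Your proposed repair (moment bounds plus a sub-optimality argument forbidding huge execution rates) is plausible in spirit, but as stated it is a program, not a proof; the required tightness estimate uniform in $h$ is precisely what is hard, and I do not see how to obtain it without effectively redoing the paper's truncation.

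The paper avoids this problem entirely by a structurally different route. It first introduces the truncated value functions $V^L_t$ over controls $\zeta\in[0,L]$ and shows (Proposition~\ref{HJB_L_th}) that each $V^L_t$ is a viscosity solution of the truncated equation $\partial_t v + F^L(z,\mathcal{D}v,\mathcal{D}^2v)=0$; here the control region is compact, so the DPP/It\^o verification you attempt goes through without the localization obstruction. It then passes $L\to\infty$: by Dini's theorem $V^L\uparrow V$ locally uniformly on $(0,1]\times U$, and by Proposition~\ref{prop_F_conti} together with Dini again $F^L\downarrow F$ locally uniformly on $\mathscr{R}=U\times\Bbb{R}^2\times(0,\infty)\times S^3$ (where $F$ is finite and continuous), and a standard stability theorem for viscosity solutions under local uniform convergence (Koike, Proposition~4.8) concludes. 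The role of \eqref{cond_diff} in that argument is exactly what you identified, just relocated: it guarantees that the relevant test-function jets lie in $\mathscr{R}$, so the convergence $F^L\to F$ is uniform there and the stability theorem applies. In short, you localize the state on the untruncated problem, the paper localizes the control by truncation and then takes a limit; the second avoids the tightness estimate you would otherwise have to prove.

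Also a small precision worth adding in your supersolution paragraph: the inequality $\partial_t\phi(t_0,z_0)\geq\mathscr{L}^\zeta\phi(t_0,z_0)$ for every $\zeta\geq 0$ already forces $\sup_\zeta\mathscr{L}^\zeta\phi(t_0,z_0)\leq\partial_t\phi(t_0,z_0)<\infty$, so $F>-\infty$ there automatically and \eqref{cond_diff} is not needed on the supersolution side; its real work is only in the subsolution inequality, as you use it.
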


This theorem tells us that when $h(\infty ) = \infty $ (MI is strong), 
our value function is characterised by the corresponding HJB (\ref {HJB2}). 

\begin{remark}
It is quite natural that the value function is increasing with respect to the underlying security price, 
and we can easily prove that $V_t(w, \varphi , s ; u)$ is non-decreasing in $s$. 
Then it follows that 
\begin{eqnarray*}
\liminf _{\varepsilon \downarrow 0}
\frac{V_t(w, \varphi ,s+\varepsilon  ; u) - V_t(w, \varphi , s ; u)}{\varepsilon } \geq 0. 
\end{eqnarray*} 

The inequality (\ref {cond_diff}) is stricter than the one above, and is needed to prove our characterisation result (Theorem \ref {HJB_th}) 
because of technical reasons related to when $F = -\infty $ occurs. 
Note also that in many cases we may easily show that 
the value function is strictly increasing in $s$, that is, 
$V_t(w, \varphi ,s+\varepsilon  ; u) - V_t(w, \varphi , s ; u) > 0$ 
for $\varepsilon > 0$. 
Nevertheless, this does not directly indicate (\ref {cond_diff}). 
Here, we present a sufficient condition for (\ref {cond_diff}). 
\begin{itemize}
 \item [\mbox {[C1]}] \ $u(w, \varphi , s) = U(w)$ for some $U\in C^1(\Bbb {R})$. 
Moreover, $U$ is concave and $U'(w) \geq \delta $, \ $w\in \Bbb {R}$ for some $\delta > 0$. 
 \item [\mbox {[C2]}] \ Coefficients $b$ and $\sigma $ in (\ref {SDE_X}) 
are differentiable and their derivatives are 
Lipschitz continuous and uniformly bounded. 
\end{itemize}
Then we can show the following proposition:
\begin{prop} \ \label{prop_suff_cond}
Assume $[C1]$--$[C2]$. Then the inequality $(\ref {cond_diff})$ holds. 
\end{prop}
The proof is in Section \ref {proof_suff_cond}. 
Note that the conditions [C1]--[C2] are satisfied in typical cases. 
[C1] corresponds to the risk-averse (or risk-neutral) trader, 
which is standard in finance. 
[C1] further requires that the utility function depends only on the cash holdings $w$, 
but this assumption is also mild and standard 
(especially when we consider the sell-off condition, 
which will be discussed in Section \ref {sec_SO}). 
[C2] is satisfied in typical cases, such as the Black--Scholes model: 
In Section \ref {sec_eg}, 
we will treat such an example when the trader is risk-neutral. 
\end{remark}

Finally, we give the uniqueness result of viscosity solutions of (\ref {HJB}). 

\begin{thm} \ \label{unique_th} 
Assume that $\hat{\sigma }$ and $\hat{b}$ are both Lipschitz continuous. 
Assume the hypotheses of Theorem \ref {HJB_th} and that 
$\liminf _{\zeta \rightarrow \infty }(h(\zeta )/\zeta ) > 0$. 
If a polynomial growth function $v : [0, 1]\times \hat{D} \longrightarrow  \Bbb {R}$ is 
a viscosity solution of $(\ref {HJB})$ and satisfies the boundary conditions
\begin{eqnarray}\label{bdd_cond}
\begin{array}{rl}
v(0, w, \varphi , s) = u(w, \varphi , s), &(w, \varphi ,s)\in \hat{D}, \\
v(t, w, 0, s) = \mathrm {E}\left [u\left(w, 0, Z_t(s) \right)\right], & 
(t, w, s)\in [0, 1]\times \Bbb {R}\times [0, \infty ), \\
v(t, w, \varphi , 0) = u(w, \varphi , 0), & 
(t, w, \varphi )\in [0, 1]\times \Bbb {R}\times [0, \infty ), 
\end{array}
\end{eqnarray}
then $V_t(w, \varphi ,s ; u) = v(t, w, \varphi , s)$, where 
\begin{eqnarray}
Z_t(s) = 
\exp \left( Y_t(\log s) \right) \ (s > 0), \ \ 
0 \ (s = 0) 
\end{eqnarray}
and $Y_t(x)$ is the solution of SDE $(\ref {SDE_X0})$ replacing $x_0$ with $x$, 
that is, where $Z_t(s)$ represents the price of the security with no MI. 
\end{thm}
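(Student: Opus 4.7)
The plan is to derive the identity from a comparison principle for viscosity sub- and supersolutions of (\ref{HJB}). By Theorem \ref{HJB_th} the value function $V_t(w,\varphi,s;u)$ is itself a viscosity solution of (\ref{HJB2}); its polynomial growth is supplied by Lemma \ref{nondecreasing_polygrowth}, and the three boundary conditions (\ref{bdd_cond}) follow directly from the stochastic control definition (when $\varphi=0$ no shares can be sold so $\zeta\equiv 0$ and $S_t=Z(t;0,s)$; when $s=0$ one has $\hat\sigma(0)=\hat b(0)=0$, so $S_r\equiv 0$ and $W_t=w$). Once a comparison principle is established for polynomial-growth sub- and supersolutions with the boundary data (\ref{bdd_cond}), applying it twice, first to $(v,V_\cdot(\cdot;u))$ and then to $(V_\cdot(\cdot;u),v)$, yields the claim.

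For the comparison principle I would follow the standard parabolic viscosity-solutions recipe. First pass to the log-price variable $x=\log s$ on $\{s>0\}$, so that the degenerate coefficients $\hat\sigma(s), \hat b(s)$ translate back to the bounded Lipschitz $\sigma, b$ promised by the Lipschitz hypothesis in the statement; the $s=0$ boundary becomes $x\to-\infty$ and is controlled by the third line of (\ref{bdd_cond}). Assuming for contradiction $M_0:=\sup(v_1-v_2)>0$ for a subsolution $v_1$ and supersolution $v_2$ of polynomial growth of order $M$, introduce the penalty $\chi(w,x)=1+w^{2M}+e^{2Mx}$, a small $\varepsilon>0$, a strict-subsolution perturbation $-\delta t$, and the doubled functional
\begin{eqnarray*}
\Phi_n(t,z,t',z') = v_1(t,z) - v_2(t',z') - \delta t - n\bigl(|t-t'|^2 + |z-z'|^2\bigr) - \varepsilon\bigl(\chi(z)+\chi(z')\bigr).
\end{eqnarray*}
The penalty confines the maximizers $(\hat t,\hat z,\hat t',\hat z')$ of $\Phi_n$ to a bounded set, and classical estimates give $\hat z-\hat z'\to 0$ and $n|\hat z-\hat z'|^2\to 0$ as $n\to\infty$. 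Boundary cases ($\hat t=0$, or $\hat z$ on $\{\varphi=0\}$ or at $x\to-\infty$) are excluded using (\ref{bdd_cond}), the uniform continuity of $u$ on compacta, and the continuous dependence of $Z(t;0,s)$ on initial data; interior maxima are treated by the parabolic Crandall--Ishii lemma, which produces matrices $X_1,X_2$ satisfying the usual inequality, after which the viscosity inequalities reduce the problem to bounding $F(\hat z',p_2,X_2)-F(\hat z,p_1,X_1)+\delta+O(\varepsilon)$.

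The main obstacle is that $F$ involves a supremum over the unbounded set $\zeta\geq 0$, so a priori the Hamiltonian is not Lipschitz in $(z,p,X)$. Here the hypotheses of the theorem enter decisively: the penalty contributes a strictly positive term $2M\varepsilon e^{2Mx}$ to the $s$-component of the test gradient, so at the contact points $\hat z_s p_s\geq c(\varepsilon)>0$, while the polynomial growth estimates keep the numerator $\hat z_s p_w-p_\varphi$ bounded. The first-order condition $h(\zeta^*)=(\hat z_s p_w-p_\varphi)/(\hat z_s p_s)$ combined with the strict monotonicity of $h$ therefore yields a uniform bound $\zeta^*\leq C(\varepsilon)$, and the growth condition $\liminf_{\zeta\to\infty}h(\zeta)/\zeta=0$ guarantees that this bound does not degrade too quickly as $\varepsilon\downarrow 0$, so that (\ref{cond_diff}) can be propagated from the value function $V_{\cdot}(\cdot;u)$ (whose role is played by one of $v_1,v_2$) onto the contact points of the doubled functional. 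On the compact set $\zeta\in[0,C(\varepsilon)]$ the Hamiltonian is Lipschitz in $(z,p,X)$ uniformly in $\zeta$; the standard Ishii estimate followed by the limits $n\to\infty$, $\varepsilon\downarrow 0$, and $\delta\downarrow 0$ then contradicts $M_0>0$ and finishes the proof.
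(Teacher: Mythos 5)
Your overall strategy---derive the theorem from a comparison principle for polynomial-growth sub/supersolutions via the standard parabolic doubling-of-variables machinery, penalization, and the Crandall--Ishii lemma---is the same framework the paper follows: the paper's Proposition~\ref{comparison_bdd} and the subsequent appeal to Theorem~8.12 of \cite{Crandall-Lions-Ishii} and Theorem~7.7.2 of \cite{Nagai} are exactly this recipe. But there is a genuine gap at the one place where the argument is non-standard, namely the control of the Hamiltonian over the unbounded set $\zeta\geq 0$, and your proposed fix does not work.

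You try to use the growth hypothesis $\liminf_{\zeta\to\infty}h(\zeta)/\zeta=0$ to argue that the optimal $\zeta^*=h^{-1}\bigl((\hat z_s p_w-p_\varphi)/(\hat z_s p_s)\bigr)$ stays controlled as the penalty parameter $\varepsilon\downarrow 0$. This is backwards: if $h$ grows sublinearly then $h^{-1}$ grows \emph{super}linearly, so as the lower bound $c(\varepsilon)$ on $\hat z_s p_s$ deteriorates, the bound $\zeta^*\leq h^{-1}(K/c(\varepsilon))$ blows up \emph{faster}, not slower. Moreover, even with a bound on $\zeta^*$ at each fixed $\varepsilon$, ``Lipschitz in $(z,p,X)$ on $\zeta\in[0,C(\varepsilon)]$'' is not enough, because the Ishii matrix inequality links $X_1,X_2$ at two distinct points $\hat z,\hat z'$ and the conclusion requires an estimate that survives the double limit $n\to\infty$, $\varepsilon\downarrow 0$; this is precisely what a structure condition of Crandall--Lions--Ishii type supplies and what your sketch never establishes.

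What the paper actually does is isolate exactly this estimate as Proposition~\ref{structure_prop}. The crux is bounding $\sup_{\zeta\geq 0}\{-(z_s-z'_s)^2 g(\zeta)+(z_s-z'_s)(z_w-z'_w)\zeta\}$ by $C|z-z'|^2$, which holds precisely when $g$ has quadratic growth; the proof obtains $g(\zeta)\geq\beta\zeta^2$ for large $\zeta$ from $\liminf_{\zeta\to\infty}h(\zeta)/\zeta>0$. (You will notice this is the \emph{opposite} of the condition in the theorem statement; the statement's ``$=0$'' is evidently a typo for ``$>0$''---the log-quadratic example of Section~\ref{sec_quad_eg}, which the paper explicitly says is covered by the uniqueness result, has $h(\zeta)=2\alpha\zeta$ and hence $\liminf h/\zeta=2\alpha>0$.) Your proposal inherits the typo and builds on it, which is why the role you assign to the growth hypothesis doesn't cohere. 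To repair the argument, you should replace the penalty-based bound on $\zeta^*$ with the structure-condition estimate: show that the quadratic growth of $g$ makes the $\zeta$-supremum in $F(z',\alpha(z-z'),Y)-F(z,\alpha(z-z'),X)$ finite and of order $\alpha|z-z'|^2$, and then the rest of your outline (boundary cases, Ishii lemma, limits) goes through.
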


The proof of Theorem \ref {unique_th} is given in Section \ref {sec_uniqueness}. 
This theorem guarantees the uniqueness of viscosity solutions of (\ref {HJB}) 
when the divergence speed of $g(\zeta )$ with $\zeta \rightarrow \infty $ 
is greater than or equal to a quadratic function, that is, when 
$g(\zeta ) \geq C\zeta ^2$, $\zeta \geq M$ for some $C, M > 0$. 
In Section \ref {sec_quad_eg}, we present an example where 
the assumptions in Theorems \ref {HJB_th} and \ref {unique_th} are fulfilled. 

\begin{remark}
Characterisation of value functions of a stochastic control problem as viscosity solutions of HJB 
has been discussed in many papers and textbooks 
(e.g., \cite {DaLio-Ley1}, \cite {DaLio-Ley2}, \cite {Fleming-Soner} and \cite {Nagai}). 
Uniqueness results of viscosity solutions of HJB are also well studied. 
Yet to the best of our knowledge, the characterisation theorem (Theorem \ref {HJB_th}) and 
the uniqueness theorem (Theorem \ref {unique_th}) of our HJB (\ref {HJB2}) cannot be derived 
from the existing literature. 
The main difficulties are as follows: 
\begin{itemize}
 \item Our control region $[0, \infty )$ is unbounded. 
 \item The drift term $\hat{b}(s) - g(\zeta )s$ does not always satisfy the linear growth condition in $s$ and $\zeta $. 
In particular, if $h(\infty ) = \infty $, then we never get the estimates 
\begin{eqnarray*}
|g(\zeta )s| \leq C(1 + \zeta  + s), \ \ 
|g(\zeta )s - g(\zeta )s'| \leq C(1 + \zeta )|s-s'|, \ \ s, s', \zeta \geq 0
\end{eqnarray*}
for any positive constant $C$. 
\end{itemize}
Recently, the uniqueness of viscosity solutions of HJB has been studied for 
unbounded domains (in our case, $\hat{D} = \Bbb {R}\times [0, \infty )\times [0, \infty )$) and 
unbounded control regions (in our case, $[0, \infty )$). 
Theorem 2.1 of \cite {DaLio-Ley2} 
is one of the most general results of the comparison principle of viscosity solutions of HJB. 
However, our HJB does not satisfy conditions (A)(ii)--(iii) in \cite {DaLio-Ley2}. 
Thus, we cannot apply those results, meaning that our results are original in this respect.  
\end{remark}

\section{Sell-Off Condition}\label{sec_SO}
\setcounter{thm}{0}

In this section we consider the optimal execution problem under the `sell-off condition'. 
A trader has a certain quantity of shares of a security at some initial time, and 
must liquidate all of them within a time horizon. 
Then the spaces of admissible strategies are reduced to 
\begin{eqnarray*}
\mathcal {A}^{\mathrm {SO}}_t(\varphi ) = 
\left \{ (\zeta _r)_r\in \mathcal {A}_t(\varphi )\ ; \ 
\int ^t_0\zeta _rdr = \varphi \right \} . 
\end{eqnarray*}
We define a value function with the sell-off condition by 
\begin{eqnarray*}
V^{\mathrm {SO}}_t(w,\varphi ,s ; U) = \sup _{(\zeta _r)_r\in \mathcal {A}^{\mathrm {SO}}_t(\varphi )}\E [U(W_t)] 
\end{eqnarray*}
subject to (\ref {init_cond})--(\ref {value_conti_S}) 
for a continuous, non-decreasing, polynomially growing function $U : \Bbb {R}\longrightarrow \Bbb {R}$. 
This gives the following theorem: 

\begin{thm} \ \label{thesame}$V^{\mathrm {SO}}_t(w,\varphi ,s ; U) = V_t(w,\varphi ,s ; u)$, 
where $u(w, \varphi , s) = U(w)$. 
\end{thm}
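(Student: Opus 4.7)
The $\leq$ direction is immediate: $\mathcal{A}^{\mathrm{SO}}_t(\varphi) \subseteq \mathcal{A}_t(\varphi)$, and because $u(w,\varphi,s)=U(w)$, the two supremum problems have identical objective functionals. The work is in $V_t \leq V^{\mathrm{SO}}_t$, for which I would approximate an arbitrary admissible strategy by a sell-out strategy whose terminal cash is smaller only by a vanishing amount.

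Fix $(\zeta_r) \in \mathcal{A}_t(\varphi)$ with $K = \sup_{r,\omega}\zeta_r(\omega) < \infty$. For each $\delta \in (0,t)$, append a uniform liquidation spike at the end:
\[
\tilde{\zeta}^\delta_r = \zeta_r \mathbf{1}_{[0,t-\delta]}(r) + \frac{\varphi_{t-\delta}}{\delta}\mathbf{1}_{(t-\delta,t]}(r),
\]
where $\varphi_{t-\delta} = \varphi - \int_0^{t-\delta}\zeta_s ds$. Since $\varphi_{t-\delta}$ is $\mathcal{F}_{t-\delta}$-measurable and bounded by $\Phi_0$, $\tilde{\zeta}^\delta$ is progressively measurable, non-negative, bounded by $\max(K,\Phi_0/\delta)$, and satisfies $\int_0^t \tilde{\zeta}^\delta_r dr = \varphi$ identically, so $\tilde{\zeta}^\delta \in \mathcal{A}^{\mathrm{SO}}_t(\varphi)$.

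The two state paths agree on $[0,t-\delta]$, in particular $\tilde{W}^\delta_{t-\delta} = W_{t-\delta}$. Because prices remain non-negative under the SDE (\ref{value_conti_S}), the spike can only increase cash: $\tilde{W}^\delta_t \geq W_{t-\delta}$. On the other hand, the original strategy's cash gain over the short window is crudely bounded by $W_t - W_{t-\delta} = \int_{t-\delta}^t \zeta_r S_r dr \leq K\delta \sup_{r\leq t}S_r$. Combining, $\tilde{W}^\delta_t \geq W_t - K\delta \sup_{r\leq t}S_r$. By monotonicity and continuity of $U$ together with the polynomial moment bounds on $\sup_{r\leq t}S_r$ (the same estimates used to establish $V_t < \infty$ in Lemma \ref{nondecreasing_polygrowth}), a dominated/Vitali argument yields
\[
\liminf_{\delta \downarrow 0} \mathrm{E}[U(\tilde{W}^\delta_t)] \geq \mathrm{E}[U(W_t)],
\]
whence $V^{\mathrm{SO}}_t(w,\varphi,s;U) \geq \mathrm{E}[U(W_t)]$; taking the supremum over $(\zeta_r)$ closes the argument.

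The pleasant feature of this approach is its indifference to the regime of $h(\infty)$: we never attempt to extract cash from the terminal impulse (a quantity that degenerates to $0$ when $h(\infty)=\infty$), relying only on the structural fact that selling cannot destroy cash. The main — and really only — technical step is the uniform integrability at the last display, which reduces to standard $L^p$ control of $\sup_{r\leq t}S_r$ under the SDE with Lipschitz, bounded coefficients; the rest is bookkeeping on progressive measurability and the definition of the modified strategy.
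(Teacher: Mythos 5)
Your proof is correct and follows essentially the same route as the paper: append a uniform sell-out spike $\varphi_{t-\delta}/\delta$ over $(t-\delta,t]$ to any admissible strategy, note the spike can only increase terminal cash, and send $\delta\downarrow 0$. The only difference is the last step: the paper uses the inequality $\tilde{W}^\delta_t \geq W_{t-\delta}$ directly and concludes by the monotone convergence theorem (since $W$ and $U$ are non-decreasing), which sidesteps the extra bound $W_t - W_{t-\delta} \leq K\delta\sup_r S_r$ and the attendant moment-estimate/Vitali argument in your write-up; your version is sound but does more work than needed.
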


\begin{proof}
The relation $V^{\mathrm {SO}}_t(w,\varphi ,s ; U) \leq  V_t(w,\varphi ,s ; u)$ is trivial, 
so we will show only the assertion $V^{\mathrm {SO}}_t(w,\varphi ,s ; U) \geq  V_t(w,\varphi ,s ; u)$. 
Take any $(\zeta _r)_r\in \mathcal {A}_t(\varphi )$ and let 
$(W_r, \varphi _r, S_r)_r = \Xi _1(w, \varphi ,s ; (\zeta _r)_r)$. 
Moreover take any $\delta \in (0, t)$. 
We define an execution strategy $(\zeta ^\delta _r)_r\in \mathcal {A}^{\mathrm {SO}}_t(\varphi )$ by 
$\zeta ^\delta _r = \zeta _r\ (r\in [0, t-\delta ]), \ \ 
\varphi _{t-\delta }/\delta  \ (r\in (t - \delta , t])$. 
Let $(W^\delta _r, \varphi ^\delta _r, S^\delta _r)_r = \Xi _1(w, \varphi ,s ; (\zeta ^\delta _r)_r)$. 
Then we have 
$W_{t-\delta } = W^\delta _{t-\delta } \leq  W^\delta _t$. 
Thus we get 
$\E [U(W_{t-\delta })]\leq \E [U(W^\delta _t)]\leq V^{\mathrm {SO}}_t(w, \varphi , s ; U)$. 
Letting $\delta \downarrow 0$, we have 
$\E [U(W_t)]\leq V^{\mathrm {SO}}_t(w, \varphi , s ; U)$ by the monotone convergence theorem 
(note that $U(W_{t-\delta })\geq U(w) > -\infty $). 
Since $(\zeta _r)_r\in \mathcal {A}_t(\varphi )$ is arbitrary, we obtain the assertion. 
\end{proof}

By Theorem \ref {thesame}, we see that the sell-off condition $\int ^t_0\zeta _rdr = \varphi $ makes no change 
in the (value of the) value function. 
No such phenomenon is observed in a discrete-time framework; 
although the value function in a discrete-time model in Section \ref {sec_discrete} 
may depend on whether the sell-off condition is imposed, 
in the continuous-time model we need not worry about such a condition.

When $g(\zeta )$ is linear, we can apply the variable reduction method ($9'$)--($12'$) in \cite {Lions-Lasry}\footnote{The author thanks Professor N. Touzi for pointing out this reference.} 
to obtain the following: 

\begin{thm} \ \label{th_LL}Assume $g(\zeta ) = \alpha \zeta $ for $\alpha > 0$. \\
$\mathrm {(i)}$ \ $V^\mathrm {SO}_t(w, \varphi , s ; U) = \overline{V}^\varphi _t
\left( w + \frac{1 - e^{-\alpha \varphi }}{\alpha }s, e^{-\alpha \varphi }s ; U\right)$, where 
\begin{eqnarray*}
\overline{V}^\varphi _t(\bar{w}, \bar{s} ; U) &=& \sup _{(\bar{\varphi }_r)_r\in \overline {\mathcal {A}}_t(\varphi )}
\E [U(\bar{W}_t)]\\
&&\hspace{7mm}\mathrm {s.t.}\hspace{6.3mm}d\bar{S}_r = 
e^{-\alpha \bar{\varphi }_r}\hat{b}(\bar{S}_re^{\alpha \bar{\varphi }_r})dr + 
e^{-\alpha \bar{\varphi }_r}\hat{\sigma }(\bar{S}_re^{\alpha \bar{\varphi }_r})dB_r, \\
&&\hspace{17mm}d\bar{W}_r = \frac{e^{\alpha \bar{\varphi }_r} - 1}{\alpha }d\bar{S}_r, \\
&&\hspace{20mm}\bar{S}_0 = \bar{s}, \ \ \bar{W}_0 = \bar{w}, 
\end{eqnarray*}
and
\begin{eqnarray*}
\overline {\mathcal {A}}_t(\varphi ) = \left\{ \left( \varphi - \int ^r_0\zeta _vdv\right) _{0\leq r\leq t}\ ; \ 
(\zeta _r)_{0\leq r\leq t} \in \mathcal {A}^\mathrm {SO}_t(\varphi )\right\} . 
\end{eqnarray*}
$\mathrm {(ii)}$ \ If $U$ is concave and $\hat{b} \leq 0$, then 
\begin{eqnarray}\label{eq_LL}
V^\mathrm {SO}_t(w, \varphi , s ; U) = U\left( w + \frac{1 - e^{-\alpha \varphi }}{\alpha }s\right) . 
\end{eqnarray}
\end{thm}

A proof is given in Section \ref {sec_proof_LL}. 
Note that assertion (ii) is the same as Theorem 3 in \cite {Lions-Lasry}, and 
in this case we can get the explicit form of the value function. 
The right side of (\ref {eq_LL}) equals $Ju(w,\varphi ,s)$ for $u(w, \varphi , s) = U(w)$ and 
the nearly optimal strategy for $V^{\mathrm {SO}}_t(w,\varphi ,s ; U) = V_t(w,\varphi ,s ; u)$ is given by (\ref {almost_block}).

\section{Examples}\label{sec_eg}
\setcounter{thm}{0}

In this section, we consider two examples of our model. 
Let $b(x) \equiv -\mu $ and $\sigma (x) \equiv \sigma $ for some 
constants $\mu , \sigma \geq 0$ and suppose $\tilde{\mu } = \mu - \sigma ^2/2 > 0$. 
We assume that the trader has a risk-neutral utility function 
$u(w, \varphi ,s) = u_{\mathrm {RN}}(w, \varphi ,s) = w$. 
Note that we can replace the stochastic control problem $V_t(w, \varphi ,s ; u_{\mathrm {RN}})$ with 
the deterministic control problem $f(t, \varphi )$, where 
\begin{eqnarray*}
f(t, \varphi ) &=& \sup _{(\zeta _r)_r\in \mathcal {A}^\mathrm {det}_t(\varphi )}\tilde{f}(t, \varphi ; (\zeta _r)_r), \\
\tilde{f}(t, \varphi ; (\zeta _r)_r) &=& 
\int ^t_0\zeta _r\exp \left( -\tilde {\mu }r - \int ^r_0g(\zeta _v) dv \right) dr, \\
\mathcal {A}^\mathrm {det}_t(\varphi ) &=& 
\{ (\zeta _r)_r\in \mathcal {A}_t(\varphi )\ ; \ 
(\zeta _r)_r \mathrm {\ is\ deterministic} \}. 
\end{eqnarray*}
Indeed, 
\begin{prop} \ \label{prop_deterministic} 
$V_t(w, \varphi ,s ; u_{\mathrm {RN}}) = w + sf(t, \varphi )$. 
\end {prop}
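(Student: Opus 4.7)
The plan is to reduce the stochastic control problem to the deterministic one $f(t, \varphi)$ by exploiting the explicit form of $S$ in this special setting. I would start by solving the linear SDE in $(\ref{value_conti_S})$: under the constants $b \equiv -\mu$ and $\sigma(x) \equiv \sigma$, one has $\hat{\sigma}(s) = \sigma s$ and $\hat{b}(s) = -\tilde{\mu} s$, so
$$S_r = s \cdot N_r \cdot L_r^{\zeta} \cdot e^{-\tilde{\mu} r}, \quad N_r := \exp\left(\sigma B_r - \frac{\sigma^2}{2} r\right), \quad L_r^{\zeta} := \exp\left(-\int_0^r g(\zeta_v) \, dv\right),$$
where $(N_r)$ is the Dol\'eans exponential, a positive $(\mathcal{F}_r)$-martingale with $\E[N_r] = 1$. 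Since $u(w, \varphi, s) = w$ and $dW_r = \zeta_r S_r \, dr$, Tonelli then gives
$$V_t(w, \varphi, s; u) - w = s \sup_{(\zeta_r) \in \mathcal{A}_t(\varphi)} \int_0^t e^{-\tilde{\mu} r} \, \E\left[\zeta_r L_r^{\zeta} N_r\right] dr.$$

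For the upper bound $V_t \leq w + s f(t, \varphi)$, my next step is a martingale ``peel-off'' followed by a pathwise comparison. Since $\zeta_r L_r^{\zeta}$ is $\mathcal{F}_r$-measurable and $\E[N_t \mid \mathcal{F}_r] = N_r$, the tower property gives $\E[\zeta_r L_r^{\zeta} N_r] = \E[\zeta_r L_r^{\zeta} N_t]$, and a second application of Tonelli rewrites the integral as $\E[N_t \Psi(\zeta)]$, where $\Psi(\zeta)(\omega) := \int_0^t e^{-\tilde{\mu} r} \zeta_r(\omega) L_r^{\zeta(\cdot, \omega)} dr$. The crucial point is that $\Psi$ depends on $\zeta$ only pathwise: for a.e.\ $\omega$ in the full-measure event $\{\int_0^t \zeta_r \, dr \leq \varphi\}$, the path $r \mapsto \zeta_r(\omega)$ is non-negative, bounded (by the uniform bound built into the definition of $\mathcal{A}_t(\varphi)$), and integrates to at most $\varphi$, so it is a legitimate element of $\mathcal{A}^{\mathrm{dec}}_t(\varphi)$. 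Thus $\Psi(\zeta(\cdot, \omega)) \leq f(t, \varphi)$ almost surely, and multiplying by $N_t \geq 0$ and taking expectation yields $\E[N_t \Psi(\zeta)] \leq f(t, \varphi) \E[N_t] = f(t, \varphi)$.

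The reverse inequality is immediate: every $\bar{\zeta} \in \mathcal{A}^{\mathrm{dec}}_t(\varphi)$ lies in $\mathcal{A}_t(\varphi)$, and because $\bar{\zeta}$ (hence $L^{\bar{\zeta}}$) is deterministic, $\E[\bar{\zeta}_r N_r L_r^{\bar{\zeta}}] = \bar{\zeta}_r L_r^{\bar{\zeta}}$, so taking the supremum produces $V_t \geq w + s f(t, \varphi)$. The only conceptual step -- and therefore the main obstacle -- is the pathwise decoupling used in the second paragraph. It works precisely because (a) $b$ and $\sigma$ are constant, so $N$ is exogenous to the control, and (b) the utility is linear in $w$, so all randomness enters via the single mean-one factor $N_t$, which can be pulled out via the tower property.
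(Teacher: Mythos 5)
Your proof is correct, but it follows a genuinely different route from the paper's. The paper goes back to discrete time: a mathematical induction (the paper's Lemma~\ref{deterministic_discrete}) shows that $V^n_k(w,\varphi,s;u) = w + sF^n_k(\varphi)$, which forces the discrete-time optimizers to be deterministic; passing to the continuous limit via the proof of Theorem~\ref{converge} then yields a deterministic optimizer in continuous time, after which $\E[S_t]$ and $\E[W_t]$ are computed explicitly for deterministic $(\zeta_r)$. Your argument stays entirely in continuous time: you solve the linear SDE to factor $S_r = s\,N_r\,L_r^{\zeta}\,e^{-\tilde\mu r}$ with $N$ an exogenous martingale of mean one, peel off $N_t$ with the tower property and Tonelli to get $\E[W_t]-w = s\,\E[N_t\,\Psi(\zeta)]$, and then observe that $\Psi(\zeta)(\omega) \le f(t,\varphi)$ pathwise because each sample path $r\mapsto\zeta_r(\omega)$ is itself a feasible element of $\mathcal{A}^{\mathrm{dec}}_t(\varphi)$ (nonnegative, Borel-measurable a.e.\ $\omega$ by progressive measurability, bounded by the uniform bound in the definition of $\mathcal{A}_t(\varphi)$, and integrating to at most $\varphi$ on a full-measure set). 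This is self-contained and more elementary---no induction, no dynamic programming, and no appeal to the convergence proof---and it makes transparent exactly why the conclusion needs $b,\sigma$ constant and $u$ linear in $w$: those are what make $N$ control-independent and let the single factor $N_t$ carry all the randomness. What the paper's route buys is that Lemma~\ref{deterministic_discrete} is a clean structural fact about the discrete-time model that fits the paper's general program of deriving continuous-time results as limits, and Theorem~\ref{converge} was already available; but as a proof of this particular proposition, yours is shorter and cleaner.
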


This is proved in Section \ref {proof_deterministic}. 
By Proposition \ref {prop_deterministic}, we see that 
\begin{eqnarray*}
\frac{\partial }{\partial s}V_t(w, \varphi , s ; u_{\mathrm {RN}}) = f(t, \varphi ) > 0, \ \ t, \varphi > 0, 
\end{eqnarray*}
which implies (\ref {cond_diff}). 

\subsection{Log-Linear Impact}\label{sec_linear_eg}

Set $g(\zeta ) = \alpha \zeta $ for $\alpha > 0$. 
The following theorem is a direct consequence of Theorem \ref {th_LL}(ii). 

\begin{thm}\label{th_eg} \ It holds that 
\begin{eqnarray}\label{eg}
V_t(w,\varphi ,s ; u_{\mathrm {RN}}) = w+\frac{1-e^{-\alpha \varphi }}{\alpha }s 
\end{eqnarray}
for all $t\in (0,1]$ and $(w,\varphi ,s)\in D$. 
\end{thm}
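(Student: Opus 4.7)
The plan is to reduce to the deterministic control problem $f(t,\varphi)$ via Proposition \ref{prop_deterministic} and then compute it explicitly by integration by parts. Since $V_t(w,\varphi,s;u) = w + sf(t,\varphi)$, it suffices to show that $f(t,\varphi) = (1 - e^{-\alpha\varphi})/\alpha$ for every $t\in(0,1]$ and $\varphi\in[0,\Phi_0]$.

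For any $(\zeta_r)_r \in \mathcal{A}^{\mathrm{dec}}_t(\varphi)$, writing $\varphi_r = \varphi - \int_0^r \zeta_v\,dv$ we have $\dot\varphi_r = -\zeta_r$ and $\int_0^r g(\zeta_v)\,dv = \alpha(\varphi - \varphi_r)$, so the integrand in the definition of $f$ becomes
\begin{equation*}
\zeta_r\exp(-\tilde\mu r)e^{-\alpha(\varphi-\varphi_r)}
= -\frac{e^{-\alpha\varphi}}{\alpha}e^{-\tilde\mu r}\frac{d}{dr}e^{\alpha\varphi_r}.
\end{equation*}
Integration by parts (which is valid because $r\mapsto \varphi_r$ is absolutely continuous) yields
\begin{equation*}
\int_0^t \zeta_r e^{-\tilde\mu r - \alpha(\varphi-\varphi_r)}\,dr
= \frac{1}{\alpha}\Bigl[1 - e^{-\tilde\mu t}e^{\alpha(\varphi_t-\varphi)} - \tilde\mu\int_0^t e^{\alpha(\varphi_r-\varphi)}e^{-\tilde\mu r}\,dr\Bigr].
\end{equation*}
I would then use the elementary inequality $\varphi_r \geq 0$ for every $r\in[0,t]$, which gives $e^{\alpha(\varphi_r - \varphi)} \geq e^{-\alpha\varphi}$ pointwise. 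Plugging this into the two negative terms on the right gives the upper bound
\begin{equation*}
\int_0^t \zeta_r e^{-\tilde\mu r - \alpha(\varphi-\varphi_r)}\,dr
\leq \frac{1}{\alpha}\Bigl[1 - e^{-\alpha\varphi}\bigl(e^{-\tilde\mu t} + (1-e^{-\tilde\mu t})\bigr)\Bigr]
= \frac{1 - e^{-\alpha\varphi}}{\alpha}.
\end{equation*}
This is uniform in the admissible strategy, so $f(t,\varphi) \leq (1-e^{-\alpha\varphi})/\alpha$.

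For the matching lower bound I would exhibit a sequence of admissible strategies that approximate an instantaneous block execution at time $0$. Concretely, for large $K$ set $\zeta^K_r = K$ for $r\in[0,\varphi/K]$ and $\zeta^K_r = 0$ afterwards; this lies in $\mathcal{A}^{\mathrm{dec}}_t(\varphi)$ once $K > \varphi/t$, and a direct computation gives
\begin{equation*}
\int_0^t \zeta^K_r e^{-\tilde\mu r-\alpha\int_0^r\zeta^K_v dv}dr
= \frac{K}{\tilde\mu + \alpha K}\bigl(1 - e^{-(\tilde\mu/K+\alpha)\varphi}\bigr) \xrightarrow[K\to\infty]{} \frac{1-e^{-\alpha\varphi}}{\alpha}.
\end{equation*}
Combining the two bounds yields the claim.

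The only subtle point is that the supremum is not attained inside $\mathcal{A}_t(\varphi)$ (each admissible $\zeta$ is bounded, so one cannot literally execute instantaneously); the value matches precisely the right-hand boundary term $Ju$ predicted by Theorem \ref{conti}(ii) in the formal limit $h(\infty) = \alpha$, which is a good sanity check. The main obstacle is really just being careful with the integration-by-parts argument and verifying that the approximating sequence is genuinely admissible for every fixed large $K$; neither is deep.
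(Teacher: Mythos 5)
Your proof is correct and follows essentially the same route as the paper's: reduce to the deterministic problem $f(t,\varphi)$ via Proposition \ref{prop_deterministic}, establish the upper bound $f(t,\varphi)\le (1-e^{-\alpha\varphi})/\alpha$, and obtain the matching lower bound by approximating an instantaneous block execution (your strategy $\zeta^K_r = K\,1_{[0,\varphi/K]}$ is exactly the paper's $\zeta^\delta_r = (\varphi/\delta)\,1_{[0,\delta]}$ with $K=\varphi/\delta$). The only difference is in the upper bound: the paper simply drops the factor $e^{-\tilde\mu r}\le 1$ and recognizes the remaining integral as $\int_0^t e^{-\alpha\eta_r}\,d\eta_r = (1-e^{-\alpha\eta_t})/\alpha$, which is shorter than your integration-by-parts computation, though both are correct.
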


The right side of (\ref {eg}) 
converges to $w+\varphi s$ as $\alpha \downarrow 0$, which is the profit 
gained by choosing the so-called block liquidation execution strategy, that is,
by a trader selling all shares $\varphi $ at $t = 0$ when there is no MI. 
Theorem \ref {th_eg} implies that the optimal strategy in this case is 
to liquidate all shares, dividing infinitely within an infinitely short time at $t = 0$. 
This is almost the same as a block liquidation at the initial time, and 
the trader does not delay the execution time (although MI lowers the profit from the liquidation). 
Therefore, we cannot see any essential influence of the MI in this example.

\begin{remark}
We can also obtain the (nearly) optimal strategies in the cases of $\tilde {\mu } < 0$ and $\tilde{\mu } = 0$. 
When $\tilde{\mu } < 0$, the nearly optimal strategy is 
the (almost) block liquidation at the terminal time. 
When $\tilde{{\mu }} = 0$, each strategy in $\mathcal {A}^\mathrm {SO}_t(\varphi )$ makes 
the same profit: in other words, all the strategies in $\mathcal {A}^\mathrm {SO}_t(\varphi )$ are optimal. 
\end{remark}

\subsection{Log-Quadratic Impact}\label{sec_quad_eg}

In this subsection we consider the case of a strictly convex MI function. 
Set $g(\zeta ) = \alpha \zeta ^2$ for $\alpha > 0$. 
Note that $h(\zeta )/\zeta = 2\alpha > 0$, 
and thus the value function in this example is the unique viscosity solution of (\ref {HJB2}) 
with boundary conditions (\ref {bdd_cond}), by Theorems \ref {HJB_th} and \ref {unique_th}. 

As we will see, we can derive the explicit form of an optimal strategy 
when $\varphi $ is sufficiently small or large. 
However, when $\varphi $ is not sufficiently small, such a strategy has unbounded execution speed and 
is not subject to $\mathcal {A}_t(\varphi )$. 
Thus we extend the set of admissible strategies: 
\begin{eqnarray*}
\tilde {\mathcal {A}}_t(\varphi ) &=& \big\{ (\zeta _r)_{0\leq r\leq t}\ ; \ 
(\mathcal {F}_r)_r\mbox{-adapted}, \ \zeta _r\geq 0, \ \int ^t_0\zeta _rdr \leq \varphi  \\&&
\hspace{23mm}\mathrm {and} \ 
\sup _{(r,\omega )\in [0,t-\varepsilon ]\times \Omega }\zeta _r(\omega ) < \infty  \ \mathrm {for\ all} \ \varepsilon \in (0,t)\big\} , \\
\tilde {\mathcal {A}}^{\mathrm {det}}_t(\varphi ) &=& 
\{ (\zeta _r)_r\in \tilde{\mathcal {A}}_t(\varphi )\ ; \ 
(\zeta _r)_r \mathrm {\ is\ deterministic} \} 
\end{eqnarray*}
to allow unbounded execution speed at $t$. 
We can see that the value of $f(t, \varphi )$ \vspace{1mm}
does not change by replacing $\mathcal {A}^{\mathrm {det}}_t(\varphi )$ with $\tilde {\mathcal {A}}^{\mathrm {det}}_t(\varphi )$. 
Indeed, for each $(\zeta _r)_r\in \tilde {\mathcal {A}}^\mathrm {det}_t(\varphi )$, 
the integrability of $\zeta _r$ on $[0, t]$ 
$\left(\mathrm {i.e.} \int ^t_0\zeta _rdr\leq \varphi < \infty \right) $, 
the dominated convergence theorem, and the continuity of $f(t, \varphi )$ in $t$ 
(this is obtained by Theorem \ref {conti}(i) and Proposition \ref {prop_deterministic}) imply 
\begin{eqnarray}\nonumber 
\tilde{f}(t, \varphi ; (\zeta _r)_r)  &=& 
\lim _{\varepsilon \rightarrow 0}\tilde{f}(t - \varepsilon , \varphi ; (\zeta _r)_r)\\\label{tilde_eq}
&\leq & 
\lim _{\varepsilon \rightarrow 0}f(t-\varepsilon , \varphi ) = f(t, \varphi ). 
\end{eqnarray}
So we get 
\begin{eqnarray*}
f(t, \varphi ) = \sup _{(\zeta _r)_r\in \tilde{\mathcal {A}}^{\mathrm {det}}_t(\varphi )}
\tilde{f}(t, \varphi ; (\zeta _r)_r). 
\end{eqnarray*}
Thus, we can also restrict the set of admissible strategies of 
$V_t(w, \varphi ,s  ; \allowbreak  u_{\mathrm RN})$ 
to $\tilde {\mathcal {A}}^{\mathrm {det}}_t(\varphi )$ by 
Proposition \ref {prop_deterministic}. 

We define the functions $\hat{v}^i(t, w, \varphi , s)$ and $\hat{\zeta }^i_r$, $i = 1, 2$, by 
\begin{eqnarray*}
\hat{v}^1(t,w,\varphi ,s) = w+
\frac{s\sqrt{1-e^{-2\tilde{\mu } t}}}{2\sqrt{\alpha \tilde{\mu } }}, \ \ 
\hat{\zeta }^1_r = 
\sqrt{\frac{\tilde{\mu }}{\alpha (1-e^{-2\tilde{\mu } (t-r)})}} 
\end{eqnarray*}
and 
\begin{eqnarray*}
\hat{v}^2(t,w,\varphi ,s) = w + \frac{s}{2\sqrt{\alpha \tilde{\mu }}}
( 1 - e^{-2\sqrt{\alpha \tilde{\mu }}\varphi }), \ \ 
\hat{\zeta }^2_r = 
\sqrt {\frac{\tilde{\mu }}{\alpha }}1_{[0,\varphi \sqrt{\alpha/\tilde{\mu }}]}(r). 
\end{eqnarray*}
Moreover we set 
\begin{eqnarray}
\hat{\Phi }^1(t)= \frac{\mathrm {arctanh}\sqrt{1-e^{-2\tilde{\mu }t}}}{\sqrt{\alpha \tilde{\mu }}}, \ \ 
\hat{\Phi }^2(t)= \sqrt{\frac{\tilde{\mu }}{\alpha }}t. 
\end{eqnarray}
Then we have the following theorem: 

\begin{thm} \ \label{th_eg2}\\
$(\mathrm {i})$ \ If $ \varphi \geq \hat{\Phi }^1(t)$, then 
$V_t(w, \varphi ,s ; u_{\mathrm {RN}}) =\hat{v}^1(t, w, \varphi , s)$ and 
$(\hat{\zeta }^1_r)_r$ is an optimal strategy. \\
$(\mathrm {ii})$ \ If $\varphi \leq \hat{\Phi }^2(t)$, then 
$V_t(w, \varphi ,s ; u_{\mathrm {RN}}) =\hat{v}^2(t, w, \varphi , s)$ and 
$(\hat{\zeta }^2_r)_r$ is an optimal strategy. 
\end{thm}

\begin{proof}
Let $(\hat{W}^i_r, \hat{\varphi }^i_r, \hat{S}^i_r)_r = \Xi _t(w, \varphi , s ; (\hat{\zeta }^i_r)_r)$ for $i = 1, 2$. 
Straightforward calculation shows that 
$\E [\hat{W}^i_t] = \hat{v}^i(t, w, \varphi , s)$. 
Then we have $\hat{v}^i(t, w, \varphi ,s) \leq V_t(w, \varphi ,s ; u_{\mathrm {RN}})$. 
Since $\hat{v}^i$ satisfies (\ref {HJB}) at $(t, w, \varphi , s)$, we see 
that $\hat{v}^i(t, w, \varphi ,s) \geq V_t(w, \varphi ,s ; \allowbreak u_{\mathrm {RN}})$ 
by Theorem 5.2.1 in \cite{Nagai}, 
thus fulfilling the assertion. 
\end{proof}

This theorem implies that the form of optimal strategies and value functions varies, depending on 
the amount of the security holdings $\varphi $. 
If a trader has a small amount of securities, then we have case (ii) and the optimal strategy is to sell 
the entire holdings until the time $\varphi \sqrt{\alpha / \tilde{\mu }}$. 
If a trader has a large amount, then we have case (i) and the trader cannot finish the selling. 

We do not have an explicit form for $V_t(w, \varphi ,s ; u_{\mathrm {RN}})$ on the whole space, 
so we try to solve this example numerically. 
By Proposition \ref {prop_deterministic}, it suffices to consider the numerical calculation of $f(t, \varphi )$. 
Moreover, $f(t, \varphi )$ is approximated by $f^n_{[nt]}(\varphi )$ with large $n$, where 
\begin{eqnarray}\label{f_n_k1}
f^n_k(\varphi ) &=& \sup _{(\psi ^n_l)_l\in \mathcal {A}^{n, \mathrm {det}}(\varphi )}
\sum ^{k-1}_{l = 0}\psi ^n_l\exp \left( -\tilde{\mu }\times \frac{l}{n} - n\alpha \sum ^{l}_{m=0}(\psi ^n_m)^2\right) , \\\label{f_n_k3}
\mathcal {A}^{n, \mathrm {det}}_k(\varphi ) &=& 
\left\{ (\psi ^n_l)^{k-1}_{l=0}\subset [0, \varphi ]^k\ ; \ 
\sum ^{k-1}_{l = 0}\psi ^n_l \leq \varphi \right\} . 
\end{eqnarray}
In fact, 
the convergence of $f^n_{[nt]}$ to $f_t$ is given by the same proof as Theorem \ref {converge} in Section \ref {sec_discrete}. 
$f^n_k(\varphi )$ corresponds to a nonlinear optimisation problem with $k$ variables. 
We solve it numerically by the sequential quadratic programming method. 

It can be numerically verified that the convergence of $f^n_{[nt]}$ takes place before $n = 500$. 
Thus, we set $n = 500$ below and we regard $f^{500}_{[500t]}(\varphi )$ as a precise approximation of $f(t, \varphi )$. 
We set other parameters as $w = 0, s = 1, \alpha = 0.01$, and $\tilde {\mu }= 0.05$.

\begin{figure*}[tb]
\begin{center}
\includegraphics[height = 2.432cm,width=3.8cm]{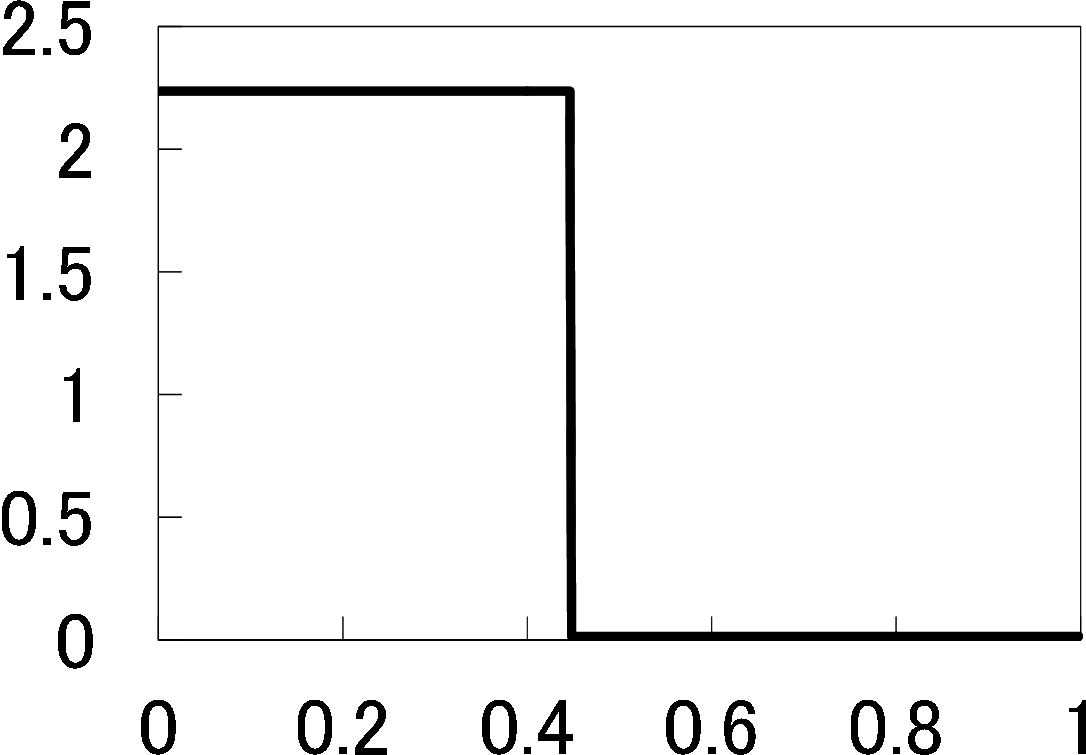}
\includegraphics[height = 2.432cm,width=3.8cm]{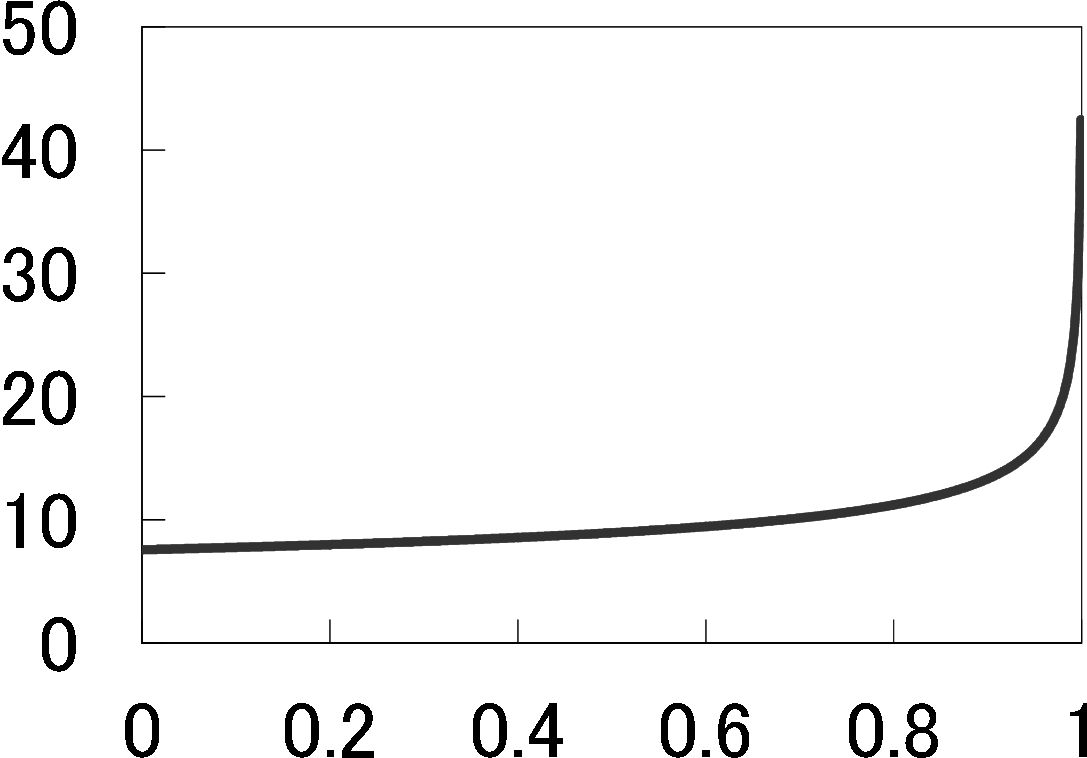}
\includegraphics[height = 2.432cm,width=3.8cm]{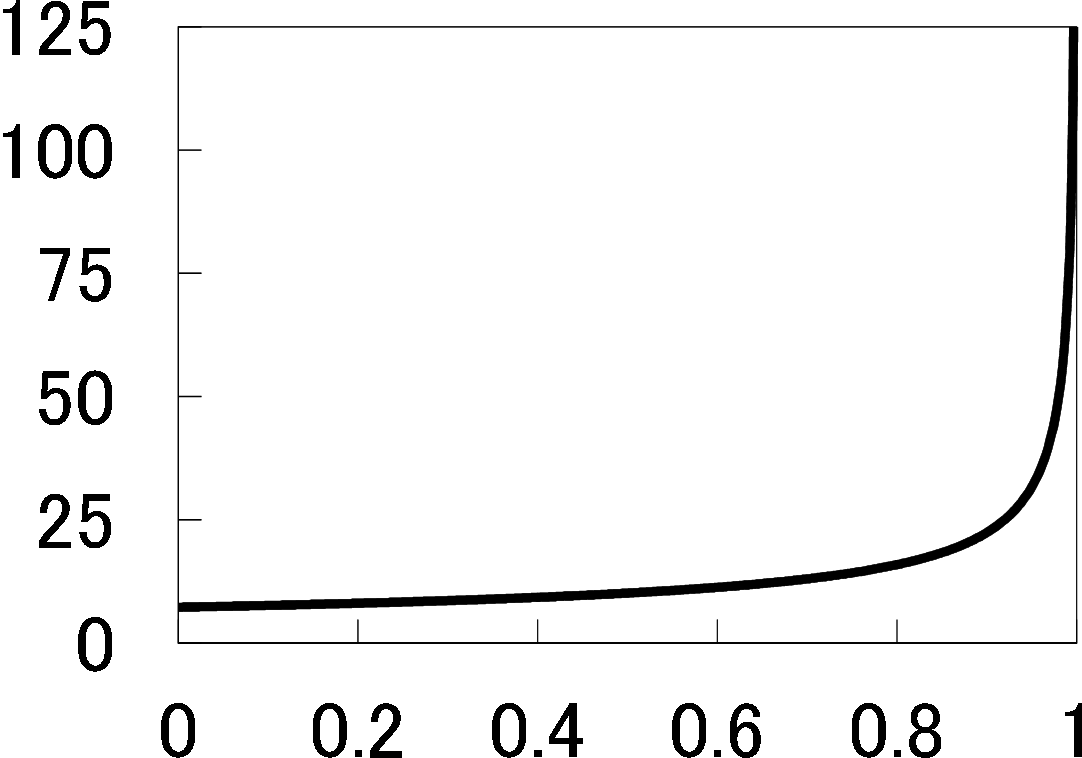}
\caption{The forms of optimal execution strategies $(\zeta _r)_r$. 
Horizontal axis is time $r$. 
Left: $\varphi = 1$. Centre: $\varphi = 10$. Right: $\varphi = 100$. 
In the centre graph, $(\zeta _r)_r$ was calculated numerically. }
\label{fig_1}
\end{center}
\end{figure*}

\begin{figure*}[tb]
\begin{center}
\includegraphics[height = 2.432cm,width=3.8cm]{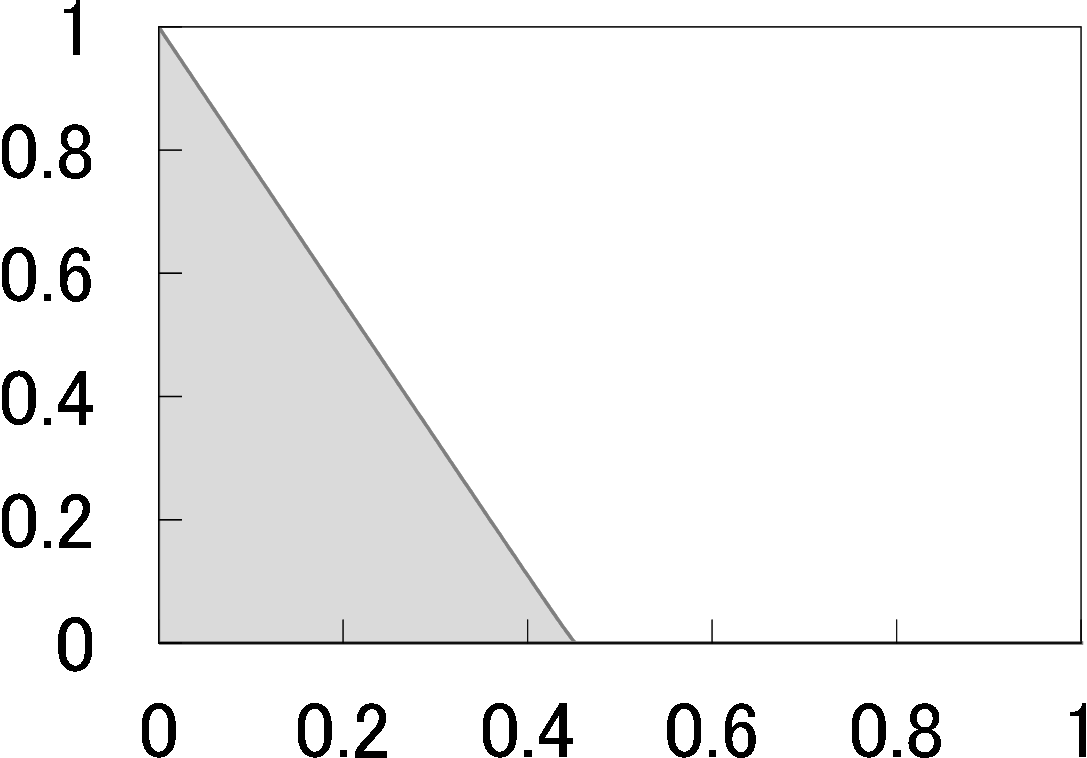}
\includegraphics[height = 2.432cm,width=3.8cm]{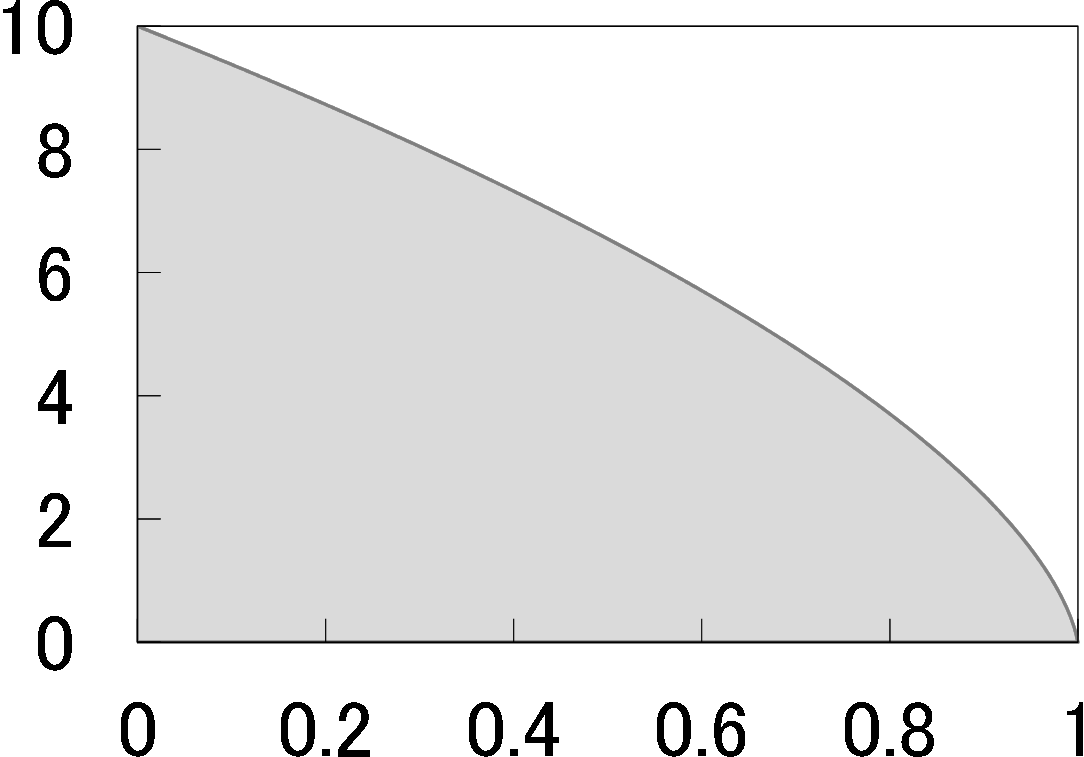}
\includegraphics[height = 2.432cm,width=3.8cm]{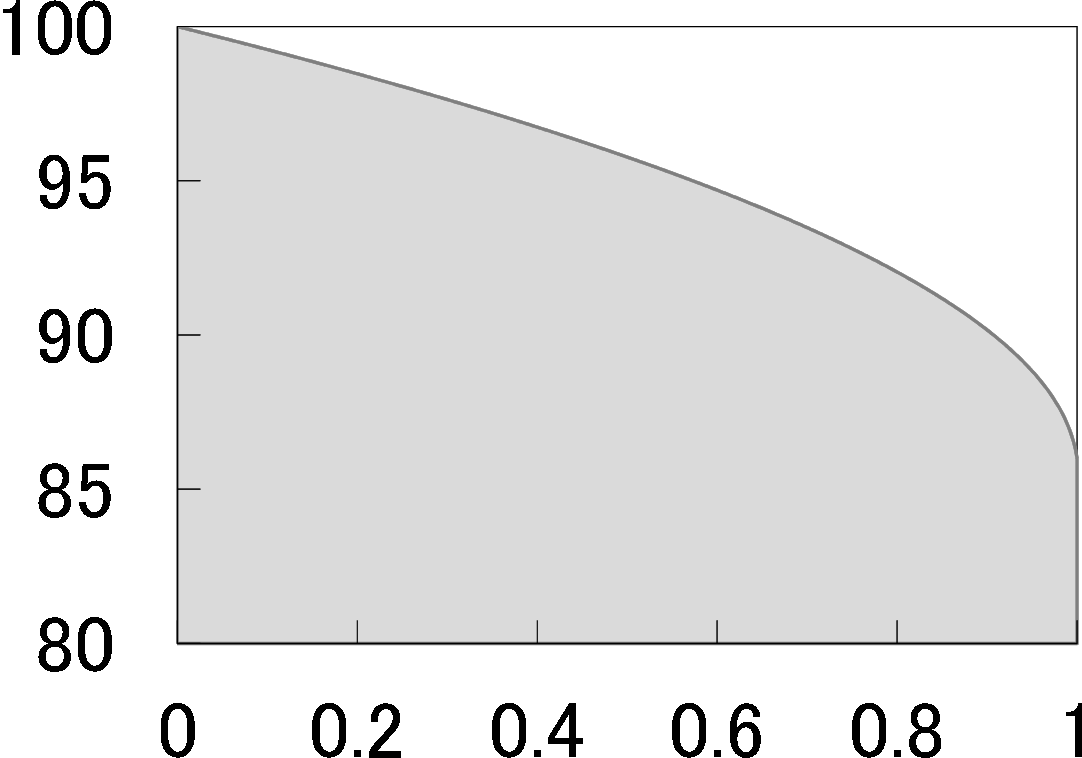}
\caption{The forms of the amount of security holdings $(\varphi _r)_r$ corresponding to optimal strategies. 
Horizontal axis is time $r$. 
Left: $\varphi = 1$. Centre: $\varphi = 10$. Right: $\varphi = 100$. 
In the centre graph, $(\varphi _r)_r$ was calculated numerically. }
\label{fig_2}
\end{center}
\end{figure*}

\begin{figure*}[htb]
 \begin{minipage}{0.5\hsize}
\begin{center}
\includegraphics[width=6cm]{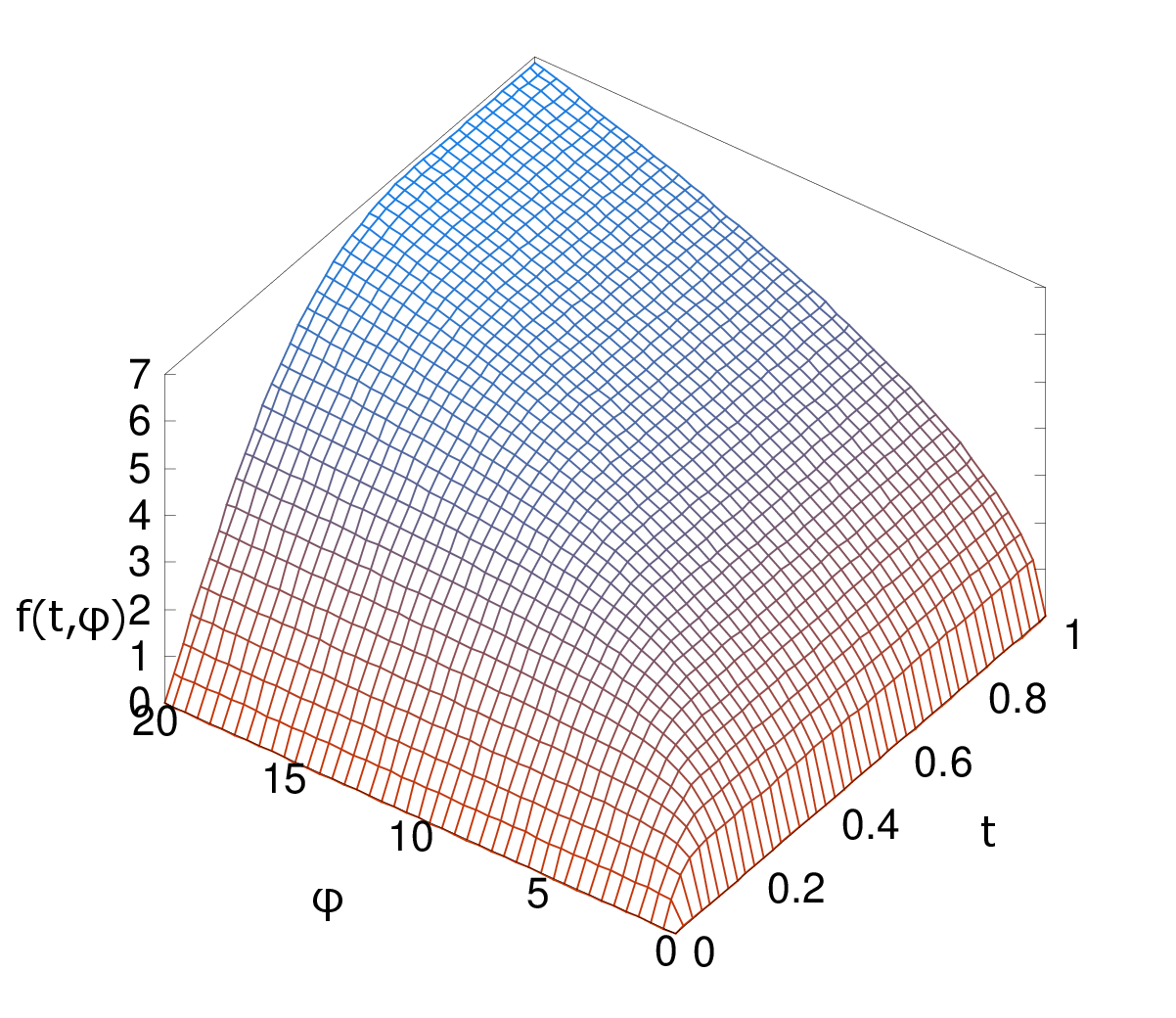}
\caption{The form of the function $f(t, \varphi )$.}
\label{fig_3}
\end{center}
 \end{minipage}
 \begin{minipage}{0.5\hsize}
\begin{center}
\includegraphics[width=6cm]{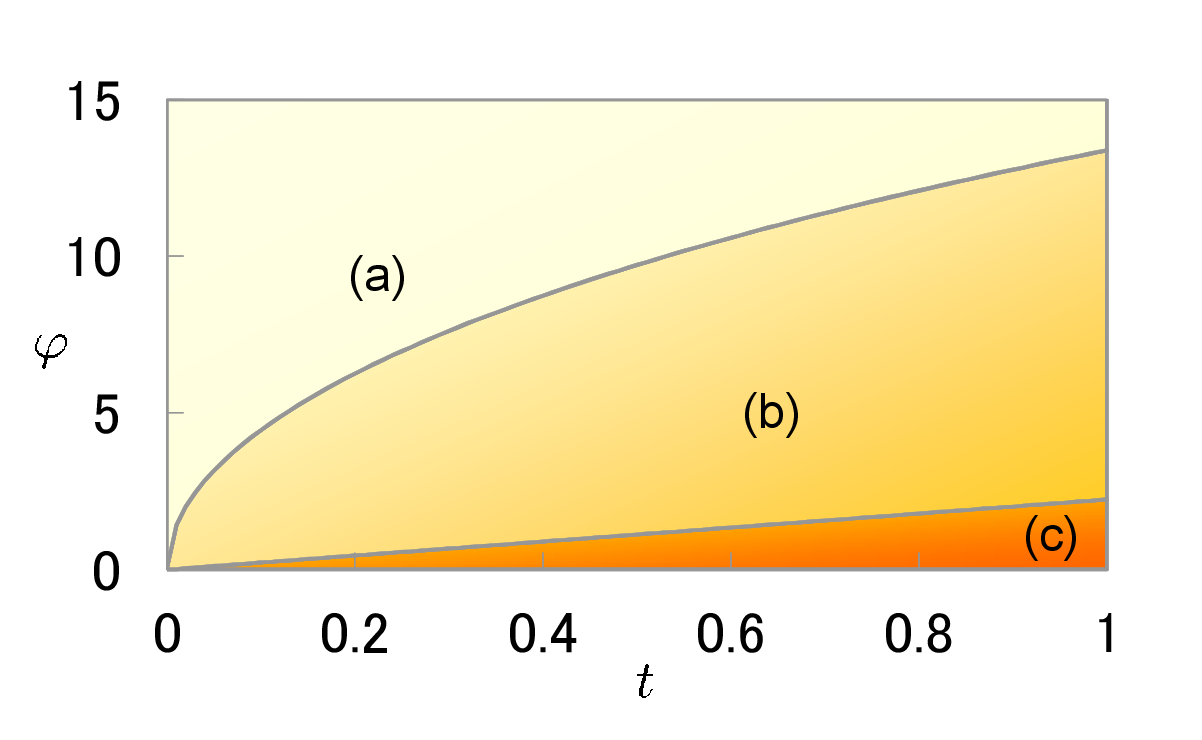}
\caption{The region of pairs $(t, \varphi )$. The region (a) (resp. (c)) corresponds to Theorem \ref {th_eg2} (i) (resp. (ii)). }
\label{fig_4}
\end{center}
 \end{minipage}
\end{figure*}

Figure \ref {fig_1} describes the form of the execution strategies and Figure \ref {fig_2} describes 
the form of the corresponding processes of the amount of a security with $\varphi = 1, 10$, and $100$. 
We also get the form of the function $f(t, \varphi )$ of Proposition \ref {prop_deterministic} numerically, 
as described in Figure \ref {fig_3}. 
If a pair $(t, \varphi )$ is in the range (a) of Figure \ref {fig_4}, then we have 
$f(t, \varphi ) = \sqrt{1-e^{-2\tilde
{\mu } t}} / (2\sqrt{\alpha \tilde{\mu } })$, 
and if $(t, \varphi )$ is in the range (c), we have 
$f(t, \varphi ) = 
(1 - e^{-2\sqrt{\alpha \tilde{\mu }}\varphi }) / (2\sqrt{\alpha \tilde{\mu }})$. 
We have not had the form of $f(t, \varphi )$ analytically when $(t, \varphi )$ is in the range (b). 

Note that in case (i) we can also construct a nearly optimal strategy with the sell-off condition. 
Let $\hat{\zeta }^{1, \delta }_r = \hat{\zeta }^1_r\ (r \leq t-\delta ), \ 
(\varphi - \hat{\varphi }_{t-\delta })/\delta \ (t-\delta < r \leq t)$, where 
\begin{eqnarray*}
\hat{\varphi }_{t-\delta } = 
\frac{\mathrm{arctanh}\sqrt{1-e^{-2\tilde{\mu t}}} - 
\mathrm{arctanh}\sqrt{1-e^{-2\tilde{\mu }\delta }}}
{\sqrt{\alpha \tilde {\mu }}}. 
\end{eqnarray*}
Then $(\hat{\zeta }^{1, \delta }_r)_r\in \mathcal {A}^\mathrm {SO}_t(\varphi )$ and 
the corresponding expected profit $\E [\hat{W}^\delta _t]$ converges to 
$V_t(w, \varphi , s ; u_\mathrm {RN})$ as $\delta \rightarrow 0$.

\section{Concluding Remarks}\label{sec_summary}
\setcounter{thm}{0}

In this paper we studied the optimal execution problem when MI is considered. 
We mainly considered the case where the MI function is convex. 
This was done for both mathematical and financial reasons.
In a Black--Scholes type market, 
the optimal execution strategy of a risk-neutral trader is block liquidation when there is no MI. 
As we saw in Section \ref {sec_eg}, 
the form of the optimal strategy changes when MI is log-quadratic. 
In contrast, when MI is not convex, and especially when it is log-linear, 
the trader's optimal strategy is almost block liquidation. 

In the real market, however, many traders take their time selling, despite 
recognition that the MI is concave. 
One reason may be that the trader has a risk-averse utility function. 
We surmise another reason: 
the existence of a temporary (or transient) impact (see Remark \ref {rem_temporary}). 
Our examples treat only permanent impact, 
but we can also consider the case where MI disappears as time passes by price recovery effects: 
if the process of security prices follows some mean-reverting process, such as an Ornstein--Uhlenbeck process, 
then we may deal with the optimisation problem with MI and price recovery. 
We study such a case in \cite {Kato4}. 

It is also meaningful to characterise the value function as the solution of the corresponding HJB. 
We have shown that the value function is a viscosity solution under some strong assumptions. 
Such assumptions would  not be necessary if we considered only bounded strategies, 
but the control region of our model is unbounded. 
We avoid this difficulty by supposing (\ref {cond_diff}) 
which is satisfied in financially natural settings. 

In trading operations, 
the trader should execute trades while considering fluctuations of the price of other assets
(e.g., rebalancing an index fund). 
In \cite{Ishitani}, a multidimensional version of this model was studied to consider such a case. 
However, in the case of rebalancing, 
it is necessary to consider not only selling securities but also buying them. 
We should carefully formulate models of optimal execution 
so that no trader gets a free lunch when MI is large. 

The complete solution of our example in Section \ref {sec_quad_eg} is another remaining task. 
This is a representative example where a trading policy is strongly influenced by MI, and 
it would be interesting to solve this completely in future research. 

\section{Appendix A: Derivation of the Continuous-Time Model from the Discrete-Time Models}\label{sec_discrete}
\setcounter{thm}{0}

Here we construct a discrete-time model of an optimal execution with time interval $1/n$ ($n = 1, 2, 3, \ldots $). 
As in Section \ref {section_Model}, 
we prepare a filtered space $(\Omega ,\mathcal {F}, (\mathcal {F}_t)_{0\leq t\leq 1}, \allowbreak P)$ 
satisfying the usual condition and a one-dimensional $(\mathcal {F}_t)_t$-Brownian motion 
$(B_t)_t$. 
We assume that there are cash and a security 
and that the risk-free rate is equal to zero (i.e., the price of cash is $1$). 
We consider a single trader who has $\varphi $ shares of the security at the initial time and tries to sell them. 

Now we consider the situation of trading at each execution time 
$t = 0, 1/n, \allowbreak 2/n, \ldots ,$ 
and describe the effect of the trader's liquidation. 
For $l=0, \ldots , n$, we denote by $S^n_l$ the price of the security at time $l/n$ and 
$X^n_l = \log S^n_l$. 
Let $s > 0$ be the initial price (i.e., $S^n_0 = s$) and $X^n_0 = \log s$. 
If the trader sells $\psi ^n_l$ at time $l/n$, the log-price changes to 
$X^n_l-g_n(\psi ^n_l)$, where 
$g_n : [0,\infty )\longrightarrow [0,\infty )$ is a non-decreasing and continuously 
differentiable function which satisfies $g_n(0) = 0$. 
The function $g_n$ denotes the MI function in the discrete-time model: 
$g_n(\psi ^n_l)$ implies the impact of the liquidation of $\psi ^n_l$ shares 
for the log-price of the security. 
So the security price $S^n_l$ decreases to $S^n_l\exp (-g_n(\psi ^n_l))$ by the liquidation. 
Then the trader gets $\psi ^n_lS^n_l\exp (-g_n(\psi ^n_l))$ in cash as the proceeds of the liquidation. 
Thus, if we denote by $W^n_l$ (resp. $\varphi ^n_l$) the amount of the cash holdings (resp. security holdings), 
then we have the following relations 
\begin{eqnarray}\label{W_varphi}
W^n_{l+1} = W^n_l+\psi ^n_lS^n_l\exp (-g_n(\psi ^n_l)), \ \ 
\varphi ^n_{l+1} = \varphi ^n_l - \psi ^n_l. 
\end{eqnarray}
The former means the increase of the cash holdings and the latter means 
the decrease of the security holdings. 

After trading at time $l/n$, $X^n_{l+1}$ and $S^n_{l+1}$ are given by 
\begin{eqnarray}\label{fluctuate_X}
X^n_{l+1}=Y\Big (\frac{l+1}{n} ; \frac{l}{n}, X^n_l-g_n(\psi ^n_l)\Big ), 
\ \ S^n_{l+1} = \exp (X^n_{l+1}), 
\end{eqnarray}
where $Y(t ; r,y)$ is the solution of the SDE 
\begin{eqnarray}\label{SDE_Y}
\left\{
\begin{array}{ll}
 	dY(t; r,y) = \sigma (Y(t; r,y))dB_t+b(Y(t; r,y))dt,& t\geq r,	\\
 	\hspace{2mm}Y(r; r,y) = y.&
\end{array}
\right.
\end{eqnarray}
Note that if the trader makes no liquidation, then the unaffected log-price $X^n_l$ 
coincides with $Y(l/n ; 0, x)$. 
The first equation of (\ref {fluctuate_X}) describes the fluctuation of the log-price 
as time passes from $l/n$ (with the affected log-price $X^n_l - g_n(\psi ^n_l)$) to $(l+1)/n$. 

Here we give a class of our execution strategies. 
Let $\mathcal {A}^n_k(\varphi )$ be the set of strategies $(\psi ^n_l)^{k-1}_{l=0}$ such that 
$\psi ^n_l$ is $\mathcal {F}_{l/n}$-measurable, $\psi ^n_l\geq 0$ for any $l=0,\ldots ,k-1$, 
and $\sum ^{k-1}_{l=0}\psi ^n_l\leq \varphi $. 
We call $\mathcal {A}^n_k(\varphi )$ the set of admissible strategies. 
An admissible strategy is the sequence of random variables (liquidation volumes) $(\psi ^n_l)_l$ 
which are constructed by only selling with short-sale constraint 
(the trader does not buy and does not sell short). 

At the end of the time interval $[0,1]$ (i.e. $1 = n/n$), 
the trader has the amount of cash $W^n_n$ and 
the amount of the security $\varphi ^n_n$, which are determined by 
(\ref {W_varphi}) for each $l=0,\ldots ,n-1$ and initial values $W^n_0 = w,\ \varphi ^n_0 = \varphi $. 

Now we define our value function in the discrete-time model. 
For $(w, \varphi , s)\in D$, $k = 0, \ldots , n$ and $u\in \mathcal {C}$ 
(the definitions of $D$ and $\mathcal {C}$ are the same as Section \ref {section_Model}), set 
\begin{eqnarray}\label{def_discrete_value}
V^n_k(w,\varphi ,s ; u) = \sup _{(\psi ^n_l)^{k-1}_{l=0}\in \mathcal {A}^n_k(\varphi )}
\E [u(W^n_k,\varphi ^n_k, S^n_k)],
\end{eqnarray}
subject to (\ref {W_varphi}) and (\ref {fluctuate_X}) for $l=0,\ldots ,k-1$ and 
$(W^n_0,\varphi ^n_0,S^n_0) = (w,\varphi ,s)$ when $s > 0$. 
In the case of $s = 0$, we define $V^n_k(w, \varphi, 0 ; u) = u(w, \varphi , 0)$. 

Now we assume condition [A]: \vspace{2mm}\\
\hspace{0mm}[A] \ 
$\lim _{n\rightarrow \infty }\sup _{\psi \in [0,\Phi _0]}
\Big |\frac{d}{d\psi }g_n(\psi )-h(n\psi )\Big | = 0$. \vspace{2mm}

Recall that 
$h(\zeta ) = g'(\zeta )$ is a non-negative, non-decreasing, and continuous 
function (see Section \ref {section_Model}). 
Under condition [A], we see that $\varepsilon _n\longrightarrow 0$, where 
\begin{eqnarray}\label{conv_g}
\varepsilon _n = \sup _{\psi \in (0,\Phi _0]}
\Big| \frac{g_n(\psi )}{\psi }-\frac{g(n\psi )}{n\psi }\Big| . 
\end{eqnarray}
This implies the relation between the MI function in the discrete-time model 
and the one in the continuous-time model. 
The condition [A] roughly means the $C^1$-convergence of $g_n$ to $g$. 
Under [A], we can prove the following theorem. 

\begin{thm} \ \label{converge}
For any $(w,\varphi ,s)\in D$, $t\in [0,1]$ and $u\in \mathcal {C}$,
\begin{eqnarray}\label{th_2}
\lim _{n\rightarrow \infty }V^n_{[nt]}(w,\varphi ,s;u) = V_t(w,\varphi ,s;u), 
\end{eqnarray}
where $[nt]$ is the greatest integer less than or equal to $nt$. 
\end{thm}

A proof is given in Section \ref {proof_of_converge}. 
Theorem \ref {converge} implies the convergence of 
the discrete-time value function to the continuous-time one. 
In other words, our model in Section \ref {section_Model} 
is characterised as the limit of the discrete-time models.

\section{Appendix B: Proofs}\label{Proofs}
\setcounter{thm}{0}

\subsection{Preliminaries}\label{pre}

We introduce some lemmas used to prove our main results. 

\begin{lem}\label{cond_of_X2} For any $m\in \Bbb {N}$ there is a constant 
$C>0$ depending only on $b, \sigma $ and $m$ such that 
$\E [\hat{Z}(s)^m]\leq Cs^m$, where $\hat{Z}(s) = \sup _{0\leq t\leq 1}Z_t(s)$ and 
$Z_t(s)$ is defined in Theorem \ref{unique_th}. 
\end{lem}

\begin{proof} We may assume $s>0$. By the definition of $\hat{Z}(s)$, we have 
\begin{eqnarray*}
\E [\hat{Z}(s)^m]\leq s^m\E [\sup _{t\in [0,1]}\tilde{Z}_t] , 
\end{eqnarray*}
where $(\tilde{Z}_t)_t$ is given by $\tilde{Z}_0 = 1$ and 
\begin{eqnarray*}
d\tilde{Z}_t = m\tilde {Z}_t\sigma \left (\frac{1}{m}\log \tilde {Z}_t\right )dB_t + 
m\tilde {Z}_t\left\{ b\left (\frac{1}{m}\log \tilde {Z}_t\right ) + \frac{m}{2}\sigma \left (\frac{1}{m}\log \tilde {Z}_t\right )^2\right\} dt. 
\end{eqnarray*}
Using Corollary 2.5.10 in \cite{Krylov} for the process
$(\tilde {Z}_t)_t$, we have the assertion. 
\end{proof}

\begin{lem} \ \label{conti_u} 
Let $\Gamma _k$, $k\in \Bbb {N}$, be sets, $u\in \mathcal {C}$ and 
let $(W^i_{k,\gamma }, \varphi ^i_{k,\gamma }, S^i_{k,\gamma })\in D$, 
$\gamma \in \Gamma _k$, $k\in \Bbb {N}$, $i=1,2$, be random variables. 
Let $m_u\in \Bbb {N}$ be as in $(\ref {growth_C})$. Suppose 
\begin{eqnarray*}
\lim _{k\rightarrow \infty }\sup _{\gamma \in \Gamma _k} 
\E [|W^1_{k,\gamma }-W^2_{k,\gamma }| + |\varphi ^1_{k,\gamma }-\varphi ^2_{k,\gamma }| + 
|S^1_{k,\gamma }-S^2_{k,\gamma }|] = 0 
\end{eqnarray*}
and 
$\sum ^2_{i=1}\sup _{k\in \Bbb {N}}\sup _{\gamma \in \Gamma _k}
\E [(W^i_{k,\gamma })^{4m_u}+(S^i_{k,\gamma })^{4m_u}] < \infty $. 
Then
\begin{eqnarray*}
\lim _{k\rightarrow \infty }\sup _{\gamma \in \Gamma _k}
\big| \E [u(W^1_{k,\gamma }, \varphi ^1_{k,\gamma }, S^1_{k,\gamma })] - 
\E [u(W^2_{k,\gamma }, \varphi ^2_{k,\gamma }, S^2_{k,\gamma })]\big| = 0. 
\end{eqnarray*}
\end{lem}

This lemma is obtained by standard arguments using the Chebyshev inequality and 
the uniform continuity of $u(w,\varphi ,s)$ on $D_R$ for any $R>0$, 
where $D_R = [-R,R]\times [0,\Phi _0]\times [0,R]$. 

Using the Burkholder--Davis--Gundy inequality and the H\"older inequality, 
we have the following lemma: 

\begin{lem}\label{cond_of_X1} \ Let $t\in [0,1]$, $\varphi \geq 0$, $x\in \Bbb {R}$, 
$(\zeta _r)_{0\leq r\leq t}\in \mathcal {A}_t(\varphi )$ and let 
$(X_r)_{0\leq r\leq t}$ be given by $(\ref {SDE_X})$ with $X_0 = x$. 
Then there is a constant $C>0$ depending only on $b$ and $\sigma $, such that 
\begin{eqnarray*}
\E \Big [\sup _{r\in [r_0,r_1]}
\Big |X_r-X_{r_0}+\int ^r_{r_0}g(\zeta _v)dv\Big |^4\Big ] \leq C(r_1-r_0)^2, \ \ 
0\leq r_0\leq r_1\leq t. 
\end{eqnarray*}
\end{lem}

\begin{lem}\label{cond_of_X3} \ 
Let $t\in [0,1]$, $\varphi \geq 0$, $x\in \Bbb {R}$, 
$(\zeta _r)_{0\leq r\leq t}, (\zeta '_r)_{0\leq r\leq t}
\in \mathcal {A}_t(\varphi )$ and suppose 
$(X_r)_{0\leq r\leq t}$ $($resp., $(X'_r)_{0\leq r\leq t}$$)$ 
is given by $(\ref {SDE_X})$ with $(\zeta _r)_r$ 
$($resp., $(\zeta '_r)_r$$)$ and $X_0 = x \leq X'_0$. 
Suppose $\zeta _r\leq \zeta '_r$ for any $r\in [0,t]$ almost surely. 
Then $X_r\geq X'_r$ for any $r\in [0,t]$ almost surely. 
In particular, we have $\exp (X_r)\leq \hat{Z}(e^x)$. 
\end{lem}

This lemma is obtained by the same arguments as in the proof of 
Proposition 5.2.18 in \cite {Karatzas-Shreve}. 
Lemmas \ref {cond_of_X2} and \ref {cond_of_X3} imply 

\begin{lem} \ \label{nondecreasing_polygrowth}
For $t\in [0,1]$ and $u\in \mathcal {C}$, 
$V_t( w, \varphi ,s ; u)$ is non-decreasing in $w, \varphi $ and $s$, 
\noindent and has polynomial growth rate with respect to $w$ and $s$. 
\end{lem}

\subsection{Strategy-Restricted Value Functions}\label{S-R}

We prepare strategy-restricted value functions 
to prove Theorems \ref {conti} and \ref {semi}. For $L>0$, we define 
\begin{eqnarray*}
\mathcal {A}^L_t(\varphi ) &=& \{ (\zeta _r)_{0\leq r\leq t}\in \mathcal {A}_t(\varphi )\ ; \ 
\sup _{r,\omega }|\zeta _r(\omega )|\leq L \} , \\
V^L_t(w,\varphi ,s ; u) &=& \sup _{(\zeta _r)_{r\leq t}\in \mathcal {A}^L_t(\varphi )}
\E [u(W_t, \varphi _t, S_t)]. 
\end{eqnarray*}
We easily see that 
$V_t(w,\varphi ,s ; u) = 
\sup _{L>0}\allowbreak V^L_t(w,\varphi ,s ; u)$. 

Now we consider the continuity of $V^L_t(w,\varphi ,s ; u)$. 
Our purpose here is to prove the following proposition: 

\begin{prop} \ \label{conti_L}
$V^L_t(w,\varphi ,s ; u)$ is continuous with respect to $(t,w,\varphi ,s)\in [0,1]\times D$. 
\end{prop}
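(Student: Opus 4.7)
The plan is to prove joint continuity of $V^L_t(w,\varphi,s;u)$ by reducing, via Lemma \ref{conti_u}, to uniform $L^1$-continuity of the state triple $\Xi_t(w,\varphi,s;(\zeta_r))=(W_t,\varphi_t,S_t)$ along admissible strategies. The crucial feature of the restricted class $\mathcal{A}^L_t(\varphi)$ is that the bound $\zeta_r\le L$ makes the coefficient $g(\zeta_r)$ of the drift term $-g(\zeta_r)S_r$ in (\ref{value_conti_S}) bounded by $g(L)$ uniformly in the control. Together with the linear growth of $\hat b$ and $\hat\sigma$, this yields via Gronwall and the Burkholder--Davis--Gundy inequality a uniform moment bound $\sup_{\zeta\in\mathcal{A}^L_1(\Phi_0)}E[\sup_{r\le 1}S_r^{2m_u}]\le C(L,s,m_u)$, locally bounded in $s$; this in turn supplies the integrability hypothesis required by Lemma \ref{conti_u}.

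Fix $(t,w,\varphi,s)\in[0,1]\times D$ and a sequence $(t_k,w_k,\varphi_k,s_k)\to(t,w,\varphi,s)$. For each $\zeta\in\mathcal{A}^L_t(\varphi)$ I would construct a companion strategy $\zeta^k\in\mathcal{A}^L_{t_k}(\varphi_k)$ in two steps: first scale by $\theta_k=(\varphi_k\wedge\varphi)/\varphi$ when $\varphi>0$ (and set $\zeta^k\equiv 0$ when $\varphi=0$) so as to respect the $\varphi_k$-budget, then adjust the time domain by truncating to $[0,t_k]$ if $t_k<t$ or extending by zero on $(t,t_k]$ if $t_k>t$. Since $\theta_k\le 1$ the bound $\zeta^k\le L$ is preserved, and clearly $\theta_k\to 1$. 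Standard Lipschitz--Gronwall and Burkholder--Davis--Gundy estimates applied to (\ref{value_conti_S}) then yield, uniformly in $\zeta$,
\begin{eqnarray*}
E\bigl[|S^k_{t_k}-S_t|+|\varphi^k_{t_k}-\varphi_t|+|W^k_{t_k}-W_t|\bigr]\longrightarrow 0\quad(k\to\infty),
\end{eqnarray*}
where $(W^k,\varphi^k,S^k)=\Xi_{t_k}(w_k,\varphi_k,s_k;\zeta^k)$. The $\varphi$-component is explicit in $\zeta$ and trivial; the $S$-convergence follows from the Lipschitz dependence of the SDE on both initial datum and on the drift perturbation $(g(\zeta^k_r)-g(\zeta_r))S_r$; and the $W$-convergence then follows from the integral form $W=w+\int\zeta S\,dr$ together with the uniform moment bound on $S$.

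Applying Lemma \ref{conti_u} with $\Gamma_k=\mathcal{A}^L_t(\varphi)$ yields $\sup_\zeta|E[u(\Xi_t(w,\varphi,s;\zeta))]-E[u(\Xi_{t_k}(w_k,\varphi_k,s_k;\zeta^k))]|\to 0$, and taking the supremum over $\zeta$ gives $V^L_t(w,\varphi,s;u)\le\liminf_k V^L_{t_k}(w_k,\varphi_k,s_k;u)$. The reverse inequality is obtained symmetrically by starting from near-optimal strategies in $\mathcal{A}^L_{t_k}(\varphi_k)$ and mapping them back via analogous scaling and time adjustment. The main technical obstacle is the joint handling of the variations in $t$ and $\varphi$: the companion strategy $\zeta^k$ must simultaneously satisfy the $\varphi_k$-budget on the possibly shorter or longer domain $[0,t_k]$, and the uniformity of the state-difference estimate in $\zeta$ rests precisely on the bounded-control hypothesis $\zeta\le L$, without which the drift of $S$ and hence the error constants would blow up along the supremum.
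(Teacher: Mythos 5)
Your proof follows essentially the same strategy as the paper: construct, for each admissible strategy at the reference data, a companion strategy admissible at the perturbed data, show the resulting state triples converge in $L^1$ uniformly over the strategy class, invoke Lemma \ref{conti_u}, and then take suprema and argue symmetrically. Where you differ is in the construction of the companion strategy and in the organization. The paper splits the proposition into Lemmas \ref{conti_wphis_L}, \ref{conti_1} and \ref{conti_2}, treating the $(w,\varphi,s)$-perturbation (at fixed $t$) separately from the $t$-perturbation; for the former it uses a stopping-time truncation $\zeta'_r=\zeta_r1_{\{r\le\rho\}}$ with $\rho$ the exit time of the $\varphi\wedge\varphi'$-budget, and then exploits the comparison Lemma \ref{cond_of_X3} to replace $S'$ by the process $\tilde S'$ driven by the \emph{same} control $\zeta$ with initial datum $s'$, so that the state estimate reduces to a pure SDE stability bound independent of $\zeta$. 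This is precisely what buys the paper uniformity in $L$ in Lemmas \ref{conti_wphis_L} and \ref{conti_1}, which it reuses later (e.g.\ in Propositions \ref{th_conti_(i)_1}--\ref{prop_conti(i)_2}). Your multiplicative rescaling $\zeta^k=\theta_k\zeta$ is equally valid and arguably more transparent, but it forces you to estimate $|g(\theta_k\zeta)-g(\zeta)|$ explicitly, and that term is controlled by the modulus of continuity of $g$ on $[0,L]$; consequently your bound is genuinely $L$-dependent and would not yield the uniform-in-$L$ versions the paper needs downstream. Also note one small imprecision: for the Gronwall stability estimate on $E[\sup_r|S^k_r-S_r|]$ you need $\hat\sigma,\hat b$ Lipschitz (which does hold here since $b,\sigma$ are bounded Lipschitz), not merely of linear growth as you write; linear growth alone suffices only for the moment bound, not for the difference estimate.
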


To prove Proposition \ref {conti_L}, we prove the following lemmas: 

\begin{lem}\label{conti_wphis_L} \ 
For any $(w,\varphi ,s)\in D$ and $t\in [0,1]$ 
\begin{eqnarray*}
\lim _{(w',\varphi ', s')\rightarrow (w,\varphi ,s)}\sup _{L>0}
|V^L_t(w',\varphi ',s' ; u)-V^L_t(w,\varphi ,s ; u)|=0. 
\end{eqnarray*} 
\end{lem}

\begin{proof} 
Let $R>0$ and $(w,\varphi ,s), (w',\varphi ',s')\in D_R$. We may assume $s'>0$. 
Take any $(\zeta _r)_{r\leq t}\in \mathcal {A}^L_t(\varphi )$. 
Let $\rho =\inf \{ r>0\ ; \ \int ^r_0\zeta _vdv>\varphi \wedge \varphi '\} \wedge t$ and 
$\zeta '_r = \zeta _r1_{\{r\leq \rho \}}$. 
Then $(\zeta '_r)_{r\leq t}\in \mathcal {A}^L_t(\varphi ')$. 
Let $(W_r, \varphi _r, S_r)_{r\leq t} = \Xi _t(w,\varphi ,s ; (\zeta _r)_r)$ and 
$(W'_r, \varphi '_r, S'_r)_{r\leq t} = \Xi _t(w',\varphi ',s' ; (\zeta '_r)_r)$. 
Moreover, let us define $(\tilde{S}'_r)_{r\leq t}$ by 
\begin{eqnarray*}
d\tilde{S}'_r = \hat{\sigma }(\tilde{S}'_r)dB_r+\hat{b}(\tilde{S}'_r)dr-
g(\zeta  _r)\tilde{S}'_rdr, \ \ 
\tilde{S}'_0 = s'. 
\end{eqnarray*}
Then Lemma \ref {cond_of_X3} implies $S'_r\geq \tilde{S}'_r$ for any $r\in [0,t]$ 
almost surely. Thus 
\begin{eqnarray}\label{t_1}
\E [u(W_t,\varphi _t,S_t)]-V_t(w',\varphi ',s' ; u) 
\leq 
\E [|u(W_t,\varphi _t,S_t)-u(W'_t,\varphi '_t,\tilde{S}'_t)|] . \hspace{5mm}
\end{eqnarray}
By a simple calculation we get 
\begin{eqnarray*}
|W_t-W'_t|\leq 
|w-w'|+\hat{Z}(s)|\varphi -\varphi '|+\Phi _0\sup _{r\in [0,t]}|S_r-\tilde{S}'_r|
\end{eqnarray*}
and $|\varphi _t-\varphi '_t|\leq |\varphi -\varphi '|$. 
Moreover, Theorem 3.2.7 in \cite{Nagai} and Lemma \ref {cond_of_X2} imply 
\begin{eqnarray*}
\E [\sup _{r\in [0,t]}|S_r-\tilde{S}'_r|] \leq \left\{
                                               \begin{array}{ll}
                                                C_0s'	& (s=0)	\\
                                                C_0|\log s-\log s'|	& (s>0)
                                               \end{array}
                                               \right.
\end{eqnarray*}
for some $C_0>0$ depending on only $b,\sigma $ and $R$. 
Then we obtain 
\begin{eqnarray}\label{t_2}
\sup _{L>0}\sup _{(\zeta _r)_r\in \mathcal {A}^L_t(\varphi )}
\E [|u(W_t,\varphi _t,S_t)-u(W'_t,\varphi '_t,\tilde{S}'_t)|] \longrightarrow  0 
\end{eqnarray}
as $(w',\varphi ',s')\rightarrow (w,\varphi ,s)$ 
by Lemma \ref {conti_u}. 
Now (\ref {t_1}) and (\ref {t_2}) imply 
\begin{eqnarray*}
\lim _{(w',\varphi ', s')\rightarrow (w,\varphi ,s)}\sup _{L>0}
(V^L_t(w,\varphi ,s ; u)-V^L_t(w',\varphi ',s' ; u))\leq 0. 
\end{eqnarray*}
A similar argument gives us 
\begin{eqnarray*}
\lim _{(w',\varphi ', s')\rightarrow (w,\varphi ,s)}\sup _{L>0}
(V^L_t(w',\varphi ',s' ; u)-V^L_t(w,\varphi ,s ; u))\leq 0. 
\end{eqnarray*}
This establishes the assertion. 
\end{proof}

\begin{lem}\label{conti_1} \ 
For any compact set $E\subset D$, 
\begin{eqnarray*}
\limsup _{r\uparrow t}\sup _{L>0}
\sup _{(w,\varphi ,s)\in E}(V^L_r(w,\varphi ,s ; u)-V^L_t(w,\varphi ,s ; u))\leq 0, \ \ 
t\in (0,1], \\
\limsup _{t\downarrow r}\sup _{L>0}
\sup _{(w,\varphi ,s)\in E}(V^L_r(w,\varphi ,s ; u)-V^L_t(w,\varphi ,s ; u))\leq 0, \ \ 
r\in [0,1). 
\end{eqnarray*} 
\end{lem}

\begin{proof} 
Let $r,t\in [0,1]$ with $r<t$. 
Lemmas \ref {conti_u} and \ref {cond_of_X1} imply 
\begin{eqnarray*}
\sup _{L>0}\sup _{(w,\varphi ,s)\in E}\sup _{(\zeta _v)_{v}\in \mathcal {A}^L_r(\varphi )}
\E [|u(W_r,\varphi _r,\exp (X_r))] - 
u(\tilde{W}_t,\tilde{\varphi }_t,\exp (\tilde{X}_t))|]\longrightarrow 0
\end{eqnarray*}
as $r\uparrow t$ and $t\downarrow r$, where 
$(W_v,\varphi _v,X_v)_{v} = 
\Xi ^X_r(w,\varphi ,s ; (\zeta _v)_v)$, 
$(\tilde{W}_v, \tilde{\varphi }_v, \tilde{X}_v)_{v} = 
\Xi ^X_t(w,\varphi ,s ; (\tilde{\zeta }_v)_v)$ 
and $\tilde{\zeta }_v = \zeta _v1_{[0,r]}(v)$ for 
$(\zeta _r)_r\in \mathcal {A}^L_t(\varphi )$. 
This implies the assertions.
\end{proof}

Similar arguments give us the following lemma: 

\begin{lem}\label{conti_2} \ For any $L>0$ and compact set $E\subset D$, 
\begin{eqnarray*}
\limsup _{r\uparrow t}\sup _{(w,\varphi ,s)\in E}
(V^L_t(w,\varphi ,s ; u)-V^L_r(w,\varphi ,s ; u))\leq 0, \ \ t\in (0,1], \\
\limsup _{t\downarrow r}\sup _{(w,\varphi ,s)\in E}
(V^L_t(w,\varphi ,s ; u)-V^L_r(w,\varphi ,s ; u))\leq 0, \ \ r\in [0,1). 
\end{eqnarray*} 
\end{lem}

By Lemmas \ref {conti_wphis_L}--\ref {conti_2}, we obtain Proposition \ref {conti_L}. 
We remark that Lemma \ref {nondecreasing_polygrowth} and Proposition \ref {conti_L} imply 
$V^L_t(\cdot ; u), V_t(\cdot ; u)\in \mathcal {C}$. 

We introduce a version of Theorem \ref {converge}, 
which will be used to prove Theorem \ref {semi} in the next section. 
Set 
\begin{eqnarray*}
\mathcal {A}^{n,L}_k(\varphi ) &=& \{ (\psi _l)^{k-1}_{l=0}\in \mathcal {A}^n_k(\varphi )\ ; \ 
\psi _l\leq L/n,\ l=0,\ldots , k-1 \}, \\
V^{n,L}_k(w,\varphi ,s ; u) &=& \sup _{(\psi _l)^{k-1}_{l=0}\in \mathcal {A}^{n,L}_k(\varphi )}
\E [u(W^n_k,\varphi ^n_k,S^n_k)]. 
\end{eqnarray*}
Note that $V^n_k(w,\varphi ,s ; u) = \sup _{L>0}V^{n,L}_k(w,\varphi ,s ; u)$. 
By similar arguments as in Section \ref {proof_of_converge}, we see that 
\begin{prop} \ \label{prop_converge_L}
For any $L>0$, $t\in [0,1]$ and $u\in \mathcal {C}$, the convergence 
\begin{eqnarray*}
\lim _{n\rightarrow \infty }V^{n,L}_{[nt]}(w,\varphi ,s;u) = V^L_t(w,\varphi ,s;u). 
\end{eqnarray*}
holds uniformly on any compact subset of $D$. 
\end{prop}

\subsection{Proof of Theorem \ref {semi}}\label{sec_properties}

We apply Nisio's method (\cite {Nisio}) to show Theorem \ref {semi}. \vspace{1mm}
We define the operators 
$Q^L_t : \mathcal {C}\longrightarrow \mathcal {C}$ and 
$Q^{n,L}_t : \mathcal {C}\longrightarrow \mathcal {C}$ by 
$Q^L_tu(w,\varphi ,s) = V^L_t(w,\varphi ,s ; u)$ and 
$Q^{n,L}_tu(w,\varphi ,s) = V^{2^n, L}_{[2^nt]}(w,\varphi ,s ; u)$. 
We see that $Q^L_t$ and $Q^{n,L}_t$ are well defined by 
the results in Section \ref {S-R} and 
the standard arguments of discrete-time dynamic programming theory 
(see \cite {Bertsekas-Shreve} for instance). 
First we show 
\begin{eqnarray}\label{semi_L}
Q^L_{t+r}u(w,\varphi ,s) = Q^L_tQ^L_ru(w,\varphi ,s)
\end{eqnarray}
for any $t, r\in I$ with $t+r\leq 1$, where 
$I = \{ k/2^l\ ; \ k,l\in \Bbb {Z}_+\}\cap [0,1]$. 
Let $n\in \Bbb {N}$ be large enough so that $2^nt, 2^nr\in \Bbb {Z}_+$. 
By the Bellman equation of the discrete-time case (\cite{Bertsekas-Shreve}), we have 
\begin{eqnarray}\label{temp1_0_I}
Q^{n,L}_{t+r}u(w,\varphi ,s) = 
Q^{n,L}_tQ^{n,L}_ru(w,\varphi ,s). 
\end{eqnarray}
By Proposition \ref {prop_converge_L}, 
we see that the left side of (\ref {temp1_0_I}) converges to that of (\ref {semi_L}) 
as $n\rightarrow \infty $ for any $t, r\in I$. 
The following proposition will give the convergence of the right side. 
\begin{prop} \ \label{prop_conv_rhs}
Let $u_n, u\in \mathcal {C}$ be utility functions satisfying $(\ref {growth_C})$ for some 
$C_u$ and $m_u$. Assume that $u_n$ converges to $u$ uniformly on any compact subset of $D$ as 
$n\rightarrow \infty $. Then
\begin{eqnarray*}
\lim _{n\rightarrow \infty }\sup _{k = 0, \ldots , n}
|V^{n,L}_k(w, \varphi , s ; u_n) - V^{n,L}_k(w, \varphi , s ; u)| = 0, \ \ 
(w, \varphi , s)\in D. 
\end{eqnarray*}
\end{prop}

\begin{proof} 
Take any $R > 0$. Then 
\begin{eqnarray*}
&&|V^{n,L}_k(w, \varphi , s ; u_n) - V^{n,L}_k(w, \varphi , s ; u)| \\
&&\leq 
\sup _{(w', \varphi ', s')\in D_R}|u_n(w', \varphi ', s') - u(w', \varphi ', s')| + 
\frac{C_0}{R}
\end{eqnarray*}
by Lemma \ref {cond_of_X2} and the Chebyshev inequality, 
where $C_0 > 0$ depends only on $b$, $\sigma $, $C_u$, $m_u$ and $(w, \varphi , s)$. 
Now we get the assertion by letting $n\rightarrow \infty $ and $R\rightarrow \infty $. 
\end{proof}

Using Proposition \ref {prop_conv_rhs} 
and the uniform convergence of $Q^{n, L}_ru$ to $Q^L_ru$ on any compact set, 
we see that the right side of (\ref {temp1_0_I}) converges to that of (\ref {semi_L}). 
Moreover, Proposition \ref {conti_L} implies that (\ref {semi_L}) also holds for any $t,r \in [0,1]$. 
\vspace{0.5mm}
Now Theorem \ref {semi} is obtained from (\ref {semi_L}), the relation 
$Q_tu(w,\varphi ,s) = \sup _{L>0}Q^L_tu(w,\varphi ,s)$, and 
a similar calculation to the proof of Proposition 4 in \cite{Nisio}.  \qed

\subsection{Proof of Theorem \ref {conti}}\label{proof_of_conti}

We now prove Theorem \ref {conti}. 
First we consider the right-continuity at $t=0$ when $h(\infty ) = \infty $.

\begin{lem} \ \label{eval_0}
Assume $h(\infty ) = \infty $. Then for any $t\in [0,1]$ and 
$(\zeta _r)_{0\leq r\leq t}\in \mathcal {A}_t(\varphi )$,
\begin{eqnarray}\label{temp_lemma_1}
\int ^r_0\exp \Big( -\int ^v_0g(\zeta _{v'})dv'\Big) \zeta _vdv \leq 
\phi (r), \ \ r\in [0,t], 
\end{eqnarray}
where $\phi (r),\ r\in (0,1]$, is a continuous function, depending only on function 
$h(\zeta )$ and $\Phi _0$, 
such that $\lim _{r\rightarrow 0}\phi (r) = 0$.  
\end{lem}

\begin{proof} 
Let $\pi _r = \int ^r_0g(\zeta _v)dv$ and 
$\tau _R = \inf \{ v\in [0,r]\ ; \ \pi _v>R \} \wedge r$ 
for $r\in (0,t]$ and $R>0$. Then we have 
\begin{eqnarray*}
\int ^r_0\exp ( -\pi _v) \zeta _vdv \leq  
\int ^{\tau _R}_0\zeta _vdv + 
\int ^r_{\tau _R}e^{-R}\zeta _vdv \leq 
\int ^{\tau _R}_0\zeta _vdv + \Phi _0e^{-R}
\end{eqnarray*}
for $r\in (0,t]$ and $R>0$. 
Since $g(\zeta )$ is convex, the Jensen inequality implies 
\begin{eqnarray*}
\int ^{\tau _R}_0\zeta _vdv \leq 
rg^{-1}\Big( \frac{1}{r}\int ^{r}_0g(\zeta _v1_{[0, \tau _R]})dv\Big) \leq 
rg^{-1}\Big( \frac{1}{r}\int ^{\tau _R}_0g(\zeta _v)dv\Big) \leq rg^{-1}(R/r), 
\end{eqnarray*}
where $g^{-1}(y) = \sup \{ \zeta \in [0,\infty ) \ ; \ g(\zeta ) = y \} $, $y\geq 0$. 
The function $g^{-1}(y)$ is well defined at any $y\geq 0$ 
and continuous for large $y$. 

If we can find a positive function $R(r)$ that satisfies 
\begin{eqnarray}\label{cond_Rr}
R(r) \longrightarrow \infty \ \ \mathrm {and} \ \ 
rg^{-1}(R(r)/r) \longrightarrow 0\ \ \mathrm {as} \ \ r\rightarrow 0, 
\end{eqnarray}
then we obtain (\ref {temp_lemma_1}) by letting $\phi (r) = rg^{-1}(R(r)/r) + \Phi _0\exp (-R(r))$. 
To construct such $R(r)$, 
let us define a function $f(\zeta )$, $\zeta \geq 0$, by $f(\zeta ) = 
\zeta \sqrt{h(\zeta /2)}$. Then $f(\zeta )$ is continuous, strictly increasing for large $y$ and 
satisfies $f(0)=0$ and $\lim _{\zeta \rightarrow\infty }f(\zeta ) = \infty $. 
Thus $f(\zeta )$ has an inverse function $f^{-1}(y)$ on $[0,\infty )$ such that 
$f^{-1}(0) = 0$, $\lim _{y\rightarrow \infty }f^{-1}(y) = \infty $, and $f^{-1}(y)$ is continuous for large $y$. 
So we can put $M(r) = f^{-1}(1/r)$ and $R(r) = rg(M(r))$ for $r\in (0,1]$. 
Then we see that $M(r), R(r) \longrightarrow \infty $ as $r\rightarrow 0$ and that 
\begin{eqnarray*}
R(r) \geq  r\int ^{M(r)}_{M(r)/2}h(\zeta )d\zeta  \geq  
\frac{rM(r)h(M(r)/2)}{2} = \frac{\sqrt{h(M(r)/2)}}{2} \longrightarrow \infty 
\end{eqnarray*}
as $r\rightarrow 0$. Moreover, we have 
\begin{eqnarray*}
rg^{-1}(R(r)/r) = rM(r) = \frac{1}{\sqrt{h(M(r)/2)}} \longrightarrow 0,\ \ r\rightarrow 0. 
\end{eqnarray*}
Then we obtain (\ref {cond_Rr}) and thus the assertion. 
\end{proof}
\begin{remark}
The above construction of $R(r)$ is somewhat artificial. 
Here, we give an image of the above proof. 
To make the situation simple, we consider only the case of 
$h(\zeta ) = C\zeta ^\alpha $ for some $C, \alpha > 0$. 
Then we have $g^{-1}(y) = \mathrm {Const.}\times \zeta ^{1/(1+\alpha )}$ and 
\begin{eqnarray*}
\phi (r) = \mathrm {Const.}\times \{ (r^\alpha R(r))^{1/(1+\alpha )} + \exp (-R(r))\} . 
\end{eqnarray*}
If we put $R(r) = r^{-\beta }$ with $\beta > 0$, then 
we observe 
\begin{eqnarray*}
\phi (r) = \mathrm {Const.}\times \{ r^{(\alpha  - \beta )/(1+\alpha )} + \exp (-r^{-\beta })\} , 
\end{eqnarray*}
which converges to $0$ as $r\rightarrow 0$ when $0 < \beta  < \alpha $. 
In the proof of Lemma \ref {eval_0}, 
$\beta $ was set as $\alpha / (2+\alpha )$. 

In the general case, the construction of $R(r)$ becomes a little complicated, 
and we need the auxiliary functions $f(\zeta )$ and $M(r)$. 
In the case of $h(\zeta ) = C\zeta ^\alpha $, they are represented as 
$f(\zeta ) = \mathrm {Const.}\times \zeta ^{1+\alpha /2}$ and 
$M(r) = \mathrm {Const.}\times r^{-2/(2+\alpha )}$. 

\end{remark}
\begin{prop} \ \label{th_conti_(i)_1}
Assume $h(\infty )=\infty $. Then for any compact set $E\subset D$, 
\begin{eqnarray*}
\lim _{t\downarrow 0}\sup _{(w,\varphi ,s)\in E}
|V_t(w,\varphi ,s ; u)-u(w,\varphi ,s)| = 0. 
\end{eqnarray*}
\end{prop}

\begin{proof}
Take any $t\in (0,1)$. 
Let $\hat{S}_t = s\exp \Big( -\int ^t_0g(\zeta _v)dv\Big) $ and 
$(W_r,\varphi _r,S_r)_{0\leq r\leq t} = \Xi _t(w,\varphi ,s ; (\zeta _r)_r)$. 
Then we have 
\begin{eqnarray}\label{temp15_1}
V_t(w,\varphi ,s ; u) - u(w,\varphi ,s) \leq  
\sup _{(\zeta _r)_r\in \mathcal {A}_t(\varphi )}
\left| \E [u(W_t,\varphi _t,S_t)] - 
\E [u(w,\varphi _t,\hat{S}_t)]\right| \hspace{5mm}
\end{eqnarray}
by the relations $\varphi _t\leq \varphi $ and $\hat{S_t}\leq s$. 
Using Lemma \ref {eval_0}, the Burkholder--Davis--Gundy inequality and the H\"older inequality, we have 
$\E [|S_t - \hat{S}_t|]\leq C_0st^{1/2}$ and 
\begin{eqnarray*}
\E [|W_t-w|] &\leq & 
s\E \Big[ \int ^t_0
\exp \Big (-\int ^r_0g(\zeta _v)dv\Big )\zeta _rdr\Big] + 
\E \Big[ \int ^t_0
| S_r - \hat{S}_r | \zeta _rdr \Big] \\
&\leq & 
s\phi (t) + C_0\Phi _0st^{1/2} 
\end{eqnarray*}
for some $C_0>0$ independent of $t, w, \varphi , s$ and $(\zeta _r)_r$. 
Then, by (\ref {temp15_1}) and Lemma \ref {conti_u}, we get 
$\limsup _{t\downarrow 0}\sup _{(w,\varphi ,s)\in E}
(V_t(w,\varphi ,s ; u)-u(w,\varphi ,s)) \leq  0$. 
The inequality $\limsup _{t\downarrow 0}\sup _{(w,\varphi ,s)\in E}
(u(w,\varphi ,s)-V_t(w,\varphi ,s ; u)) \leq  0$ is 
obtained by Lemma \ref {conti_1}. This yields the assertion. 
\end{proof}

Next we consider the case of $h(\infty )<\infty $. 

\begin{prop} \ \label{prop_conti_3_1}Assume $h(\infty )<\infty $. 
Then for any compact set $E\subset D$ 
\begin{eqnarray*}
\limsup _{t\downarrow 0}\sup _{(w,\varphi ,s)\in E}
(V_t(w,\varphi ,s ; u) - Ju(w,\varphi ,s))\leq 0. 
\end{eqnarray*}
\end{prop}

\begin{proof} 
Fix $t\in (0,1)$ and $(\zeta_r)_{0\leq r\leq t}\in \mathcal {A}_t(\varphi )$. 
Let $(W_r, \varphi _r, X_r)_{0\leq r\leq t} = 
\Xi ^X_t(w,\varphi ,s ; \allowbreak (\zeta _r)_r)$. We easily have 
\begin{eqnarray}\nonumber 
&&\lim _{t\downarrow 0}\sup _{(w,\varphi ,s)\in E}
\sup _{(\zeta _r)_r\in \mathcal{A}_t(\varphi )}
\Big| \E [u(W_t, \varphi _t, S_t)] \\\label{temp_T}&&\hspace{30mm} - 
\E \Big [u\Big( w+s\int ^t_0e^{-\tilde{\eta }_r}\zeta _rdr, 
\varphi -\eta _t, se^{-\tilde{\eta }_t}\Big) \Big ]\Big| = 0 \hspace{5mm}
\end{eqnarray}
by Lemma \ref {conti_u}, \vspace{2mm}
where $\eta _r = \int ^r_0\zeta _vdv$ and $\tilde{\eta }_r = \int ^r_0g(\zeta _v)dv$. 
Now we define 
\begin{eqnarray*}
\hat{\eta }_r = 1_{(0,t]}(r)\int ^{\eta _r}_0h(\zeta '/r)d\zeta ', \ \ 
\hat{w}_t = \int ^{\eta _t}_0
\exp \left( -\int ^p_0h(\zeta '/t)d\zeta '\right) dp. 
\end{eqnarray*}
Since $g(\zeta )$ is convex, the Jensen inequality implies 
$\tilde{\eta }_r\geq rg(\eta _r/r) = \hat{\eta }_r$ and 
\begin{eqnarray*}
\hat{w}_t \geq \int ^t_0\exp \left( -\int ^{\eta _r}_0h(\zeta '/r)d\zeta '\right) \zeta _rdr \geq  
\int ^t_0e^{-\tilde{\eta }_r}\zeta _rdr
\end{eqnarray*}
for $r\in (0,t]$. 
Moreover, $h(\zeta )$ is non-decreasing in $\zeta $ and so is 
$u(w,\varphi ,s)$ in $w$. Thus we get 
\begin{eqnarray}\nonumber \label{temp_T4}
\E \Big [u\Big( w+s\int ^t_0e^{-\tilde{\eta }_r}\zeta _rdr, 
\varphi -\eta _t, se^{-\tilde{\eta }_t}\Big) \Big ]\leq 
\E [u( w+s\hat{w}_t, \varphi -\eta _t, se^{-\hat{\eta }_t})]
\end{eqnarray}
for any $(\zeta _r)_r\in \mathcal {A}_t(\varphi )$. 
By this inequality and (\ref {temp_T}), we get 
\begin{eqnarray}\nonumber \label{temp_T_2}
&&\limsup _{t\downarrow 0}\sup _{(w,\varphi ,s)\in E}\big (V_t(w,\varphi ,s ; u)\\
&&\hspace{25mm} - 
\sup _{(\zeta _r)_r\in \mathcal{A}_t(\varphi )}
\E [u( w+s\hat{w}_t, \varphi -\eta _t, se^{-\hat{\eta }_t})]\big )\leq 0. 
\end{eqnarray}

Next let us define 
\begin{eqnarray}\label{def_epsilon_F}
\tilde{\varepsilon }_t = 
\int ^{\Phi _0}_0(h(\infty )-h(\zeta /t))d\zeta , \ \ 
F(\psi ) = \int ^\psi _0e^{-h(\infty )p}dp. 
\end{eqnarray}
Then we have $|e^{-\hat{\eta }_t}-e^{-h(\infty )\eta _t}|\leq 4\tilde{\varepsilon }_t$ and 
$|\hat{w}_t-F(\eta _t)|\leq 4\Phi _0\tilde{\varepsilon }_t$. 
Since the dominated convergence theorem implies 
$\tilde{\varepsilon }_t\longrightarrow 0$ as $t\downarrow 0$, 
Lemma \ref {conti_u} then gives us 
\begin{eqnarray*}&&
\lim _{t\downarrow 0}\sup _{(w,\varphi ,s)\in E}
\sup _{(\zeta _r)_r\in \mathcal{A}_t(\varphi )}\big| 
\E [u( w+s\hat{w}_t, \varphi -\eta _t, s\exp ( -\hat{\eta }_t))]\\&&\hspace{38mm} - 
\E [u( w+F(\eta _t)s, \varphi -\eta _t, se^{-h(\infty )\eta _t})] \big| = 0. 
\end{eqnarray*}
By this and (\ref {temp_T_2}), we get the assertion. 
\end{proof}

\begin{prop} \ \label{prop_conti_3_2}Assume $h(\infty )<\infty $. 
Then for any compact set $E\subset D$, 
\begin{eqnarray*}
\limsup _{t\downarrow 0}\sup _{(w,\varphi ,s)\in E}
(Ju(w,\varphi ,s) - V_t(w,\varphi ,s ; u))\leq 0. 
\end{eqnarray*}
\end{prop}

\begin{proof} 
Suppose $t\in (0,1)$. For any $(w,\varphi ,s)\in E$, fix a $\psi \in [0,\varphi ]$ and define 
$(\zeta _r)_{0\leq r\leq t}\in \mathcal {A}_t(\varphi )$ by 
$\zeta _r = \psi / t$ and $(W_r,\varphi _r,S_r)_{0\leq r\leq t} = 
\Xi _t(w,\varphi ,s ; (\zeta _r)_r)$. Similarly to the proof of 
Proposition \ref {prop_conti_3_1}, we get 
\begin{eqnarray*}
\lim _{t\downarrow 0}\sup _{(w,\varphi ,s)\in E}\sup _{\psi \in [0,\varphi ]}
\big| u(w+F(\psi )s,\varphi -\psi ,se^{-h(\infty )\psi }) - 
\E [u(W_t,\varphi _t,S_t)]\big| = 0, 
\end{eqnarray*}
which implies our assertion. 
\end{proof}

Finally, we consider the continuity with respect to $t\in (0,1]$. 

\begin{prop} \ \label{prop_conti(i)_2}
For any compact set $E\subset D$, \\
$\mathrm {(i)}$ \ $\lim _{t'\uparrow t}\sup _{(w,\varphi ,s)\in E}
|V_{t'}(w,\varphi ,s ; u)-V_t(w,\varphi ,s ; u)| = 0$, \ \ $t\in (0,1]$, \\
$\mathrm {(ii)}$ \ $\lim _{t'\downarrow t}\sup _{(w,\varphi ,s)\in E}
|V_{t'}(w,\varphi ,s ; u)-V_t(w,\varphi ,s ; u)| = 0$, \ \ $t\in (0,1)$. 
\end{prop} 

\begin{proof} 
Lemma \ref {conti_1} implies 
\begin{eqnarray*}
\limsup _{t'\uparrow t}\sup _{(w,\varphi ,s)\in E}
(V_{t'}(w,\varphi ,s ; u)-V_t(w,\varphi ,s ; u)) \leq 0. 
\end{eqnarray*}
By the following uniform convergence (which is given by Dini's theorem) 
\begin{eqnarray*}
\lim _{L\rightarrow \infty }\sup _{(w, \varphi , s)\in E}|V^L_t(w,\varphi ,s ; u) - V_t(w,\varphi ,s ; u)| = 0
\end{eqnarray*}
and Lemma \ref {conti_2}, we have 
\begin{eqnarray*}
\limsup _{t'\uparrow t}\sup _{(w,\varphi ,s)\in E}
(V_t(w,\varphi ,s ; u)-V_{t'}(w,\varphi ,s ; u)) \leq 0. 
\end{eqnarray*}
This gives assertion (i). 

Next we check (ii). 
If $h(\infty )=\infty $, this assertion holds by Proposition \ref {th_conti_(i)_1} and 
Theorem \ref {semi}, so we may assume $h(\infty )<\infty $. 

By Propositions \ref {prop_conti_3_1}--\ref {prop_conti_3_2} and Theorem \ref {semi}, we get 
\begin{eqnarray*}
\lim _{t'\downarrow t}\sup _{(w,\varphi ,s)\in E}|V_{t'}(w,\varphi ,s ; u) - 
JV_t(w,\varphi ,s ; u)| = 0, 
\end{eqnarray*}
and obviously
$V_t(w,\varphi ,s ; u)\leq JV_t(w, \varphi , s ; u)$. 
So, it suffices to show 
\begin{eqnarray}\label{for_prove}
JV_t(w,\varphi ,s ; u) \leq  V_t(w,\varphi ,s ; u), \ \ t > 0. 
\end{eqnarray}

Fix a $\psi \in [0,\varphi ]$ and 
a $(\zeta _r)_{0\leq r\leq t}\in \mathcal {A}_t(\varphi -\psi )$. 
Let $\delta \in (0,t)$ and 
define $(\tilde {\zeta }_r)_{0\leq r\leq t}\in \mathcal {A}_t(\varphi )$ by 
$\tilde{\zeta }_r = (\psi/\delta )1_{[0,\delta ]}(r)+\zeta _r$. 
Put $(W_r,\varphi _r,X_r)_{0\leq r\leq t} = 
\Xi ^X_t( w+F(\psi )s,\varphi -\psi ,se^{-h(\infty )\psi } ; (\zeta _r)_r) $ and 
$(\tilde{W}_r, \tilde{\varphi }_r, \tilde{X}_r)_{0\leq r\leq t} = 
\Xi ^X_t(w,\varphi ,s ; (\tilde{\zeta }_r)_r)$, where $F(\psi )$ is given by (\ref {def_epsilon_F}). 
Then we have for $r\in [\delta ,t]$ 
\begin{eqnarray*}
\tilde{X}_r - X_r = 
\int ^r_0(\sigma (\tilde{X}_v)-\sigma (X_v))dB_v + 
\int ^r_0(b(\tilde{X}_v)-b(X_v))dv + e_\delta , 
\end{eqnarray*}
where 
\begin{eqnarray*}
e_\delta  = h(\infty )\psi - \int ^\delta _0(g(\tilde{\zeta _v}) - g(\zeta _v))dv = 
\frac{1}{\delta }\int ^\delta _0\int ^\psi _0
\left(h(\infty )-h\left (\frac{\zeta '}{\delta } + \zeta _v\right )\right )d\zeta 'dv. 
\end{eqnarray*}
Using the Burkholder--Davis--Gundy inequality and the H\"older inequality, 
we get 
\begin{eqnarray*}&&
\E [\sup _{v\in [\delta ,r]}|\tilde{X}_v-X_v|^2] 
\leq 
C_0\Big\{ \int ^r_\delta 
\E [\sup _{v'\in [\delta ,v]}|\tilde{X}_{v'}-X_{v'}|^2]dv + 
\delta  + \E [e_\delta ]\Big\} , \ \ 
r\in [\delta ,t]
\end{eqnarray*}
for some $C_0 > 0$ depending only on $b$, $\sigma $ and $E$. 
Since $\E [e_\delta ]\leq \tilde{\varepsilon }_\delta \longrightarrow 0$ 
as $\delta \rightarrow 0$, where $\tilde{\varepsilon }_\delta $ is given by (\ref {def_epsilon_F}), we get 
$\E [\sup _{r\in [\delta ,t]}|\tilde{X}_r-X_r|^2]\longrightarrow 0$: 
therefore, 
$\E [\sup _{r\in [0,t]}|\exp (\tilde{X}_r)-\exp (X_r)|] \longrightarrow 0$ 
as $\delta \rightarrow 0$ 
by the above inequality and the Gronwall inequality. 
Moreover, by these convergences, the boundedness of $(\zeta _r(\omega ))_{r, \omega }$, and Lemma \ref {cond_of_X1}, 
we can show the convergence 
$\E [|\tilde {W}_t-W_t|] \longrightarrow 0$ as $\delta \rightarrow 0$. 
Now we can apply Lemma \ref {conti_u} to obtain 
\begin{eqnarray}\label{conv_JV}
\lim _{\delta \rightarrow 0}
\big |\E [u(W_t,\varphi _t,\exp (X_t))] - 
\E [u(\tilde{W}_t,\tilde{\varphi }_t,\exp (\tilde{X}_t))]\big | = 0. 
\end{eqnarray}
By (\ref {conv_JV}), we easily get $\E [(W_t, \varphi _t, \exp (X_t))] \leq V_t(w, \varphi , s ; u)$. 
Since $(\zeta _r)_{r}\in \mathcal {A}_t(\varphi -\psi )$ was arbitrary, 
and $\psi \in [0,\varphi ]$ was also arbitrary, we get (\ref {for_prove}). 
\end{proof}

Using Propositions \ref {th_conti_(i)_1}--\ref {prop_conti(i)_2} and 
$V_t(\cdot \ ; u) \in \mathcal {C}$, 
we complete the proof of Theorem \ref {conti}.

\subsection{Proof of Proposition \ref {prop_suff_cond}}\label{proof_suff_cond}

Fix $t\in (0, 1]$ and $(w, \varphi , s)\in \hat{U}$. 
First, we will show that $V_t(w, \varphi , s ; u) > U(w)$. 
Define $(\bar{\zeta }_r)_r\in \mathcal {A}_t(\varphi )$ by $\bar{\zeta }_r = \varphi /t$, $r\in [0, t]$ and 
let $(\bar{W}_r, \bar{\varphi }_r, \bar{S}_r)_r = \Xi _t(w, \varphi , s ; (\bar{\zeta }_r)_r)$. 
Then, by the definition of $V_t(w, \varphi , s ; u)$, the boundedness of $b$ and $\sigma $, [C1], 
and the Jensen inequality, 
we can easily observe 
\begin{eqnarray*}
V_t(w, \varphi , s ; u) - U(w) \geq  
\frac{\delta \varphi }{t}\int ^t_0\E [\bar{S}_r]dr 
\geq 
\delta \varphi se^{-(K + g(\varphi /t))t} > 0 
\end{eqnarray*}
for some $K > 0$. 
Here we denote $\hat{\delta } = V_t(w, \varphi , s ; u) - U(w) > 0$ for brevity. 

Next, fix any $\varepsilon \in (0, 1)$ and 
$\eta \in (0, \hat{\delta }/2)$. 
Then there exists $(\zeta _r)_r\in \mathcal {A}_t(\varphi )$ such that 
\begin{eqnarray}\label{eps_optimal}
V_t(w, \varphi , s ; u) < \E [U(W_t)] + \eta , 
\end{eqnarray}
where $(W_r, \varphi _r, S_r)_r = \Xi _t(w, \varphi , s ; (\zeta _r)_r)$ 
(which depends on $\eta $, whereas is independent of $\varepsilon $). 
Put 
$(\tilde{W}_r, \tilde{\varphi }_r, \tilde{S}_r)_r = \Xi _t(w, \varphi , s + \varepsilon  ; (\zeta _r)_r)$ 
(note that $\tilde {W}_r \geq W_r$, $\tilde{S}_r \geq S_r$ and $\tilde{\varphi }_r = \varphi _r$.) 
Then we have 
\begin{eqnarray}
V_t(w, \varphi , s + \varepsilon  ; u) - V_t(w, \varphi  , s ; u) 
\geq 
\varepsilon \delta \E\left [\int ^t_0\zeta _rA^\varepsilon _rdr\right ] - \eta 
\label{diff_V_ineq}
\end{eqnarray}
by [C1] and (\ref {eps_optimal}), where $A^\varepsilon _r = (\tilde{S}_r - S_r)/\varepsilon $. 
Here, we will show that $(A^\varepsilon _r)_r$ converges to 
a process $(A_r)_r$ in the following sense: 
\begin{eqnarray}\label{conv_Ar}
\E \left [\int ^t_0\zeta _r|A^\varepsilon _r - A_r|dr\right ] \ \longrightarrow \ 0, \ \ \varepsilon \rightarrow 0, 
\end{eqnarray}
and $(A_r)_r$ is given by $A_r = S_rL_r/s$, where 
$(L_r)_r$ is the solution of the SDE
\begin{eqnarray*}
\left\{ 
\begin{array}{l}
dL_r = b'(X_r)L_rdr + \sigma '(X_r)L_rdB_r, \ \ r > 0, \\
\hspace{1.9mm}L_0 = 1. 
\end{array}
\right. 
\end{eqnarray*}
Note that existence and uniqueness of the above SDE are guaranteed by [C2]. 
Moreover, Ito's formula implies that $L_r = \exp (\Lambda _r) > 0$, where 
\begin{eqnarray*}
\Lambda _r = \int ^r_0\left\{ b'(X_v) - \frac{1}{2}\sigma '(X_v)\right\} dv + 
\int ^r_0\sigma '(X_v)dB_v. 
\end{eqnarray*}

Define $X_r = \log S_r$, $\tilde{X}_r = \log \tilde{S}_r$ and 
$L^\varepsilon _r = s(\tilde{X}_r - X_r) / \varepsilon $. 
By using $S_rL^\varepsilon _r \leq sA^\varepsilon \leq \tilde{S}_rL^\varepsilon _r$ and 
$\tilde{S}_r \leq \hat{Z}(s + 1)$, we can get 
\begin{eqnarray}
\nonumber 
\E \left [\int ^t_0\zeta _r|A^\varepsilon _r - A_r|dr\right ] 
&\leq & 
\frac{\varphi }{s}\Bigg\{ \E [\hat{Z}(s + 1)^2]^{1/2}
\E \left[ \sup _{0\leq r\leq t}\left | L^\varepsilon _r - L_r \right |^2\right ]^{1/2}\\
&& + 
\E \left[ \sup _{0\leq r\leq t}|\tilde{S}_r - S_r|^2\right ] ^{1/2}
\E \left[ \sup _{0\leq r\leq t}|L_r|^2\right ] ^{1/2}\Bigg \} . \hspace{8mm}
\label{temp_ineq_Step3}
\end{eqnarray}

Now we consider the limit of the right side of (\ref {temp_ineq_Step3}) as $\varepsilon \rightarrow 0$. 
By [C2], Lemmas \ref {cond_of_X2}, \ref {cond_of_X3}, Theorem 2.5.9 in \cite {Krylov} and the inequality
\begin{eqnarray}\label{exp_log}
|e^x-e^y|\leq \int ^1_0e^{rx}e^{(1-r)y}dv|x-y|\leq (e^x+1)(e^y+1)|x-y|, 
\end{eqnarray}
we have 
\begin{eqnarray}\label{est_diff_S}
\E \left[ \sup _{0\leq r\leq t}|\tilde{S}_r - S_r|^2\right ]  \leq 
4\E [\hat{Z}(s + 1)^4]^{1/2}\E [\sup _{0\leq r\leq t}|\tilde{X}_r - X_r|^4]^{1/2} \leq 
K'\varepsilon ^2
\end{eqnarray}
for some $K' > 0$ depending only on $s$, $b$ and $\sigma $. 

Here, we denote by $X(\cdot ; x_0 , (\tilde{\zeta }_r)_r)$ the solution of (\ref {SDE_X}), 
given $(\tilde{\zeta }_r)_r\in \mathcal {A}_t(\varphi )$. 
Then, similarly to Theorem 4.6.5 in \cite {Kunita}, 
by [C2] and Theorem 2.5.9 in \cite {Krylov}, 
we can show 
that 
the process $(\partial X/\partial x)(\cdot  ; x, \allowbreak (\tilde{\zeta }_r)_r)$ exists for each $x\in \Bbb {R}$, 
that $(\partial X/\partial x)(r ; \log s, (\zeta _r)_r) = L_r$, and that 
the following convergence holds for each $x$: 
\begin{eqnarray}\nonumber 
&&\sup _{(\tilde{\zeta }_r)_r\in \mathcal {A}_t(\varphi )}
\E \left [ \sup _{0\leq r\leq 1}
\left | \frac{X(r ; x + \varepsilon , (\tilde {\zeta }_r)_r) - X(r ; x, (\tilde {\zeta }_r)_r)}{\varepsilon } - 
\frac{\partial }{\partial x}X(r ; x, (\tilde {\zeta }_r)_r) \right |^2\right ] \\
&&\longrightarrow \ 0, \ \ \varepsilon \rightarrow 0. \label{conv_flow_2}
\end{eqnarray}
By (\ref {conv_flow_2}) and a standard calculation, we can show that 
$\E \left[ \sup _{0\leq r\leq 1}\left | L^\varepsilon _r - L_r \right |^2\right ] $ 
converges to zero as $\varepsilon \rightarrow 0$. 
By combining this with (\ref {temp_ineq_Step3}) and (\ref {est_diff_S}), 
we obtain (\ref {conv_Ar}). 
Here, we stress that, by using (\ref {conv_flow_2}) and Theorem 2.5.9 in \cite {Krylov} again, 
we can generalise (\ref {conv_Ar}) to the following sharper estimation: 
\begin{eqnarray}\label{unif_conv_A}
c(\varepsilon ) = \sup _{(\tilde{\zeta }_r)_r\in \mathcal {A}_t(\varphi )}
\E \left [\int ^t_0\tilde{\zeta }_r|A^\varepsilon _r(\tilde {\zeta }) - A_r(\tilde {\zeta })|dr\right ] \ \longrightarrow \ 0, \ \ \varepsilon \rightarrow 0, 
\end{eqnarray}
where 
$A^\varepsilon _r(\tilde{\zeta })$ and $A_r(\tilde{\zeta })$
are defined for each $(\tilde{\zeta }_r)_r$ 
in a way similar to the definitions of 
$A^\varepsilon _r$ and $A_r$. 
We omit a detailed proof of (\ref {unif_conv_A}). 

Next, set $\tilde{\delta } = \tilde{\delta }(\eta ) = \E [W_t] - w$ 
(recall that $W_t$ denotes the cash holdings satisfying (\ref {eps_optimal})). 
By [C1], the function $U$ has the inverse function $U^{-1}$ which is also continuously differentiable and 
$(U^{-1})'(y) = 1/U'(U^{-1}(y))\in (0, 1/\delta )$, $y\in \{ U(w)\ ; \ w\in \Bbb {R} \}$. 
Note that $U^{-1}$ is strictly increasing and $(U^{-1})'$ is non-decreasing because $U$ is concave. 
Here, applying the Jensen inequality, we have 
\begin{eqnarray}\nonumber 
\tilde {\delta } 
&\geq & 
U^{-1}(\E [U(W_t)]) - U^{-1}(U(w))
> 
U^{-1}(V_t(w, \varphi , s ; u) - \eta ) - U^{-1}(U(w))\\
&=& 
\int ^1_0(U^{-1})'\left( U(w) + k(\hat{\delta } - \eta )\right) dk(\hat{\delta } - \eta )  \ > \ \bar{\delta }, 
\label{temp_bar_delta}
\end{eqnarray}
where $\hat{\delta }$ is defined at the end of Step 1 and 
$\bar{\delta } = \hat{\delta } / (2U'(w)) > 0$. Note that $\bar{\delta }$ is independent of $\eta $. 
By (\ref {diff_V_ineq}), (\ref {unif_conv_A}) and (\ref {temp_bar_delta}), 
we get 
\begin{eqnarray}\nonumber 
\frac{V_t(w, \varphi , s + \varepsilon  ; u) - V_t(w, \varphi  , s ; u)}{\varepsilon } 
\geq 
-\delta c(\varepsilon ) + 
\frac{\delta \tilde{\delta }}{s}\int _{[0, T]\times \Omega }e^{\Lambda _r(\omega )}\nu (dr, d\omega ) 
 - \frac{\eta }{\varepsilon }, \\
\label{temp_exp1}
\end{eqnarray}
where 
$\nu (dr, d\omega ) = \tilde{\delta }^{-1}\zeta _r(\omega )S_r(\omega )drP(d\omega )$
is the probability measure on $([0, t]\times \Omega , \mathcal {B}([0, t])\otimes \mathcal {F})$. 
We can apply the Jensen inequality to obtain 
\begin{eqnarray}
\int _{[0, t]\times \Omega }e^{\Lambda _r(\omega )}\nu (dr, d\omega ) 
\geq 
\exp \left( \frac{1}{\tilde{\delta }}
\E \left[ \int ^t_0\zeta _rS_r\Lambda _rdr \right] \right) . \label{temp_exp2}
\end{eqnarray}
Using [C2], the relation $(\zeta _r)_r\in \mathcal {A}_t(\varphi )$, 
Lemma \ref {cond_of_X2}, the definition of $(\Lambda _r)_r$, 
the H\"older inequality and the Burkholder--Davis--Gundy inequality, we have 
\begin{eqnarray}
\left| \E \left[ \int ^t_0\zeta _rS_r\Lambda _rdr \right] \right| 
\leq  
\varphi \E \left [\sup _{0\leq r\leq t}|\Lambda _r|^2\right ]^{1/2}\E [\hat{Z}(s)^2]^{1/2}
\leq K'' \varphi 
\label{temp_exp3}
\end{eqnarray}
for some 
$K'' > 0$ which depends only on $t$, $s$, $b'$ and $\sigma '$. 
By 
(\ref {temp_bar_delta})--(\ref {temp_exp3}), we get 
\begin{eqnarray*}
\frac{V_t(w, \varphi , s + \varepsilon  ; u) - V_t(w, \varphi  , s ; u)}{\varepsilon } 
\geq 
-\delta c(\varepsilon ) + \frac{\delta \bar{\delta }}{s}e^{-K''\varphi / \bar{\delta }} - \frac{\eta }{\varepsilon }. 
\end{eqnarray*}
Letting $\eta \rightarrow 0$ and then taking $\liminf _{\varepsilon \rightarrow 0}$, 
we see from (\ref {unif_conv_A}) that the left side of (\ref {cond_diff}) has the lower bound 
$\delta \bar{\delta }e^{ - K''\varphi / \bar{\delta }}/s > 0$. 
This completes the proof. 
\qed

\subsection{Proof of Theorem \ref {HJB_th}}\label{sec_HJB}

In Sections $\ref {sec_HJB}$ and $\ref {sec_uniqueness}$ we assume that $h$ is strictly increasing 
and $h(\infty ) = \infty $. 
First we consider the characterisation of $V^L_t(w, \varphi , s ; u)$ 
as the viscosity solution of the corresponding HJB. 
We define a function $F^L : \mathscr {S}\longrightarrow \Bbb {R}$ by 
\begin{eqnarray*}
F^L(z, p, X) = -\sup _{0\leq \zeta \leq L}
\left\{ \frac{1}{2}\hat{\sigma }(z_s)^2X_{ss} + \hat{b}(z_s)p_s + 
\zeta \left( z_sp_w - p_\varphi \right) - g(\zeta )z_sp_s\right\} . 
\end{eqnarray*}

\begin{prop}\label{HJB_L_th}
Assume $h(\infty ) = \infty $. Then, for any $u\in \mathcal {C}$, 
the function $V^L_t(w, \varphi ,\allowbreak s ; u)$ is the viscosity solution of 
\begin{eqnarray}\label{HJB2_L}
\frac{\partial }{\partial t}v + F^L(z, \mathcal {D}v, \mathcal {D}^2v) = 0 \ \ 
\mathrm {on} \ (0, 1]\times \hat{U}. 
\end{eqnarray}
\end{prop}

Since the control region $[0, L]$ is compact, we obtain 
Proposition \ref {HJB_L_th} using (\ref {semi_L}) and the standard arguments of 
the Bellman principle and HJB (see Theorem 5.4.1 in \cite{Nagai}). 

Next we treat HJB (\ref {HJB2}). 
Let $\mathscr {U} = \{ (z, p, X)\in \mathscr {S}\ ; \ F(z, p, X) > -\infty  \} $. 
A direct calculation proves the next proposition. 
\begin{prop} \ \label{prop_F_conti} For $(z, p, X)\in \mathscr {U}$,
\begin{eqnarray*}
F(z, p, X) &=& - \frac{1}{2}\hat{\sigma }(z_s)^2X_{ss} - \hat{b}(z_s)p_s\\&& - 
\max \left \{ \zeta ^*(z, p)\left( z_sp_w - p_\varphi \right) - g(\zeta ^*(z, p))z_sp_s, 0 \right \} , 
\end{eqnarray*}
where $\zeta ^*(z, p) = h^{-1}\left( \frac{z_sp_w - p_\varphi }{z_sp_s}\vee h(0)\right) 1_{\{ p_s > 0 \} }$. 
In particular, $F$ is continuous on $\mathscr {U}$. 
\end{prop}

Now we prove Theorem \ref {HJB_th}. 
We define an open set $\mathscr {R} = \hat{U}\times (\Bbb {R}^2\times (0, \infty ))\times S^3 \subset \mathscr {U}$. 
Since $F$ is continuous on $\mathscr {R}$ and $F^L$ converges to $F$ monotonically, 
we see that this convergence is uniform on any compact set in $\mathscr {R}$, by Dini's theorem. 
Similarly, using Dini's theorem again, we see that $V^L$ 
converges to $V$ uniformly on any compact set in $[0, 1]\times \hat{D}$. 
Moreover, we note that if we take $\hat{v}\in C^{1, 2}((0, 1]\times \hat{U})$ such that 
$V - \hat{v}$ has $0$ as a local maximum at $(t, z)$, then (\ref {cond_diff}) implies 
$(\partial \hat{v} / \partial z_s)(t, z) > 0$ and 
$(z, \mathcal {D}\hat{v}(t, z), \mathcal {D}^2\hat{v}(t, z))\in \mathscr {R}$. 
Then the same arguments as in the proof of Lemma 5.7.1 in \cite {Nagai} lead us to the assertion.  \qed

\subsection{Proof of Theorem \ref {unique_th}}\label{sec_uniqueness}

First we remark that Lemma \ref {nondecreasing_polygrowth} implies that 
$V_t(w, \varphi , s ; u)$ grows polynomially in $w, \varphi $ and $s$. 

Let $\tilde{U}\subset \hat{U}$ be open and bounded. 
Let $\mathscr {P}^{2, \pm }_{(0, 1]\times \tilde{U}}$ be parabolic variants of semijets and 
$\overline {\mathscr {P}}^{2, \pm }_{(0, 1]\times \tilde{U}}$ be their closures 
(see \cite {Crandall-Lions-Ishii}). 
For any $\lambda > 0$, we define $F_\lambda (z, r, p, X) = \lambda r + F(z, p, X)$. 
We see that the following are equivalent. 
\begin{description}
 \item[(a.)] \ A function $v$ is a viscosity subsolution (resp. supersolution) of (\ref {HJB2}), 
 \item[(b.)] \ A function $v_\lambda (t, z) = e^{-\lambda t}v(t, z)$ is a viscosity subsolution (resp. supersolution) of 
\begin{eqnarray}\label{HJB3}
\frac{\partial }{\partial t}v + F_\lambda (z, v, \mathcal {D}v, \mathcal {D}^2v) = 0. 
\end{eqnarray}
\end{description}

The same proof as Proposition 2.6 in \cite{Koike} gives the following lemma: 
\begin{lem} \label{semijet}
Suppose $v$ is a viscosity subsolution $($resp., supersolution$)$ of $(\ref {HJB3})$. 
Then
\begin{eqnarray*}
a + F_\lambda (z, v(t, z), p, X)\leq 0 \ ({\it resp.}, \ \geq 0)
\end{eqnarray*}
for any $(t, a, z, p, X)\in (0, 1]\times \Bbb {R}\times \tilde{U}\times \Bbb {R}^3\times S^3$ with 
$(a, p, X)\in \overline {\mathscr {P}}^{2,+}_{(0, 1]\times \tilde{U}}v(t, z)$ 
$($resp., $(a, p, X)\in \overline {\mathscr {P}}^{2,-}_{(0, 1]\times \tilde{U}}v(t, z) )$. 
\end{lem}

In particular, we note that 
\begin{eqnarray}\nonumber 
\overline {\mathscr {P}}^{2,-}_{(0, 1]\times \tilde{U}}v(t, z)
&\subset & 
\Bbb {R}\times \{(p, X)\ ; \ F_\lambda (z, v(t, z), p, X) > -\infty \} \\
&=& 
\Bbb {R}\times \{(p, X)\ ; \ F(z, p, X) > -\infty \} \label{temp_structure}
\end{eqnarray}
when $v$ is a viscosity supersolution of (\ref {HJB3}). 
Now we consider the comparison principle on a bounded domain. 

\begin{prop} \ \label{comparison_bdd} 
Suppose $v$ $($resp., $v' )$ is a viscosity subsolution $($resp., \allowbreak super\-solution$)$ of $(\ref {HJB3})$ on 
$(0, 1]\times \tilde{U}$. Moreover suppose 
$v(0, z) \leq v'(0, z)$ for $z\in \tilde{U}$ and 
$v \leq 0 \leq v'$ on $(0,1]\times \partial \tilde{U}$. 
Then $v \leq v'$ on $[0, 1]\times \tilde{U}$. 
\end{prop}

By 
(\ref {temp_structure}) and Theorem 8.12 in \cite{Crandall-Lions-Ishii}, 
we see that to prove Proposition \ref {comparison_bdd} 
it suffices to show the following Proposition \ref{structure_prop}.

\begin{prop} \ \label{structure_prop} 
The function $F_\lambda $ satisfies 
\begin{eqnarray*}
F_\lambda (z', r, \alpha (z-z'), Y) - F_\lambda (z, r, \alpha (z-z'), X) \leq 
\rho \left ( \alpha |z-z'|^2 + |z-z'| \right )
\end{eqnarray*}
for $\lambda > 0$, $\alpha > 1$, $r\in \Bbb {R}$, 
$z, z'\in \tilde{U}$, $X, Y\in S^3$ with $F(z', \alpha (z-z'), Y) > -\infty $ and 
\begin{eqnarray}\label{cond_XY}
-3\alpha 
\left(
\begin{array}{cc}
 I	& O	\\
 O	& I
\end{array}
\right)
\leq 
\left(
\begin{array}{cc}
 X	& O	\\
 O	& -Y
\end{array}
\right)
\leq 
3\alpha 
\left(
\begin{array}{cc}
 I	& -I	\\
 -I	& I
\end{array}
\right) , 
\end{eqnarray}
where $I\in \Bbb {R}^3\otimes \Bbb {R}^3$ denotes the unit matrix and 
$\rho : [0, \infty ) \longrightarrow [0, \infty )$ is a continuous function with $\rho (0) = 0$. 
\end{prop}

\begin{proof} 
Note that $F(z', \alpha (z-z'), Y) > -\infty $ implies 
$(z', \alpha (z-z'), Y)\in \mathscr {U}$, and thus 
\noindent  either (i) $z_s > z'_s$ or 
(ii) $z_s = z'_s$ and $z'_s(z_w - z'_w) - (z_\varphi - z'_\varphi ) \leq  0$. 
In either case, we have $F(z, \alpha (z-z'), X) > -\infty $ and 
\begin{eqnarray}\nonumber 
&&F_\lambda (z', r, \alpha (z-z'), Y) - F_\lambda (z, r, \alpha (z-z'), X) \\\nonumber 
&=& 
F(z', \alpha (z-z'), Y) - F(z, \alpha (z-z'), X)\\\nonumber 
&\leq &
\frac{1}{2}(\hat{\sigma }^2(z_s)X_{ss} - \hat{\sigma }^2(z'_s)Y_{ss}) + 
|\hat{b}(z_s) - \hat{b}(z'_s)|\alpha |z_s - z'_s|\\ \label{temp_proof}
&& + 
\alpha \sup _{\zeta \geq 0}\left\{ 
-(z_s - z'_s)^2g(\zeta ) + (z_s - z'_s)(z_w - z'_w)\zeta 
\right\} . 
\end{eqnarray}
Since (\ref {cond_XY}) implies 
\begin{eqnarray*}
\hat{\sigma }^2(z_s)X_{ss} - \hat{\sigma }^2(z'_s)Y_{ss} \leq 3\alpha (\hat{\sigma }(z_s) - \hat{\sigma }(z'_s))^2 
\end{eqnarray*}
and $\hat{\sigma }$ and $\hat{b}$ are both Lipschitz continuous and demonstrate linear growth, 
we have 
\begin{eqnarray*}
\frac{1}{2}(\hat{\sigma }^2(z_s)X_{ss} - \hat{\sigma }^2(z'_s)Y_{ss}) + 
|\hat{b}(z_s) - \hat{b}(z'_s)|\alpha |z_s - z'_s|
\leq  C_0\alpha |z_s - z'_s|^2
\end{eqnarray*}
for some $C_0 > 0$. 

Next we estimate the last term of the right side of (\ref {temp_proof}). 
If $z_s = z'_s$, it is obvious that this term is equal to zero, 
so we consider the case $z_s > z'_s$. 
Since $\liminf _{\zeta \rightarrow \infty }(h(\zeta )/\zeta ) > 0$, we see that 
there exist $\beta > 0$ and $\zeta _0 > 0$ such that 
$g(\zeta ) \geq \beta \zeta ^2$ for any $\zeta \geq \zeta _0$. 
Thus 
\begin{eqnarray*}
&&\sup _{\zeta \geq 0}\left\{ 
-(z_s - z'_s)^2g(\zeta ) + (z_s - z'_s)(z_w - z'_w)\zeta 
\right\} \\
&\leq &
(g(\zeta _0) + \zeta _0)|z - z'|^2 + 
\sup _{\zeta \geq 0}\left\{ 
-(z_s - z'_s)^2\beta \zeta ^2 + (z_s - z'_s)(z_w - z'_w)\zeta 
\right\} \\
&\leq & 
(g(\zeta _0) + \zeta _0)|z - z'|^2 + 
|z_w - z'_w|\left( \frac{z_w - z'_w}{2\beta } \vee 0\right)
\ \leq \ 
C_1|z-z'|^2
\end{eqnarray*}
for some $C_1 > 0$. 
Thus we obtain the assertion. 
\end{proof}

Now we present a proposition that includes the assertion of Theorem \ref {unique_th}. 

\begin{prop} \ \label{comparison_general}
Let $v$ $($resp., $v' )$ be functions such that 
\begin{eqnarray*}
|v(t, z)| + |v'(t, z)| \leq C(1 + z_w^2 + z_\varphi ^2 + z_s^2)^m, \ \ (t, z)\in [0, 1]\times \hat{D} 
\end{eqnarray*}
for some $C, m > 0$. 
Suppose that $v$ $($resp., $v')$ is a viscosity subsolution $($resp., supersolution$)$ of $(\ref {HJB2})$ on 
$(0, 1]\times \hat{D}$. 
Suppose further that $v$ and $v'$ satisfy $($\ref {bdd_cond}$)$. 
Then $v \leq v'$ on $[0, 1]\times \hat{D}$. 
\end{prop}

\begin{proof} 
Let $q(z) = (1 + z_w^2 + z_\varphi ^2 + z_s^2)^{m+1}$. 
By the similar arguments as in the proof of Proposition \ref {structure_prop}, we have 
\begin{eqnarray*}
|F(z, \mathcal {D}q(z), \mathcal {D}^2q(z))|\leq C_0q(z), \ \ z\in \hat{D}
\end{eqnarray*}
for some $C_0 > 0$. 
Let $\lambda > C_0$ and fix a value of $\varepsilon > 0$. 
We define $\bar{v}(t, z) = e^{-\lambda t}v(t, z) - \varepsilon q(z)$ and 
$\bar{v}'(t, z) = e^{-\lambda t}v'(t, z) + \varepsilon q(z)$. 
Then there exists an $R_\varepsilon > 0$ such that 
$\bar{v} < 0 < \bar{v}'$ holds on $[0, 1]\times \{ |z| \geq  R_\varepsilon  \}$. 
By a straightforward calculation, we see that $\bar{v}$ (resp. $\bar{v}'$) is a viscosity subsolution 
(resp. supersolution) of (\ref {HJB3}). Thus Proposition \ref {comparison_bdd} implies 
$\bar{v} \leq \bar{v}'$ on $[0, 1]\times \hat{D}$. 
Since $\varepsilon > 0$ was arbitrary, we obtain the assertion. 
\end{proof}

\subsection{Proof of Theorem \ref {converge}}\label{proof_of_converge}

We divide the proof of Theorem \ref {converge} into the following two propositions: 

\begin{prop} \ \label{converge_1} 
$\limsup _{n\rightarrow \infty }V^n_{[nt]}(w,\varphi ,s;u)\leq V_t(w,\varphi ,s;u)$. 
\end{prop}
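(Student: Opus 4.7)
The plan is to start with a near-optimal discrete-time strategy, embed it into $\mathcal{A}_t(\varphi)$ as a piecewise-constant continuous strategy, compare the resulting trajectories using condition $[A]$ together with Gronwall-Lipschitz SDE estimates, and then pass to the limit via Lemma \ref{conti_u}.

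Fix $\eta > 0$. For each $n$, choose an admissible $(\psi^n_l)_{l=0}^{[nt]-1} \in \mathcal{A}^n_{[nt]}(\varphi)$ with $V^n_{[nt]}(w, \varphi, s; u) \leq \E[u(W^n_{[nt]}, \varphi^n_{[nt]}, S^n_{[nt]})] + \eta$, which exists by definition of the supremum. Then define a continuous-time strategy $\zeta^n_r = n\psi^n_l$ for $r \in [l/n, (l+1)/n)$ with $l \leq [nt]-1$, and $\zeta^n_r = 0$ for $r \in [[nt]/n, t]$. Progressive measurability follows from the $\mathcal{F}_{l/n}$-measurability of $\psi^n_l$; nonnegativity is clear; $\int_0^t \zeta^n_r \, dr = \sum_l \psi^n_l \leq \varphi$; and $\zeta^n_r \leq n\Phi_0 < \infty$, so $(\zeta^n_r)_r \in \mathcal{A}_t(\varphi)$.

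Let $(W^n_l, \varphi^n_l, S^n_l)_l = \Xi^n_{[nt]}(w, \varphi, s; (\psi^n_l)_l)$ and $(W^c_r, \varphi^c_r, S^c_r)_r = \Xi_t(w, \varphi, s; (\zeta^n_r)_r)$, both realized on the same probability space with the same Brownian motion. The central claim is that
\[ \E\bigl[\,|W^n_{[nt]} - W^c_t| + |\varphi^n_{[nt]} - \varphi^c_t| + |S^n_{[nt]} - S^c_t|\,\bigr] \longrightarrow 0 \quad (n \to \infty). \]
For $\varphi$, one has $\varphi^n_l = \varphi^c_{l/n}$ exactly, and $\varphi^c_t = \varphi^c_{[nt]/n}$ since $\zeta^n$ vanishes after $[nt]/n$. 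For the log-price $X^c_r = \log S^c_r$ compared with $X^n_l$, the discrete impact at time $l/n$ is $g_n(\psi^n_l)$ while the continuous impact integrated over $[l/n, (l+1)/n)$ is $g(n\psi^n_l)/n$; by $(\ref{conv_g})$ their difference is at most $\psi^n_l \varepsilon_n$, and the cumulative difference is bounded by $\Phi_0 \varepsilon_n \to 0$. The remaining discrepancy, coming from evaluating $\sigma$ and $b$ along slightly different paths, is handled by Burkholder-Davis-Gundy combined with Lipschitz continuity of $\sigma, b$ and Lemma \ref{Gronwall} applied on each subinterval, yielding $L^2$-convergence. The analogous bounds, together with Lemma \ref{cond_of_X1} to control the oscillation of $S^c$ on each subinterval, transfer this control to $W$: the discrete wealth increment $\psi^n_l S^n_l e^{-g_n(\psi^n_l)}$ and the continuous wealth increment $\int_{l/n}^{(l+1)/n} n\psi^n_l S^c_r \, dr$ share the weight $\psi^n_l$ and differ only through factors controlled by the already-estimated $|S^n_l - S^c_r|$. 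Uniform moment bounds on $S^n$ and $S^c$ (both dominated by the impact-free SDE via Lemma \ref{cond_of_X3}, hence controlled by Lemma \ref{cond_of_X2}) verify the hypotheses of Lemma \ref{conti_u}, applied to the singleton family $\Gamma_n = \{(\psi^n_l)_l\}$, whence $|\E[u(W^n_{[nt]}, \varphi^n_{[nt]}, S^n_{[nt]})] - \E[u(W^c_t, \varphi^c_t, S^c_t)]| \to 0$.

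Since $\E[u(W^c_t, \varphi^c_t, S^c_t)] \leq V_t(w, \varphi, s; u)$ by definition, the $\eta$-optimality gives $V^n_{[nt]}(w, \varphi, s; u) \leq V_t(w, \varphi, s; u) + \eta + o(1)$; taking $\limsup_n$ and then $\eta \downarrow 0$ completes the proof. The main obstacle is the SDE comparison: the discrete process takes the impact as an instantaneous drop at each grid time, while the continuous process spreads it uniformly as a drift over the interval. Showing that these produce asymptotically indistinguishable log-price paths---despite the continuous strategies $\zeta^n$ being allowed to grow as large as $n\Phi_0$---is precisely where condition $[A]$ (matching the jump size to the integrated drift to within $O(\varepsilon_n)$) and the Gronwall-Lipschitz machinery are indispensable.
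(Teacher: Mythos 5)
You follow the paper's overall strategy (embed a near-optimal discrete strategy as a piecewise-constant $\zeta^n_r = n\psi^n_{[nr]}$, compare trajectories pathwise, pass to the limit via Lemma~\ref{conti_u}), but there is a genuine gap: you omit the crucial a priori bound on the size of the optimal discrete executions, which is Step~1 of the paper's proof. Specifically, the paper proves that the dynamic-programming optimal strategy satisfies $g_n(\psi^n_l) \leq C^* \wedge (c^*_n\psi^n_l)$ with $c^*_n/n \to 0$, by exploiting monotonicity of the value function together with the fact that the one-step proceeds $p_n(\psi)=\psi e^{-g_n(\psi)}$ are maximized at a small argument $\hat\psi^*_n$. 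Your proposal instead picks an arbitrary $\eta$-optimal $(\psi^n_l)_l$ and never controls the magnitude of $\psi^n_l$; it may be as large as $\Phi_0$.

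Why this matters: the estimate you highlight — that the cumulative impact differs by at most $\Phi_0\varepsilon_n$ via (\ref{conv_g}) — is correct only at grid points $k/n$. For the Gronwall argument you need to bound $|\tilde X^n_r - X_r|$ for arbitrary intermediate $r$, where the discrete process has already absorbed the full jump $g_n(\psi^n_{[nr]})$ but the continuous one has absorbed only the fraction $(r-[nr]/n)g(n\psi^n_{[nr]})$. The resulting boundary discrepancy is of order $g_n(\psi^n_{[nr]})$, and the Gronwall integral must control $\frac{1}{n}\sum_l g_n(\psi^n_l)^2$. Without Step~1's bound, this quantity can blow up: in the log-quadratic example ($g_n(\psi)=\alpha_n\psi^2$, $\alpha_n \sim n\alpha$), a single index $l$ with $\psi^n_l = \Phi_0$ makes $g_n(\psi^n_l) \sim n$ and hence $\frac{1}{n}g_n(\psi^n_l)^2 \sim n \to \infty$. (Such a strategy cannot actually be $\eta$-optimal, but you would need to argue this — for instance by showing that truncating each $\psi^n_l$ at $\hat\psi^*_n$ pathwise improves the payoff, so one may assume $\psi^n_l \leq \hat\psi^*_n$ without loss of generality — which is precisely the content the paper supplies and your write-up does not.) You flag this obstacle in your last paragraph but do not resolve it; condition $[A]$ and the Gronwall machinery alone are not enough.
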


\begin{prop} \ \label{converge_2} 
$\liminf _{n\rightarrow \infty }V^n_{[nt]}(w,\varphi ,s;u)\geq V_t(w,\varphi ,s;u)$. 
\end{prop}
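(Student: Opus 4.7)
The plan is to mirror Proposition \ref{converge_1} in reverse: given $(\zeta_r)_r\in\mathcal{A}_t(\varphi)$, produce a discrete strategy $(\psi^n_l)_l\in\mathcal{A}^n_{[nt]}(\varphi)$ whose expected utility converges to $\E[u(W_t,\varphi_t,S_t)]$, and take the supremum over $\zeta$ at the end. The definition of $\mathcal{A}_t(\varphi)$ supplies the uniform bound $K:=\sup_{r,\omega}\zeta_r(\omega)<\infty$ that makes this reversal painless.

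I use the \emph{lagged-average} discretization $\psi^n_0=0$ and $\psi^n_l=\int_{(l-1)/n}^{l/n}\zeta_r\,dr$ for $1\le l\le[nt]-1$. Progressive measurability of $\zeta$ gives $\psi^n_l\in\mathcal{F}_{l/n}$, and $\sum_{l=0}^{[nt]-1}\psi^n_l=\int_0^{([nt]-1)/n}\zeta_r\,dr\le\varphi$, so $(\psi^n_l)_l\in\mathcal{A}^n_{[nt]}(\varphi)$. The uniform bound $n\psi^n_l\le K$ then plays the role of the a priori estimate (\ref{est_opt}) of Proposition \ref{converge_1}: condition $[A]$ and (\ref{conv_g}) yield $g_n(\psi^n_l)\le\tfrac{1}{n}(g(K)+K\varepsilon_n)\to 0$ uniformly in $l$. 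The piecewise-constant interpolant $\bar\zeta^n_r:=n\psi^n_{[nr]}$ is bounded by $K$ and, by Lebesgue's differentiation theorem applied $\omega$-wise, satisfies $\bar\zeta^n_r\to\zeta_r$ for almost every $(r,\omega)$.

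With the a priori control in place, Steps 2 and 3 of Proposition \ref{converge_1} replay essentially verbatim. Define $\tilde X^n_r$ as in (\ref{def_of_tildeX}) and let $(X_r,W_r,\varphi_r)_r=\Xi^X_t(w,\varphi,s;(\zeta_r)_r)$. The Burkholder-Davis-Gundy inequality together with Lemma \ref{Gronwall} yields $\E[\max_k|X^n_k-X_{k/n}|^2]\to 0$, where in this direction the dominated convergence $\int_0^r g(\bar\zeta^n_v)\,dv\to\int_0^r g(\zeta_v)\,dv$ plays the role of the vanishing bound $C^*c^*_n/n$ used in Proposition \ref{converge_1}. A parallel estimate mimicking (\ref{temp4_1})-(\ref{temp4_2}), via (\ref{exp_log}) and Lemma \ref{cond_of_X2}, delivers $\E[|W^n_{[nt]}-W_t|]\to 0$ and $\E[|S^n_{[nt]}-S_t|]\to 0$; meanwhile $\varphi^n_{[nt]}\to\varphi_t$ by bounded convergence. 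Lemma \ref{conti_u} then converts these into convergence of expected utilities, so $\liminf_n V^n_{[nt]}(w,\varphi,s;u)\ge\E[u(W_t,\varphi_t,S_t)]$, and taking the supremum over $(\zeta_r)_r\in\mathcal{A}_t(\varphi)$ concludes.

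The main obstacle is reconciling $\mathcal{F}_{l/n}$-adaptedness with the budget constraint $\sum\psi^n_l\le\varphi$: the naive choice $\psi^n_l=\zeta_{l/n}/n$ is adapted but its Riemann sum may exceed $\varphi$, whereas the forward average $\int_{l/n}^{(l+1)/n}\zeta_r\,dr$ preserves the budget but is only $\mathcal{F}_{(l+1)/n}$-measurable. The one-step lag resolves both simultaneously at the cost of an $O(1/n)$ boundary term that is absorbed in the limit. Equally essential is the uniform bound built into $\mathcal{A}_t(\varphi)$, which prevents $n\psi^n_l$, and hence $g_n(\psi^n_l)$, from blowing up and legitimates dominated convergence in the market-impact term; without it the strategy of decoupling the market-impact nonlinearity from the integral dynamics would fail.
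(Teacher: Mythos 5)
Your proposal is correct and matches the paper's own proof essentially line by line: the paper also uses the one-step-lagged discretization $\psi^n_l=\int_{((l-1)/n)\vee 0}^{l/n}\zeta_r\,dr$ (so $\psi^n_0=0$), relies on the built-in bound $K=\sup_{r,\omega}\zeta_r$ in place of the a priori estimate $(\ref{est_opt})$, proves $\E[\max_k|X^n_k-X_{k/n}|^2]\to 0$ by a Gronwall/dominated-convergence argument driven by Lebesgue's differentiation theorem, pushes this through to $W$ and $S$ via $(\ref{exp_log})$ and Lemma \ref{cond_of_X2}, and finishes with Lemma \ref{conti_u} and a supremum over $(\zeta_r)_r$. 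Your identification of the adaptedness-versus-budget tension and its resolution by the lag is exactly the role of the $((l-1)/n)\vee 0$ lower limit in the paper.
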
\vspace{0mm}\ \\
{\it Proof of Proposition \ref {converge_1}.} 
For brevity, we suppose $t=1$. 
For $u'\in \mathcal {C}$ and $(w',\varphi ',s')\allowbreak \in D$, let 
$\hat{\psi }_n(w',\varphi ',s' ; u')$ be an optimal strategy for the 
value function $V^n_1(w',\allowbreak \varphi ',s' ; u')$. 
By Proposition 7.33 in \cite {Bertsekas-Shreve}, we can take 
$\hat{\psi }_n(w',\varphi ',s' ; u')$ as a measurable function 
with respect to $(w', \varphi ', s')$. 
We define $(\psi ^n_l)^{n-1}_{l=0}\in 
\mathcal {A}^n_n(\varphi )$ and 
$(W^n_l,\varphi ^n_l, S^n_l)^n_{l=0}$ by $(W^n_0,\varphi ^n_0,S^n_0) = (w,\allowbreak \varphi ,s)$, 
$\psi ^n_l = \hat{\psi }_n(W^n_l, \varphi ^n_l, S^n_l ; V^n_{n-l-1}(\cdot ; u )) \wedge 
\varphi ^n_l$, 
(\ref {W_varphi})--(\ref {fluctuate_X}) inductively in $l$ and 
let $X^n_l = \log S^n_l$. 
Note that $(\psi ^n_l)_l$ is optimal, i.e., 
$V^n_n(w, \varphi , s ; u) = \E [u(W^n_n, \varphi ^n_n, S^n_n)]$. 
We also define a strategy $(\zeta _r)_{0\leq r\leq1}$ by 
$\zeta _r = n\psi ^n_{[nr]}$. Then 
$(\zeta _r)_r\in \mathcal {A}_1(\varphi )$. 
Let $(W_r,\varphi _r,X_r)_{0\leq r\leq 1} = \Xi ^X_1(w,\varphi ,s ; (\zeta _r)_r)$. \vspace{2mm}\\
{\it Step 1.} \ 
First we show that there is a constant $C^*>0$ and 
a sequence $(c^*_n)_{n\in \Bbb {N}}\subset (0,\infty )$ with $c^*_n/n\longrightarrow 0$ as 
$n\rightarrow \infty $ such that 
\begin{eqnarray*}
g_n(\psi ^n_l) \leq C^*\wedge (c^*_n\psi ^n_l), \ \ l=0,\ldots ,n-1. 
\end{eqnarray*}

If $h(\infty )<\infty $, the assertion is obvious. 
So we may assume $h(\infty ) = \infty $. 
Let $p_n(\psi ) = \psi e^{-g_n(\psi )}$ for $\psi \in [0,\Phi _0]$. 
This function implies the proceeds of liquidating $\psi $ shares of the security of the price $1$. 
The main intuition behind the following argument is that 
MI for a large sale is so large that 
larger sales result in smaller proceeds, that is, 
$p_n(\psi )$ is not increasing with respect to $\psi $, 
thus the optimal liquidation volumes at each time cannot become so large 
(for a typical example, when $g_n (\psi ) = n\alpha \psi ^2$ with $\alpha > 0$, 
the optimal volumes are smaller than $1/\sqrt{2\alpha n}$). 

We can easily see that $\frac{d}{d\psi }p_n(\psi ) = e^{-g_n(\psi )}(1-f_n(\psi ))$, 
where $f_n(\psi ) = \psi \frac{d}{d\psi }g_n(\psi )$. 
So the first-order condition $\frac{d}{d\psi }p_n(\psi ) = 0$ is equivalent to 
$f_n(\psi ) = 1$. 
Let
$A_n = \{ \psi \in (0,\Phi _0]\ ; \ f_n(\psi ) = 1\} $. 
By [A] and the assumption $h(\infty ) = \infty $, 
we see that $A_n$ is not empty and 
the function $p_n(\psi )$ has a maximum at one of the points in $A_n$ for sufficiently large $n$. 
We denote by $\psi ^*_n$ a point at which $p_n(\psi )$ has a maximum. 

We see that $p_n(\psi )\leq p_n(\psi ^*_n)$ 
for $\psi \in (\psi ^*_n, \Phi _0]$ and that 
Lemma \ref{cond_of_X3} implies 
that $Y(t ; r, x - g_n(\psi ))$ is non-increasing with respect to $\psi $. 
Moreover the function $u(w, \varphi ,s)$ is non-decreasing in $(w,\varphi ,s)$. 
Thus $\hat{\psi }_n(w,\varphi ,s ; u)\leq \psi ^*_n$ holds for large $n$. 
Then, by the definition of $\psi ^n_l$, we get 
\begin{eqnarray}\label{temp_optest}
\psi ^n_l\leq \psi ^*_n, \ \ l=0,\ldots ,n-1\ \ \mathrm {and} \ \ n > n_0
\end{eqnarray}
for some $n_0\in \Bbb {N}$. 
Moreover, [A] implies 
\begin{eqnarray}\label{psi^*}
n\psi ^*_n\longrightarrow \infty ,\ \ n\rightarrow \infty . 
\end{eqnarray}
Indeed, if (\ref {psi^*}) does not hold, there is a constant $M > 0$ and a subsequence $(n_k)_k\subset \Bbb {N}$ 
such that $n_k\psi ^*_{n_k} \leq M$. 
Then 
\begin{eqnarray*}
n_k = n_kf_{n_k}(\psi ^*_{n_k}) \leq n_k\psi ^*_{n_k}(h(n_k\psi ^*_{n_k}) + \varepsilon '_{n_k}) \leq M(h(M) + \varepsilon '_{n_k}) 
\end{eqnarray*}
for any $k$, where $\varepsilon '_n = \sup _{\psi }\left| \frac{dg_n}{d\psi }(\psi ) - h(n\psi )\right| $. 
This is a contradiction. 

Since $h(\zeta )$ is non-decreasing and $f_n(\psi ^*_n) = 1$, we have 
\begin{eqnarray}\label{temp_optest2}
g_n(\psi) \leq  
\Big(\frac{1}{\psi ^*_n} + 2\varepsilon '_n\Big)\psi , \ \ \psi \in [0,\psi ^*_n] 
\end{eqnarray}
for any $n\in \Bbb {N}$. 
By (\ref {temp_optest})--(\ref {temp_optest2}), 
we have the assertion by letting 
\begin{eqnarray*}
C^* = \max _{n \leq n_0}g_n(\Phi _0) + 1 + 2\Phi _0\sup _n\varepsilon '_n, \ \ 
c^*_n = \frac{1}{\psi ^*_n} + 2\varepsilon '_n. 
\end{eqnarray*}
{\it Step 2.} \ 
In this step we will show that 
\begin{eqnarray}\label{comparison_X}
\lim _{n\rightarrow \infty }\E [\max _{k=0,\ldots ,n}|X^n_k - X_{k/n}|^2] = 0. 
\end{eqnarray}

We define $\tilde{X}^n_r,\ r\in [0,1]$, by 
\begin{eqnarray}\label{def_of_tildeX}
\tilde{X}^n_r = Y\Big (r ; \frac{k}{n}, X^n_k-g_n(\psi ^n_k)\Big ),\ \ 
r\in \Big (\frac{k}{n}, \frac{k+1}{n}\Big ] 
\end{eqnarray}
and $\tilde{X}^n_0 = \log s$. 
Then we see that $\tilde{X}^n_{k/n} = X^n_k$ for each $k=0,\ldots , n$ and that 
$\tilde{X}^n_r$ satisfies 
\begin{eqnarray*}
\tilde{X}^n_r = \log s+\int ^r_0\sigma (\tilde{X}^n_v)dB_v + \int ^r_0b(\tilde{X}^n_v)dv - 
\sum ^{\lceil nr\rceil -1}_{k=0}g_n(\psi ^n_k), 
\end{eqnarray*}
where 
$\lceil \cdot \rceil $ is the ceiling function 
and $\sum ^{-1}_{k=0}g_n(\psi ^n_k) = 0$. 

Let $\Delta ^n_r = \E \Big [\max \Big \{ |\tilde{X}^n_{r'}-X_{r'}|^2\ ; \ 
r'=0,\frac{1}{n},\ldots ,\frac{[nr]}{n},r\Big \} \Big ]$. 
We have 
\begin{eqnarray}\label{temp_g_est}
\Big| \sum ^{\lceil nr'\rceil -1}_{k=0}g_n(\psi ^n_k) - \int ^{r'}_0g(\zeta _v)dv\Big| 
&\leq & 
\sum ^{n-1}_{k=0}\Big| g_n(\psi ^n_k)-\frac{1}{n}g(n\psi ^n_k)\Big| + 
d_n(r)g_n(\psi ^n_{[nr]}) \hspace{10mm}
\end{eqnarray}
for $r'=0,1/n,\ldots ,[nr]/n,r$, where 
$d_n(r) = \lceil nr\rceil - nr$. 
By Step 1, (\ref {temp_g_est}) and standard arguments using 
the Burkholder--Davis--Gundy inequality and the H\"older inequality, 
we can show that 
\begin{eqnarray*}
\Delta ^n_r \leq 
C_0\left \{ 
\gamma _n(r) + \E \left [ \int ^r_0|\tilde{X}_v-X_v|^2dv\right ] 
\right \} \leq 
C_0\left \{ \gamma _n(r) + \int ^r_0\Delta ^n_vdv\right \} 
\end{eqnarray*}
for some constant $C_0>0$, where 
$\gamma _n(r) = \Phi _0^2\varepsilon _n^2 + C^*c^*_nd_n(r)^2\E [\psi ^n_{[nr]}]$ and 
$\varepsilon _n$ is defined by (\ref {conv_g}). 
Now the generalised Gronwall inequality (see 
Lemma 10.5.1.3 in \cite {Dieudonne} 
for instance) implies 
\begin{eqnarray*}
\Delta ^n_r &\leq & C_0\gamma _n(r) + C_0^2
\int ^r_0\gamma _n(v)e^{C_0(r-v)}dv. 
\end{eqnarray*}
Since $0\leq d_n(v) \leq 1$ for $v\in [0,1]$ and $d_n(1)=0$, we have 
\begin{eqnarray}\label{temp_conv_1}
\E [\max _{k=0,\ldots ,n}|X^n_k-X_{k/n}|^2] = \Delta ^n_1 
\leq C_1\left \{ \Phi _0^2\varepsilon _n^2 + \frac{\Phi _0C^*c^*_n}{n}\right \} 
\end{eqnarray}
for some $C_1>0$. By (\ref {conv_g}) and the assertion of Step 1, 
the right side of (\ref {temp_conv_1}) tends to zero as $n\rightarrow \infty $. 
Then we have (\ref {comparison_X}). \vspace{2mm}\\
{\it Step 3.} \ 
Let $\tilde{W}^n_n = w+\sum ^{n-1}_{l=0}\int ^{(l+1)/n}_{l/n}
n\psi ^n_l\exp (X^n_l-(nr-l)g_n(\psi ^n_l))dr$. 
From (\ref {exp_log}), 
it follows that 
\begin{eqnarray}\label{est_In}
|\tilde{W}^n_n-W_1| 
\leq 
\Phi _0(\hat{Z}(s)+1)^2I_n, 
\end{eqnarray}
where 
\begin{eqnarray*}
I_n = \max _{l=0,\ldots ,n-1}\sup _{r\in [l/n,(l+1)/n]}|X^n_l - (nr-l)g_n(\psi ^n_l)-X_r|. 
\end{eqnarray*}
By Lemma \ref {cond_of_X1} and a straightforward calculation, we have 
\begin{eqnarray}\label{d_n}
\E [I_n^2]^{1/2}
\leq 
\frac{C_0}{n^{1/4}} + \E [\max _{k=0,\ldots ,n}|X^n_k - X_{k/n}|^2]^{1/2} + \Phi_0\varepsilon _n
\end{eqnarray}
for some $C_0>0$. 
From (\ref {comparison_X}), (\ref {est_In}), and Lemma \ref {cond_of_X2}, 
we get the convergence $\E [|\tilde{W}^n_n-W_1|] \longrightarrow 0$ as $n\rightarrow \infty $. 
On the other hand, (\ref {exp_log}), (\ref {comparison_X}), and Lemma \ref {cond_of_X2} yield 
$\E [|S^n_n-\exp (X_1)|]\allowbreak \longrightarrow 0$. 
Therefore, we can apply Lemma \ref {conti_u} to obtain 
\begin{eqnarray}\label{temp_step3}
\lim _{n\rightarrow \infty }
\big| \E [u(\tilde{W}^n_n, \varphi ^n_n, S^n_n)] - 
\E [u(W_1, \varphi _1, \exp (X_1))]\big| = 0. 
\end{eqnarray}
Since $(\psi ^n_l)_l$ is optimal, 
$u$ is non-decreasing in $w$, and $\tilde{W}^n_n\geq W^n_n$, we have 
\begin{eqnarray}\label{temp_step3_2}
V^n_n(w,\varphi ,s ; u)-V_1(w,\varphi ,s ; u)
\leq 
\E [u(\tilde{W}^n_n,\varphi ^n_n,S^n_n)] - \E [u(W_1, \varphi _1, \exp (X_1))]. 
\hspace{5mm}
\end{eqnarray}
Now the assertion of Proposition \ref {converge_1} 
is given by (\ref {temp_step3}) and (\ref {temp_step3_2}). \qed \\\\
{\it Proof of Proposition \ref {converge_2}.} 
Again we suppose $t=1$. \enlargethispage*{2\baselineskip }
Take any $(\zeta _r)_{0\leq r\leq 1}\in \mathcal {A}_1(\varphi )$ and let 
$\psi ^n_l = \int ^{l/n}_{((l-1)/n)\vee 0}\zeta _rdr$\vspace{1mm}, where 
$a\vee b = \max \{ a,b \}$. 
Then we have $(\psi ^n_l)_l\in \mathcal {A}^n_n(\varphi )$. 
Let 
$(W_r,\varphi _r,X_r)_{0\leq r\leq 1} = 
\Xi ^X_1(w,\varphi ,s ; (\zeta _r)_r)$ and 
$(W^n_l, \varphi ^n_l, S^n_l)^n_{l=0} = 
\Xi ^n_n(w,\allowbreak \varphi ,s ; \allowbreak (\psi ^n_l)_l)$. 
Put $X^n_l = \log S^n_l$. 
By arguments similar to those used in the proof of 
Proposition \ref {converge_1} and Lebesgue's differentiation theorem, 
we get 
$\E [\max _{k=0,\ldots ,n}|X^n_k \allowbreak - X_{k/n}|^2]\longrightarrow 0$ 
as $n\rightarrow \infty $, which also implies 
$\E [|S^n_n - \exp (X_1)|] \longrightarrow  0$. 

Next, let $\hat{W}^n_1 = w+\sum ^{n-1}_{l=0}\psi ^n_ln\int ^{(l+1)/n}_{l/n}\exp (X_r)dr$. 
By a straightforward calculation, we have 
$\E [|\hat{W}^n_1-W_1|]
\leq 
C_0\Big\{ \Phi _0\tilde{I}_n + 
K_n + n^{-1} \Big\} $ 
for some $C_0>0$ depending only on $b, \sigma ,(\zeta _r)_r$ and $s$, where 
\begin{eqnarray*}
&&\tilde{I}_n = \E [\sup _{v\in [0,1-1/n]}|X_{v+1/n}-X_v|^2]^{1/2}, \\
&&K_n = \Big( \int ^1_0\E [|H_n(r)|^2]dr\Big) ^{1/2}, \ \ 
H_n(r) = n\int ^{([nr]+1)/n}_{[nr]/n}\zeta _vdv-\zeta _r. 
\end{eqnarray*}
Lebesgue's differentiation theorem and the dominated convergence theorem imply 
$K_n\longrightarrow 0$. 
From $\sup _{r, \omega }\zeta _r(\omega ) < \infty $ and 
Lemma \ref {cond_of_X1}, we can easily show that 
$\tilde{I}_n \longrightarrow 0$. 
Then we obtain $\E [|\hat{W}^n_1-W_1|] \longrightarrow  0$. 
On the other hand, a similar calculation to Step 2 of the proof of Proposition \ref {converge_1} implies 
$\E [|W^n_n-\hat{W}^n_1|]\longrightarrow 0$. 
Thus $\E [|W^n_n-W^n_1|] \longrightarrow 0$ converges. 
Then we can apply Lemma \ref {conti_u} and we get 
\begin{eqnarray*}
\E [u(W_1,\varphi _1,\exp (X_1))] = 
\lim _{n\rightarrow \infty }\E [u(W^n_n,\varphi ^n_n,S^n_n)] \leq 
\liminf _{n\rightarrow \infty }V^n_n(w,\varphi ,s ; u). 
\end{eqnarray*} 
Since $(\zeta _r)_r\in \mathcal {A}_1(\varphi )$ is arbitrary, 
we obtain the assertion. \qed \

\subsection{Proof of Proposition \ref {prop_deterministic}}\label{proof_deterministic}

First we introduce the following lemma: 

\begin{lem} \ \label {deterministic_discrete} 
Under \rm {[A]} and the assumptions of Section \ref {sec_eg}, it holds that 
\begin{eqnarray}\label{temp_Vnk_f}
V^n_k(w, \varphi , s ; u_{\mathrm {RN}}) = w + sf^n_k(\varphi ),
\end{eqnarray} 
where $f^n_k$ is defined by (\ref {f_n_k1})--(\ref {f_n_k3}). 
\end{lem}

\begin{proof} 
This lemma is proved by mathematical induction. 
First, the assertion is obvious when $k = 0$. 
Next, we assume that $V^n_k(w, \varphi , s ; u_{\mathrm {RN}}) = w + sf^n_k(\varphi )$ for some $k$. 
Then the standard arguments of the Bellman equation (\cite {Bertsekas-Shreve}) imply 
\begin{eqnarray}\nonumber 
V^n_{k+1}(w, \varphi , s ; u_{\mathrm {RN}}) &=& 
\sup _{\psi \in [0, \varphi ]}
\E [V^n_k(w + \psi se^{-g_n(\psi )}, \varphi - \psi , se^{-\mu /n + \sigma B_{1/n} - g_n(\psi )})]\\
&=& w + s\sup _{\psi \in [0, \varphi ]}\{ \psi e^{-g_n(\psi )} + e^{-\tilde{\mu }/n-g_n(\psi )}f^n_k(\varphi - \psi )\} . 
\label{temp_discrete_0}
\end{eqnarray}
Now take any $\psi \in [0, \varphi ]$. 
The arguments in the beginning part of the proof of Proposition \ref {converge_1} tell us that 
$f^n_k(\varphi - \psi )$ can be written as 
\begin{eqnarray*}
f^n_k(\varphi - \psi ) = 
\sum ^{k-1}_{l = 0}\hat{\psi }^n_l\exp \left( -\tilde{\mu }\times \frac{l}{n} - \sum ^{l}_{m=0}g_n(\hat{\psi }^n_m)\right) \\
\end{eqnarray*}
for some $(\hat{\psi }^n_l)^{k-1}_{l=0}\in \mathcal {A}^{n, \mathrm {det}}_k(\varphi - \psi )$. 
Then we have 
\begin{eqnarray*}&&
\psi e^{-g_n(\psi )} + e^{-\tilde{\mu }/n - g_n(\psi )}f^n_k(\varphi - \psi )
\leq 
f^n_{k+1}(\varphi )
\end{eqnarray*}
because of $\psi + \sum ^{k-1}_{l=0}\hat{\psi }^n_l\leq \varphi $ 
(note that $(\psi , \hat{\psi }^n_0, \ldots , \hat{\psi }^n_{k-1})\in \mathcal {A}^{n,\mathrm {det}}_{k+1}(\varphi )$). 
By the above inequality and (\ref {temp_discrete_0}), we get 
$V^n_{k+1}(w, \varphi , s ; u_{\mathrm {RN}}) \leq w + sf^n_{k+1}(\varphi )$. 
The opposite inequality 
is easily obtained 
by the relation $\mathcal {A}^{n, \mathrm {det}}_k(\varphi )\subset \mathcal {A}^{n}_k(\varphi )$. 
Then we have $V^n_{k+1}(w, \varphi , s ; \allowbreak u_{\mathrm {RN}}) = w + sf^n_{k+1}(\varphi )$, and 
the proof is completed. 
\end{proof}

Now we prove Proposition \ref {prop_deterministic}. 
Applying Theorem \ref {converge} and Lemma \ref {deterministic_discrete}, 
we obtain 
\begin{eqnarray*}
V^n_t(w, \varphi , s ; u_\mathrm {RN}) = \lim _{n\rightarrow \infty }
V^n_{[nt]}(w, \varphi , s ; u_\mathrm {RN}) = w + s\lim _{n\rightarrow \infty }f^n_{[nt]}(\varphi ) = 
w + sf(t, \varphi ), 
\end{eqnarray*}
which imply the assertion.  \qed

\subsection{Proof of Theorem \ref {th_LL}}\label{sec_proof_LL}
Assertion (i) is directly obtained by ($9'$)--($12'$) in \cite {Lions-Lasry}. 
Now we prove (\ref {eq_LL}) under the assumptions of assertion (ii). 
We can show the inequality 
$V^\mathrm {SO}_t(w, \varphi , s ; U) \geq  U\left( w + \frac{1 - e^{-\alpha \varphi }}{\alpha }s\right) $ 
by considering strategy (\ref {almost_block}) and letting $\delta \downarrow 0$. 
To see the opposite inequality, it suffices to show that 
$\overline{V}^\varphi _t(\bar{w}, \bar{s}) \leq U(\bar{w})$ 
for any $\bar{w}$ and $\bar{s}$. 
But this is easily obtained because we have the inequality 
\begin{eqnarray*}
\E [U(\bar{W}_t)] \leq  U(\E [\bar{W}_t]) = 
U\left( \bar{w} + \int ^t_0
\E \left[ \frac{1 - e^{-\alpha \bar{\varphi }_r}}{\alpha }\hat{b}(\bar{S}_re^{\alpha \bar{\varphi }_r})\right] dr\right) 
\leq  U(\bar{w}) 
\end{eqnarray*}
for each $(\bar{\varphi }_r)_r\in \overline {\mathcal {A}}_t(\varphi )$; 
This can be proved from the observations that 
$U$ is concave and non-decreasing and that $\hat{b}$ is non-positive. \qed 

\ \\
{\bfseries Acknowledgements}: \ 
The author would like to thank the anonymous referees 
for their valuable comments and suggestions. 
The author also thanks Prof. S. Kusuoka from the Graduate School of Mathematical Sciences, 
The University of Tokyo, 
Prof. J. Sekine from the Graduate School of Engineering Science, Osaka University and 
Prof. K. Ishitani from 
Department of Mathematics, Meijo University 
for their helpful advice and discussions. 

This article is the preprint version of the article 
``An optimal execution problem with market impact'' 
published in {\it Finance and Stochastics}, 
DOI: 10.1007/s00780-014-0232-0. 
The final publication is available at link.springer.com: \\
\url{http://link.springer.com/article/10.1007/s00780-014-0232-0}

\end{document}